\newcommand{\BHM}{\ensuremath{\mathsf{BHM}}}
\newcommand{\HH}{\ensuremath{\mathsf{HH}}}
\newcommand{\YES}{\ensuremath{\mathsf{YES}}}
\newcommand{\Y}{\ensuremath{\mathcal{Y}}}
\newcommand{\N}{\ensuremath{\mathcal{N}}}
\newcommand{\NO}{\ensuremath{\mathsf{NO}}}
\newcommand{\Z}{\ensuremath{\mathbb{Z}}}
\newcommand{\R}{\ensuremath{\mathbb{R}}}
\newcommand{\A}{\ensuremath{\mathcal{A}}}
\newcommand{\Exp}{\mathbb{E}}
\DeclareMathOperator{\poly}{poly}
\newcommand{\ket}[1]{|#1\rangle}
\newtheorem{theorem}{Theorem}
\newtheorem{definition}[theorem]{Definition}
\newtheorem{fact}[theorem]{Fact}
\newtheorem{lemma}[theorem]{Lemma}
\newtheorem{conjecture}[theorem]{Conjecture}
\newtheorem{remark}[theorem]{Remark}
\newtheorem{result}{Result}
\DeclareRobustCommand{\DE}[2]{#2}
\begin{document}

\title{Matrix hypercontractivity, streaming algorithms and LDCs: the large alphabet case}

\author{Srinivasan Arunachalam}

\email{Srinivasan.Arunachalam@ibm.com}
\orcid{0000-0001-6014-6624}
\affiliation{%
  \institution{IBM Quantum, IBM T.J. Watson Research Center}
  \city{Yorktown Heights}
  \state{New York}
  \country{USA}
}

\author{Joao F. Doriguello}
\orcid{0000-0002-8265-7334}
\email{doriguello@renyi.hu}
\affiliation{%
  \institution{HUN-REN Alfr\'ed R\'enyi Institute of Mathematics}
  \city{Budapest}
  \country{Hungary}}
\affiliation{%
  \institution{Centre for Quantum Technologies, National University of Singapore}
  \country{Singapore}}

\authorsaddresses{%
Srinivasan Arunachalam and Joao F. Doriguello contributed equally to this research.\\
JFD was supported by the National Research Foundation, Singapore and A*STAR under the CQT Bridging Grant and the
Quantum Engineering Programme Award number NRF2021-QEP2-02-P05 and by ERC Grant No. 810115-DYNASNET.
Authors’ Contact Information: Srinivasan Arunachalam, IBM, Yorktown Heights, New York, USA; e-mail: srinivasan.
arunachalam@ibm.com; Joao F. Doriguello, Alfréd Rényi Institute of Mathematics, Budapest, Hungary and Centre for
Quantum Technologies, National University of Singapore, Singapore; e-mail: doriguello@renyi.hu.
}


\begin{abstract}
    We prove a hypercontractive inequality for matrix-valued functions defined over large alphabets. {In order to do so}, we prove a generalization of the powerful $2$-uniform convexity inequality for trace norms of Ball, Carlen, Lieb (Inventiones Mathematicae'94). Using our hypercontractive~inequality, we present upper and lower bounds for the communication complexity of the Hidden Hypermatching problem defined over large alphabets. We then consider streaming algorithms for approximating the value of Unique Games on a hypergraph with $t$-size hyperedges. By using our communication lower bound, we show that every streaming algorithm in the adversarial model achieving an $(r-\varepsilon)$-approximation of this value requires $\Omega(n^{1-2/t})$ quantum space, {where $r$ is the alphabet size}. We next present a lower bound for locally decodable codes ($\mathsf{LDC}$) $\mathbb{Z}_r^n\to \mathbb{Z}_r^N$ over large alphabets with recoverability probability at least $1/r + \varepsilon$. Using hypercontractivity, we give an exponential lower bound $N= 2^{\Omega(\varepsilon^4 n/r^4)}$ for $2$-query (possibly non-linear) $\mathsf{LDC}$s over $\mathbb{Z}_r$ and using the non-commutative Khintchine inequality we prove an improved lower bound of $N= 2^{\Omega(\varepsilon^2 n/r^2)}$.
\end{abstract}

\begin{CCSXML}
<ccs2012>
   <concept>
       <concept_id>10003752.10003753.10003760</concept_id>
       <concept_desc>Theory of computation~Streaming models</concept_desc>
       <concept_significance>500</concept_significance>
       </concept>
   <concept>
       <concept_id>10003752.10003753.10003758.10010624</concept_id>
       <concept_desc>Theory of computation~Quantum communication complexity</concept_desc>
       <concept_significance>500</concept_significance>
       </concept>
   <concept>
       <concept_id>10003752.10003777.10003780</concept_id>
       <concept_desc>Theory of computation~Communication complexity</concept_desc>
       <concept_significance>500</concept_significance>
       </concept>
   <concept>
       <concept_id>10003752.10010061.10010067</concept_id>
       <concept_desc>Theory of computation~Error-correcting codes</concept_desc>
       <concept_significance>500</concept_significance>
       </concept>
 </ccs2012>
\end{CCSXML}

\ccsdesc[500]{Theory of computation~Streaming models}
\ccsdesc[500]{Theory of computation~Quantum communication complexity}
\ccsdesc[500]{Theory of computation~Communication complexity}
\ccsdesc[500]{Theory of computation~Error-correcting codes}

\keywords{Streaming Algorithms, Locally Decodable Codes, Quantum Communication Complexity, Hypercontractivity, Max-Cut, Unique Games}

\received{13 June 2023}
\received[revised]{22 May 2024}
\received[accepted]{02 July 2024}

\maketitle

\section{Introduction}
In this paper we prove new results in two areas of theoretical computer science that have received a lot of attention recently: \emph{streaming algorithms} and \emph{locally decodable codes}.

\emph{Streaming algorithms} is a model of computation introduced by Alon, Matias, and Szegedy~\cite{alon1999space} (for which they won the G{\"o}del Prize in 2005) in order to understand the space complexity of approximation algorithms to solve problems. In the last decade, there have been several results in the direction of proving upper and lower bounds for streaming algorithms for combinatorial optimization problems~\cite{verbin2011streaming,goel2012communication,kapralov2014streaming,kapralov20171,kapralov2019optimal,DBLP:conf/approx/GuruswamiT19,DBLP:conf/focs/ChouGV20,chou2021approximabilityB,assadi2021simple,chen2021almost}. 
The goal here is to obtain a $1/\gamma$-approximation (for some $\gamma \leq 1$) of the optimum value of the combinatorial optimization problem with as little \emph{space} as possible.
One favourite problem considered by many works is the well-known \emph{Max-Cut {problem}}, or its generalization over large alphabets $\Z_r$, \emph{Unique Games}. {There is a $2$-approximation algorithm for Max-Cut on $n$ vertices using logarithmic space}: one simply counts the number of edges in the graph (which requires only a counter of size $2\log n$) and outputs half this count. Moreover, one can obtain a graph sparsifier using $O(n/\varepsilon^2)$ space~\cite{ahn2012graph} and, from it, a $(1+\varepsilon)$-approximation for the Max-Cut value. On the other hand, a sequence of works~\cite{kapralov2014streaming,kapralov20171,kapralov2019optimal,chou2022linear} showed that getting a $(2-\varepsilon)$-approximation requires linear space. A similar, but less optimized, {phenomenon} was observed for Unique Games, i.e., there is a threshold behaviour in complexity going from $r$ to $(r-\varepsilon)$-approximation, {where $r$ is the alphabet size}. Curiously, many of these lower bounds were proven via variants of a problem called \emph{Boolean Hidden Matching} $(\BHM$) and it is well known that $\BHM$ can be solved using logarithmic \emph{quantum space}, so a natural question is, could quantum space help {solve} these combinatorial optimization problems? One corollary from~\cite{kapralov2014streaming,shi2012limits} is that obtaining the \emph{strong} $(1+\varepsilon)$-approximation factor for Max-Cut and Unique Games streaming algorithms is quantum-hard. However, understanding the space complexity of streaming in the widely-studied, weaker regime of $(2-\varepsilon)$-approximation (for Max-Cut) or $(r-\varepsilon)$-approximation (for Unique Games over $\Z_r$), it is still unclear whether there could be any savings in the quantum regime.

\emph{Locally decodable codes} ($\mathsf{LDC}$s) are error correcting codes $C:\Sigma^n\rightarrow\Gamma^N$ (for alphabets $\Sigma,\Gamma$) that allow transmission of information over noisy channels. By querying a few locations of a noisy codeword $\widetilde{C}(x)$, one needs to reconstruct an arbitrary coordinate of $x\in \Sigma^n$ with probability at least $1/|\Sigma|+\varepsilon$. The main goal in this field is to understand trade-offs between $N$ and $n$. 
$\mathsf{LDC}$s have found several applications in pseudorandom generators, hardness
amplification, private information retrieval schemes, cryptography, {and} complexity theory ({we} refer to~\cite{yekhanin2012locally,gopi2018locality} for detailed expositions). Despite their ubiquity, $\mathsf{LDC}$s are not well understood, even with the simplest case of \emph{$2$-query $\mathsf{LDC}$s}. 
For the case when $\Sigma=\Gamma=\{0,1\}$, exponential lower bounds of $N=2^{\Omega(n)}$  were established over two decades back~\cite{goldreich2002lower,DBLP:journals/jcss/KerenidisW04,dvir2007locally}. In contrast, a breakthrough result of Dvir and Gopi~\cite{dvir20162} in 2015 showed how to construct $2$-query $\mathsf{LDC}$s with \emph{subexponential} length in the regime when $\Sigma=\{0,1\}$ and $\Gamma$ is a finite field {of size $2^{n^{o(1)}}$}. Despite these results, our knowledge of such $N$ and $n$ trade-offs for $2$-query $\mathsf{LDC}$s is still lacking, specially for the not very well studied case when $\Sigma=\Gamma=\Z_r$.\footnote{{We shall think of an $r$-size alphabet set as the ring $\mathbb{Z}_r$ since its structure will be extensively used in our analyses.}} {What is the relation between $n$ and $N$ in this case?}

Prior works that handled simpler versions of the questions above used one technical tool successfully: \emph{hypercontractivity} for real-valued functions over the Boolean cube. Since we are concerned with proving quantum lower bounds for streaming algorithms and $\mathsf{LDC}$s when the input alphabet is $\Z_r$, it leads us to the following main question: 
    \emph{Is there a version of hypercontractivity for matrix-valued functions over~$\Z_r$?}

\subsection{Our results} 

Summarizing our main contributions, we first prove a version of hypercontractivity for matrix-valued functions $f:\Z_r^n\rightarrow\mathbb{C}^{m\times m}$. The proof of this crucially relies on proving {a generalization of} the powerful $2$-uniform convexity by Ball, Carlen, and Lieb~\cite{ball1994sharp}. Using this new hypercontractivity theorem, we prove a quantum space lower bound for streaming algorithms.  It is easy to see that obtaining a $2$-approximation algorithm for Max-$t$-Cut on $n$ vertices and $t$-sized hyperedges in the classical streaming model can be done in $O(\log n)$ space, and we show that obtaining a $1.99$-approximation algorithm in the adversarial model requires $\Omega(n^{1-2/t})$ quantum space or $\Omega(n^{1-1/t})$ classical space. As far as we are aware, this is the first quantum space lower bound for an optimization problem. {Compared to the prior works of Kapralov, Khanna, and Sudan~\cite{kapralov2014streaming} and of Kapralov and Krachun~\cite{kapralov2019optimal}, who obtained an $\Omega(\sqrt{n})$ and $\Omega(n)$ classical space lower bounds in the \emph{random} and \emph{adversarial} models, respectively, for $(2-\varepsilon)$-approximation algorithms, our lower bounds apply to the adversarial model and are derived through significantly simpler proofs.} We further generalize our results to the case of hypergraphs {with $t$-size hyperedges} and vertices taking values over $\Z_r$. These hypergraphs can naturally be viewed as instances of Unique Games wherein the constraints are over $\Z_r$. Here again, we prove that obtaining an $r$-approximation algorithm requires $O(\log n)$ classical space while a $(r-\varepsilon)$-approximation algorithm requires 
$\Omega(n^{1-2/t})$ quantum space {(hiding the dependence on $r$ and $\varepsilon$)}.

We next show an $N = 2^{\Omega(n/r^2)}$ lower bound for (even non-linear) $\mathsf{LDC}$s over $\Z_r$. In particular, for all $r$ smaller than $\sqrt{n}$, we prove an exponential in $n$ lower bound for $\mathsf{LDC}$s over $\Z_r$. Previous main results in this direction were by Goldreich et al.~\cite{goldreich2002lower} for $r=2$ and linear $\mathsf{LDC}$s, Kerenidis and de Wolf~\cite{DBLP:journals/jcss/KerenidisW04} for $r=2$ and \emph{non-linear} $\mathsf{LDC}$s, Wehner and de Wolf~\cite{wehner2005improved} for non-linear $\mathsf{LDC}$s from $\{0,1\}^n\to\Z_r^N$ and finally by Dvir and Shpilka~\cite{dvir2007locally} for $r>2$ but linear $\mathsf{LDC}$s. Apart from the result of~\cite{dvir2007locally}, we are not aware of any lower bounds for non-linear $\mathsf{LDC}$s  from $\Z_r^n\to\Z_r^N$, even though it is a very natural question with connections to other fundamental problems, such as private information retrieval~\cite{katz2000efficiency,goldreich2002lower}, additive combinatorics~\cite{briet2016outlaw}, and quantum complexity theory~\cite{aaronson2018pdqp}, to cite a few. Moreover, the setting of equal alphabets $\Sigma = \Gamma = \mathbb{Z}_r$ is particularly relevant and natural in the context of polynomial identity testing and $\Sigma\Pi\Sigma$ circuits, which has been previously considered by Dvir and Shpilka~\cite{dvir2007locally} for linear $\mathsf{LDC}$s. Furthermore, we are not aware of a formal reduction between $\mathsf{LDC}$s with $\Sigma=\{0,1\}$ and $\Sigma=\Z_r$, specially with recovery probability $1/|\Sigma| + \varepsilon$.
Finally, some past works define $\mathsf{LDC}$s over general $\Sigma$ with success probability $\geq \operatorname{Pr}[\text{wrong output}] + \varepsilon$~\cite{gopi2018locality}, $\geq 1/2 + \varepsilon$~\cite{goldreich2002lower} or $\geq 1-\varepsilon$~\cite{dvir2011matrix}. These alternative definitions are encompassed by ours by considering $\varepsilon$ a constant large enough. In the remaining part of the introduction, we describe these contributions in more detail.

\subsection{Matrix hypercontractive inequality (over large alphabets)}
\paragraph{Fourier analysis on the Boolean cube.} We first discuss the basics of Fourier analysis before stating our result. Let $f:\{0,1\}^n\rightarrow \R$ be a function, then the Fourier decomposition of $f$ is
$$
f(x)=\sum_{S\in\{0,1\}^n} \widehat{f}(S)(-1)^{S\cdot x},
$$
where $S\cdot x = \sum_{i=1}^n S_ix_i\in\{0,1\}$ and the \emph{Fourier coefficients} $\widehat{f}(S)$ of $f$ are $\widehat{f}(S)=\Exp_x[f(x) (-1)^{S\cdot x}]$, the expectation taken over uniformly random $x\in\{0,1\}^n$. One of the technical tools in the area of theoretical computer science is the hypercontractivity theorem proven by Bonami and Beckner~\cite{bonami1970etude,beckner1975inequalities}. In order to understand the hypercontractivity theorem, we first need to define the noise operator: for a noise parameter $\rho \in [-1,1]$, let $\operatorname{T}_\rho$ be {the} operator on the space of functions $f:\{0,1\}^n\rightarrow \R$ defined as 
$$
    ({\operatorname{T}}_\rho f)(x)=\operatorname*{\Exp}_{y\sim \mathcal{N}_\rho(x)}[f(y)],
$$
where $y\sim \mathcal{N}_\rho(x)$ denotes that the random string $y\in\{0,1\}^n$ is drawn as $y_i=x_i$ with probability $\frac{1}{2} + \frac{1}{2}\rho$ and as $y_i=x_i\oplus 1$ with probability $\frac{1}{2} - \frac{1}{2}\rho$ for each $i\in[n]$ independently. One can show that the Fourier expansion of ${\operatorname{T}}_\rho f$ can be written as {(see, e.g.,~\cite[Proposition 2.47]{o2014analysis})}
$$
({\operatorname{T}}_\rho f)(x)=\sum_{S\in\{0,1\}^n}\rho^{|S|}\widehat{f}(S)(-1)^{S\cdot x}.
$$
One way to intuitively view this expression is that ``large-weight'' Fourier coefficients are reduced by an exponential factor while ``small-weight'' Fourier coefficients remain approximately the same. {Another property of the noise operator is that it is a contraction, i.e.,} $\|{\operatorname{T}}_\rho f\|_p\leq \|f\|_p$ for every $p\geq 1$, where $\|f\|_p\triangleq\big(\Exp_x[|f(x)|^p]\big)^{1/p}$ is the standard normalized $p$-norm of the function $f$. The main hypercontractivity theorem states that this inequality holds true even if we increase the left-hand size by a larger norm (meaning that norms under the noise operator are not just contractive, but \emph{hypercontractive}), i.e., for every $p\in [1,2]$ and $\rho \leq \sqrt{p-1}$, we have that $\|{\operatorname{T}}_\rho f\|_2\leq \|f\|_p$,\footnote{The hypercontractivity theorem can be stated for arbitrary $1\leq p \leq q$ and $\rho \leq \sqrt{(p-1)/(q-1)}$, here we state it for $q=2$ since we will be concerned with this setting.} which can alternatively be written as
\begin{align}
\label{eq:basichypercontractivity}
    \left(\sum_{S\in \{0,1\}^n} \rho^{2|S|} \widehat{f}(S)^2\right)^{1/2} \leq \left(\frac{1}{2^n}\sum_{x\in\{0,1\}^n} |f(x)|^p\right)^{1/p}.
\end{align}
This inequality has found several applications in approximation theory~\cite{khot2007optimal,dinur2005hardness}, expander graphs~\cite[Section~11.5]{hoory2006expander}, circuit complexity~\cite{linial1993constant}, coding theory~\cite{carlen1993optimal}, quantum computing~\cite{gavinsky2007exponential,DBLP:journals/qic/Montanaro11} (for more applications we refer the reader to~\cite{de2008brief,o2014analysis,montanaro2012some}). All these applications deal with understanding the effect of noise on real-valued functions on the Boolean cube.

\noindent \textbf{Generalizations of hypercontractivity.} There are two natural generalizations of hypercontractivity: $(i)$ a hypercontractivity statement for arbitrary product probability spaces. In this direction, it is possible to prove a similar hypercontractive inequality: for every $p\in [1,2]$ and $f\in L^2(\Omega_1\times\cdots\times\Omega_n, \pi_1\otimes\cdots\otimes\pi_n)$, we have
\begin{align}
\label{eq:genhyperFr}
    \|{\operatorname{T}}_\rho f\|_2 \leq \|f\|_p \hspace{1mm}\text{ for  }\hspace{1mm} \rho \leq \sqrt{p-1}\cdot\lambda^{1/p-1/2},
\end{align} 
where $\lambda$ is the smallest probability in any of the finite probability spaces $(\Omega_i,\pi_i)$ (see~\cite[Chapter~10]{o2014analysis}). As a corollary, one gets a hypercontractive inequality for $f:\Z_r^n\to\mathbb{R}$; $(ii)$ a hypercontractivity statement for matrix-valued functions $f:\{0,1\}^n\rightarrow \mathbb{C}^{m\times m}$, where the Fourier coefficients $\widehat{f}(S)=\Exp_x[f(x)(-1)^{S\cdot x}]$ are now $m\times m$ complex matrices. This was considered by Ben-Aroya, Regev, and de Wolf~\cite{ben2008hypercontractive}, who proved a hypercontractivity statement by using the powerful inequality of Ball, Carlen, and Lieb~\cite{ball1994sharp}.

However, is there a generalization of hypercontractivity in both directions, i.e., a matrix-valued hypercontractivity for functions over $\Z_r$? This is  open as far as we are aware and is our first main technical result.
\begin{result}[\cref{thm:genmathypercontractivity}]
\label{result:hyper}
    For any $f:\Z_r^n\rightarrow \mathbb{C}^{m\times m}$, $p\in[1,2]$, and 
    $\rho\leq \sqrt{\frac{(p-1)(1-(p-1)^{r-1})}{(r-1)(2-p)}}$,
    \begin{align}
    \label{eq:resulthypereq}
        \left(\sum_{S\in \Z_r^n} \rho^{2|S|} \|\widehat{f}(S)\|_p^2\right)^{1/2} \leq \left(\frac{1}{r^n}\sum_{x\in\Z_r^n}\|f(x)\|_p^p\right)^{1/p},
    \end{align}
    where $\|M\|_p \triangleq\big(\sum_i\sigma_i(M)^p\big)^{1/p}$ is the Schatten $p$-norm defined from the singular values $\{\sigma_i(M)\}_i$ of the matrix $M$ and $|S|\triangleq|\{i\in[n]: S_i\neq 0\}|$ is the Hamming weight of $S\in\Z_r^n$.
\end{result}
The above result can be seen as an analogue of \cref{eq:basichypercontractivity} where the absolute values are replaced with Schatten norms. We now make a couple of remarks. First, when $m=1$ our result compares to the one in \cref{eq:genhyperFr} for $f:\Z_r^n\to\mathbb{R}$, but with a slightly worse $\rho$ parameter compared to the $(1/r)^{1/p-1/2}$ factor.  Second, for $r=2$ we recover the same inequality from~\cite{ben2008hypercontractive}. To this end, as in the proof of hypercontractive inequalities~\cite{o2014analysis,ben2008hypercontractive}, our result follows by induction on $n$. It so happens that the base case is the most non-trivial step in the proof. So for now, let us assume $n=1$, i.e., our goal is to prove \cref{eq:resulthypereq} for $n=1$.  We now consider  two special \emph{simple} cases of the~inequality.

\emph{(i)} $r=2$ and $\mathbb{C}^{m\times m}$ is replaced with real numbers: in this case, this is the well-known two-point inequality by Gross~\cite{gross1975logarithmic} used for understanding the Logarithmic Sobolev inequalities. A proof of this inequality can also be easily viewed from a geometric perspective. As far as we are aware, there is no generalized $r$-point inequality for $r>2$.
    
    \emph{(ii)} $r=2$ and $\mathbb{C}^{m\times m}$ are arbitrary matrices: in this case, we only need to deal with two matrices $f(0),f(1)$ and \cref{eq:resulthypereq} is exactly a powerful inequality in functional analysis, called the \emph{$2$-uniform convexity} of trace norms,\footnote{{The $2$ in $2$-uniform convexity comes from a lower bound on the modulus of convexity of the normed space $L_p$ with $p\in(1,2]$ (see~\cite{ball1994sharp} for more details) and not from our parameter $r$ in \cref{result:hyper} being equal to $2$.}}
    \begin{align}\label{eq:2_uniform_convexity}
        \left(\frac{\|X+Y\|_p^p + \|X-Y\|_p^p}{2}\right)^{2/p} \geq \|X\|_p^2 + (p-1)\|Y\|_p^2.
    \end{align}
    This inequality was first proven for certain values of $p$ by Tomczak-Jaegermann~\cite{tomczak1974moduli} before being extended for all $p\in[1,2]$ by Ball, Carlen, and Lieb~\cite{ball1994sharp} in 1994. Since then it has found several applications, e.g., an optimal hypercontractivity inequality for Fermi fields~\cite{carlen1993optimal}, regularized convex optimization~\cite{duchi2010composite}, and metric embedding~\cite{lee2004embedding,naor2016spectral}. $2$-uniform convexity can also be used to prove a variety of other inequalities, e.g., Khintchine's inequality~\cite{tomczak1974moduli,davis1984complex} and Hoeffding and Bennett-style bounds~\cite{pinelis1994optimum,howard2020time}. Moreover, the above result could be seen as a corollary of Hanner's inequality for matrices~\cite{ball1994sharp} 
    \begin{align*}
        \|X + Y\|_p^p + \|X - Y\|_p^p \geq |\|X\|_p + \|Y\|_p|^p + |\|X\|_p - \|Y\|_p|^p
    \end{align*}
    (originally proven for Lebesgue spaces $L_p$~\cite{hanner1956uniform}), but, unfortunately, Hanner's inequality for Schatten trace ideals is only proven for $1\leq p\leq 4/3$ and $p\geq 4$~\cite{ball1994sharp} (see~\cite{chayes2020matrix} for recent progress on Hanner's inequality for matrices). As far as we are aware, a generalization of \cref{eq:2_uniform_convexity} involving $r$ matrices was unknown.

One contribution in our work is the following generalization of a  result from Ball, Carlen, and Lieb~\cite{ball1994sharp} (note it also implies a generalization of the two-point inequality), which we believe may be of independent interest.
\begin{result}[\cref{eq:conjectureballgen}]
\label{res:bcl}
    Let $r\in\mathbb{Z}$, $r\geq 2$, $\omega_r \triangleq e^{2i\pi/r}$, $A_0,\ldots,A_{r-1}\in \mathbb{C}^{n\times n}$, and $p\in [1,2]$, then
    \begin{align*}
		\left(\frac{1}{r} \sum_{k=0}^{r-1}\left\|\sum_{j=0}^{r-1} \omega_r^{jk} A_j\right\|_p^p\right)^{2/p} \geq\left\|A_0 \right\|_p^2 + \frac{(p-1)(1-(p-1)^{r-1})}{(r-1)(2-p)}\sum_{k=1}^{r-1} \left\| A_k\right\|_p^2.
	\end{align*}
\end{result}
{Once \cref{res:bcl} is established}, the proof of \cref{result:hyper} is a simple induction argument on $n$: for the base case $n=1$, \cref{result:hyper} is exactly \cref{res:bcl}, and proving the induction step requires an application of Minkowski's inequality. Since this proof is similar to the one in~\cite{ben2008hypercontractive}, we omit the details here.

\subsection{Application: communication complexity and streaming algorithms}

\subsubsection{Communication complexity: Hidden Matching and its variants}


The Boolean Hidden Matching ($\BHM$) problem was introduced  by Bar-Yossef et al.~\cite{bar2004exponential} (which was in turn inspired by Kerenidis and de Wolf~\cite{DBLP:journals/jcss/KerenidisW04} for proving $\mathsf{LDC}$ lower bounds) in order to prove exponential separations between quantum and classical one-way communication complexities. {We describe below} the generalized Hidden Matching problem over larger alphabets and hypermatching. {Here a hypermatching is a set of hyperedges without common vertices}. 
The $r$-ary Hidden Hypermatching ($r\text{-}\HH(\alpha,t,n)$) problem is a two-party communication problem between Alice and Bob: Alice is given $x\in \Z_r^n$ and Bob is given a string $w\in \Z_r^{\alpha n/t}$ and $\alpha n/t$-many disjoint $t$-tuples from $[n]$ (where $\alpha\in(0,1]$). {The disjoint $t$-tuples from $[n]$ can be viewed as hyperedges of size $t$, which, in turn,} can also be viewed as an {incidence} matrix $M\in \{0,1\}^{\alpha n/t\times n}$ (each row corresponding to a hyperedge). In the $\YES$ instance it is promised that $w=Mx$ (over $\Z_r$), while in the $\NO$ instance it is promised that $w$ is uniformly random, and the goal is to decide which is the case using a message sent from Alice to Bob.

There have been a few lines of work {on} understanding the problem of Hidden Hypermatching: $(i)$ the seminal work of Bar-Yossef et al.~\cite{bar2004exponential} and Gavinsky et al.~\cite{gavinsky2007exponential} showed that, for $r=t=2$, $\BHM$ can be solved using $O(\log n)$ qubits but requires $\Omega(\sqrt{n})$ classical bits of communication; $(ii)$ Verbin and Yu~\cite{verbin2011streaming} considered the problem where $r=2$ and $t\geq 2$ (which in fact inspired many follow-up works on using hypermatching for classical streaming lower bounds) and showed a classical lowed bound of $\Omega(n^{1-1/t})$, which was subsequently generalized to a $\Omega(n^{1-2/t})$ quantum lower bound by Shi, Wu, and Yu~\cite{shi2012limits}; $(iii)$ Guruswami and Tao~\cite{DBLP:conf/approx/GuruswamiT19} studied the problem for when $t=2$ and $r\geq 2$, proving a classical $\Omega(\sqrt{n})$ lower bound; $(iv)$ apart from these, there have been more exotic generalizations of the Hidden Hypermatching problem, e.g., Kapralov, Khanna, and Sudan~\cite{kapralov2014streaming} proposed the Boolean Hidden Partition, where Bob does not receive a matching anymore {(a graph made up of pair-wise disjoint edges)}, but the edges of any graph $G$, and Doriguello and Montanaro~\cite{doriguello2020exponential} expanded the $2\text{-}\HH(\alpha,t,n)$ problem to computing a fixed Boolean function on the hyperedges of Bob's hypermatching instead of the Parity function. Here we shall consider the standard Hidden Hypermatching problem over a larger alphabet and give both upper and lower bounds for its quantum and classical communication complexities for every $t,r\geq 2$.

\paragraph{Upper bounds on Hidden Hypermatching.} For a given $t\geq2$, the same classical communication protocol for $r=2$ can be used for general $r>2$. The idea is that Alice picks $O((n/\alpha)^{1-1/t})$ entries of $x$ uniformly at random to send to Bob. By the Birthday Paradox, with high probability Bob will obtain all the values from one of his hyperedges $i$, and thus can compare $(Mx)_i$ with the corresponding $w_i$. If they are equal, he outputs $\YES$, otherwise he outputs $\NO$, which leads to an one-side error of $O(1/r)$. The total amount of communication is $O(\log{(rn)}(n/\alpha)^{1-1/t})$ bits.\footnote{ One can  further improve this complexity to $O(\log{(n\log{r})} + (\log r)\cdot (n/\alpha)^{1-1/t})$ by Newman's theorem~\cite{newman1991private}.} The situation is more interesting in the quantum setting. For $t=2$, we prove that Hidden Hypermatching can be solved using only a logarithmic amount of qubits for every $r=\poly(n)$.
\begin{result}[\cref{thr:rary-upper}]
\label{res:upperBHM}
    There is a one-way protocol for $r\text{-}\HH(\alpha,2,n)$ with one-sided error $1/3$ using $O(\log{(nr)}/\alpha)$~qubits.
\end{result}
The above bound uses a non-trivial procedure that allows {one} to learn the sum of two numbers modulo $r$ by using just one ``query'' and crucially uses the knowledge of the string $w$: given a suitable superposition of two numbers, one can obtain their sum with one-sided error by using one measurement. As far as we are aware, such a statement was not known prior to our work. However, the knowledge of $w$ is vital, which means that the protocol does not work for more general settings where there is no promise on the inputs (e.g., a relational version of the $r$-ary Hidden Hypermatching problem where Bob must output one hyperedge $i$ and its corresponding value $(Mx)_i$), and it also cannot be used as a building block for the general case $t,r>2$. The current upper bound on the quantum communication complexity of the $r\text{-}\HH(\alpha,t,n)$ problem with $t,r> 2$ thus matches the classical one. In view of the lower bounds stated below {and the known upper bound $O(n^{1-1/\lceil t/2\rceil}\log{n})$ for $r=2$~\cite{DBLP:journals/jcss/KerenidisW04} (note that $\lceil t/2\rceil$ is the number of quantum queries required to learn the parity of $t$ bits~\cite{farhi1998limit,Beals2001quantum})}, we make the following conjecture. 
\begin{conjecture}
    If $t,r>2$, there is a protocol for $r\text{-}\HH(\alpha,t,n)$ using $O(\log{(rn)}(n/\alpha)^{1-1/\lceil t/2\rceil})$~qubits.
\end{conjecture}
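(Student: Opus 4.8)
\emph{A natural route towards the conjecture, and where it stalls.} The plan is to lift the quantum idea behind Result~\ref{res:upperBHM} to general $t$ by an \emph{arity-halving} reduction. Alice sends $c=O\big((n/\alpha)^{1-1/\ceil{t/2}}\big)$ (with the hidden constant depending on $t$) independent copies of the $O(\log(nr))$-qubit ``index--value'' state $\ket{\psi}:=\frac{1}{\sqrt n}\sum_{j\in[n]}\ket{j}\ket{x_j \bmod r}$. Bob, holding an $\alpha$-partial arity-$t$ hypermatching, partitions each hyperedge into $\ceil{t/2}$ disjoint \emph{blocks} of size $\le 2$ (all of size $2$ if $t$ is even; one leftover singleton per hyperedge if $t$ is odd); the union of all blocks is a partial pairing $M'$ on $[n]$. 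On each copy Bob performs the projective measurement $\{\Pi_B\}_{B\in M'}\cup\{\id-\sum_B\Pi_B\}$, where $\Pi_{\{a,b\}}=\ketbra{a}{a}+\ketbra{b}{b}$ acts on the index register only; with probability $\Theta(\alpha)$ he lands in a block $B\in M'$ (each block being hit with probability $\Theta(1/n)$), in which case the post-measurement state is $\frac{1}{\sqrt2}(\ket{a}\ket{x_a}+\ket{b}\ket{x_b})$ for a pair block, or $\ket{a}\ket{x_a}$ for a singleton. Once, across his $c$ copies, Bob has hit \emph{all} $\ceil{t/2}$ blocks of some hyperedge $e$ and has recovered $\sum_{j\in B}x_j \bmod r$ for each of them, he adds these block-sums to form $(Mx)_e$ and outputs $\YES$ iff it equals $w_e$. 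Since there are $\Theta(\alpha n/t)$ hyperedges and each contributes $\ceil{t/2}$ blocks to $M'$, a standard birthday/coupon-collector estimate shows that $c=O\big((n/\alpha)^{1-1/\ceil{t/2}}\big)$ copies suffice for this event to occur with constant probability, which gives the claimed qubit count.

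For correctness: in the $\YES$ instance $w=Mx$, so every block-sum is right and Bob always accepts (one-sided error); in the $\NO$ instance $w$ is uniform, so for each hyperedge Bob checks, $w_e$ matches the correctly computed $(Mx)_e$ with probability $1/r$, and after increasing $c$ by a constant factor Bob checks $\Omega(1)$ hyperedges in expectation and thus rejects with constant probability. Singleton blocks are free (measure the value register), so the whole protocol reduces to one subroutine: from a single copy of $\frac{1}{\sqrt2}(\ket{a}\ket{x_a}+\ket{b}\ket{x_b})$, output $x_a+x_b\bmod r$. For $t=2$ one does not need the value, only the comparison ``$x_a+x_b\stackrel{?}{=}w_{\{a,b\}}$'', which is exactly the one-query test of Result~\ref{res:upperBHM}: applying the $w_{\{a,b\}}$-dependent unitary $\ket{b}\ket{c}\mapsto\ket{b}\ket{w_{\{a,b\}}-c}$ turns the state into a product state that passes a fixed measurement in the $\YES$ case and into a state that fails it with constant probability in the $\NO$ case.

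The obstacle — and the reason this is only a conjecture — is that for $t>2$ Bob knows only the aggregate target $w_e=\sum_B w_B$, not the individual block targets $w_B$, so the $w$-dependent trick is unavailable and he genuinely needs the \emph{value} $x_a+x_b\bmod r$ of each pair block. For $r\ge 3$ this cannot be extracted deterministically from one copy: for fixed $\{a,b\}$ the $r$ candidate subspaces $\spann\{\tfrac1{\sqrt2}(\ket{a}\ket{u}+\ket{b}\ket{v}):u+v\equiv g\pmod r\}$, $g\in\Z_r$, have dimensions summing to $r^2>2r$ inside a $2r$-dimensional space, hence are not pairwise orthogonal and no measurement identifies $g$ with certainty; one can only guess $x_a+x_b$ with bias $\Theta(1/r)$. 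Amplifying a single block-sum to near-certainty costs $\mathrm{poly}(r)$ copies per block and breaks past the target $O(\log(nr)(n/\alpha)^{1-1/\ceil{t/2}})$, while composing $\ceil{t/2}$ merely biased block-sums makes the $\YES/\NO$ distinguishing advantage decay geometrically in $\ceil{t/2}$ and get swamped by the $O(1)$ hyperedges Bob can afford to check. Closing the gap would seem to need a block-sum gadget that exploits the global promise $w=Mx$ yet still composes across blocks, a relational/list-decoding relaxation where $O(1)$ candidate block-sums suffice, or an entirely different (e.g.\ fully phase-based) protocol; note the $r=2$ case escapes precisely because the phase encoding $\tfrac1{\sqrt2}((-1)^{x_a}\ket{a}+(-1)^{x_b}\ket{b})$ reveals $x_a\oplus x_b$ exactly, a feature with no $\Z_r$ analogue for $r\ge 3$.
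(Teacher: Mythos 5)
The statement you were asked about is not a theorem of the paper: it is stated there as a \emph{conjecture}, with no proof given. The paper's only supporting material is the $t=2$ protocol (Result~\ref{res:upperBHM}, Theorem~\ref{thr:rary-upper}) together with the explicit remark that this protocol's crucial use of the string $w$ prevents it from serving as a building block for the general case $t,r>2$, which is precisely why the authors leave the general bound $O(\log(rn)(n/\alpha)^{1-1/\lceil t/2\rceil})$ as a conjecture. Your writeup is likewise not a proof, and to your credit you say so: what you give is the natural arity-halving strategy (pair up the vertices within each hyperedge, recover block sums $x_a+x_b \bmod r$, and combine $\lceil t/2\rceil$ of them per hyperedge) plus an explanation of where it breaks. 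That diagnosis is essentially the paper's own: for $t=2$ Bob can use the candidate $w_e$ to run a $w$-dependent one-sided equality test, whereas for $t>2$ he only knows the aggregate target and would need the actual value of each block sum, which for $r\ge 3$ cannot be extracted with certainty from a single copy (your dimension-counting argument that the $r$ candidate $r$-dimensional subspaces cannot be pairwise orthogonal inside a $2r$-dimensional space is a correct way to see this).

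So there is no discrepancy with the paper, but there is also nothing to certify: neither the paper nor your proposal establishes the conjectured upper bound, and your text must not be presented as a proof of it. Two smaller points if you keep this as a discussion: the quantitative claim that one can guess $x_a+x_b$ ``with bias $\Theta(1/r)$'' from one copy is asserted without justification (the paper's own analysis in Theorem~\ref{thr:rary-upper} only gives a one-sided test against a \emph{known} candidate, with wrong-candidate acceptance probability about $1/2+1/(2r^2)$, which is a different statement), and your coupon-collector count of copies should be stated more carefully, since Bob needs all $\lceil t/2\rceil$ blocks of the \emph{same} hyperedge, not just $\lceil t/2\rceil$ arbitrary block hits; the birthday-style exponent $1-1/\lceil t/2\rceil$ is plausible and is what motivates the conjectured bound, but as written it is a heuristic, not an analysis.
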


\paragraph{Lower bounds on Hidden Hypermatching.} The standard approach for proving a lower bound on the amount of communication required to solve the Hidden Hypermatching problem is via Fourier analysis. In the classical proofs of Gavinsky et al.~\cite{gavinsky2007exponential}, Verbin and Yu~\cite{verbin2011streaming}, and Guruswami and Tao~\cite{DBLP:conf/approx/GuruswamiT19}, the total variation distance between the probability distributions arising from the $\YES$ and $\NO$ instances is bounded using the inequality of Kahn, Kalai, and Linial~\cite{kahn1989influence} (which can be seen as a corollary of the hypercontractivity inequality). On the other hand, Shi, Wu, and Yu~\cite{shi2012limits} obtained a quantum lower bound by bounding the Schatten 1-norm between the possible density matrices received by Bob in both $\YES$ and $\NO$ instances via the matrix-valued hypercontractivity from Ben-Aroya, Regev, and de Wolf~\cite{ben2008hypercontractive}. We follow a similar approach by using our \emph{generalized matrix-valued hypercontractive} inequality from \cref{result:hyper} in order to obtain the following lower bound (for $r=2$ our lower bound is exponentially better in $\alpha$ than~\cite{shi2012limits}).
\begin{result}[\cref{thm:hypermatchinglowerbound}, \cref{thr:classical_communication_lower_bound}]
    \label{res:lowerHHM}
    Every constant-bias protocol for $r\text{-}\HH(\alpha,t,n)$ with $t,r\geq 2$ requires $\Omega((n/t)^{1-2/t}/\alpha^{2/t})$ qubits of communication or $\Omega((n/t)^{1-1/t}/\alpha^{1/t})$ bits of communication.
\end{result}
The phase transition from $t=2$ to $t>2$ in the quantum communication complexity of $r$-$\HH(\alpha,t,n)$ is reminiscent of the well-known fact that parity of $n$ bits can be computed exactly using $\lceil n/2\rceil$ quantum queries~\cite{farhi1998limit,Beals2001quantum}: a function requiring more than one query within the same hyperedge will lead to a large communication complexity since, by design of the Hidden Matching problem, it is hard to extract information from the same hyperedge more than once.

\subsubsection{Streaming lower bounds}
One-way communication complexity lower bounds has been used by several recent works to prove streaming lower bounds~\cite{verbin2011streaming,kapralov2014streaming,guruswami2017streaming,DBLP:conf/approx/GuruswamiT19,DBLP:conf/focs/ChouGV20}. To see this, suppose a problem $P$ has inputs $(X,Y)$ and the goal is to find space-efficient streaming algorithms to compute $P(X,Y)$, when $X,Y$ are presented in a stream (i.e., presented bit-by-bit). Then, one way to lower bound the \emph{space complexity} is to prove lower bounds on the following problem: consider the one-way communication problem where Alice gets the input $X$, Bob gets the input $Y$, their goal is to compute $P(X,Y)$ and only Alice is allowed to communicate to Bob. Then one can show that any lower bound for randomized one-way communication implies an equivalent lower bound for streaming algorithms. This technique has been used by a sequence of papers to prove lower bounds on space complexity of Max-Cut~\cite{kapralov2014streaming,kapralov2019optimal,DBLP:conf/approx/GuruswamiT19}, matching~\cite{goel2012communication}, Max-CSP~\cite{guruswami2017streaming,DBLP:conf/focs/ChouGV20,chou2021approximabilityB} and counting cycles~\cite{verbin2011streaming,assadi2021simple}. One problem that is used often in this direction is the aforementioned Boolean Hidden Matching and related variants. Using our classical and quantum communication lower bounds (\cref{res:lowerHHM}) we present two lower bounds for streaming problems on generalizations to Max-Cut.

There are a few natural generalizations to Max-Cut. One is Max-$t$-Cut, i.e., finding the maximum cut value on a hypergraph with $t$-sized hyperedges, to which the $\Omega(n)$ classical lower bound from~\cite{kapralov2019optimal} already applies. Another is the Unique Games problem, a constraint satisfaction problem defined on a graph, where a linear constraint (a permutation) over $\Z_r$ is specified on each edge and the goal is to find a vertex assignment over $\Z_r$ that maximizes the number of satisfied constraints. When $r=2$, Unique Games reduces to Max-Cut. Guruswami and Tao~\cite{DBLP:conf/approx/GuruswamiT19} studied the streaming complexity of the Unique Games problem and proved a lower bound of $\Omega(\sqrt{n})$ in the adversarial model by using a reduction {from} Hidden Matching over $\Z_r$. This bound was improved to $\Omega(n)$ by Chou et al.~\cite{chou2022linear} for a larger set of problems including Unique Games.

Here we join both directions, i.e., Max-$t$-Cut and the standard Unique Games problem, into a generalized version of Unique Games defined on a hypergraph and obtain a streaming \emph{quantum} lower bound in the adversarial model for any value $t,r\geq 2$.\footnote{We note that Kapralov et al.~\cite{kapralov2014streaming} proved a classical lower bound $\Omega(n^{1-\varepsilon})$ for $(1+\varepsilon)$-approximations of Max-Cut in the adversarial model and their proof, together with \cref{res:lowerHHM}, readily implies a similar quantum lower bound.} 
\begin{result}[\cref{thm:streaminguglowerbound}]
    Every streaming algorithm giving a $(r-\varepsilon)$-approximation for Unique Games on $t$-hyperedge $n$-vertex hypergraphs over $\Z_r$ uses $\Omega(n^{1-2/t})$ quantum space.
\end{result}
The above result follows from generalizing the proof of Guruswami and Tao~\cite{DBLP:conf/approx/GuruswamiT19}. 
As far as we are aware, these are the first quantum lower bounds for Unique Games in the streaming model.

\subsection{Locally decodable codes}

A locally decodable  code ($\mathsf{LDC}$) is an error correcting code that allows to retrieve a single bit of the original message (with high probability) by only examining a few bits in a corrupted codeword. More formally, a $(q,\varepsilon,\delta)$-$\mathsf{LDC}$ was defined by Katz and Trevisan~\cite{katz2000efficiency} as a function $C:\Z_r^n\rightarrow \Z_r^N$ that satisfies the following: for all $x\in \Z_r^n$, $i\in [n]$ and $y\in\Z_r^N$ that satisfies $d(C(x),y)\leq \delta$ (i.e., a $\delta$-fraction of the elements of $C(x)$ are corrupted), there exists an algorithm $\A$ that makes $q$ queries to $y$ non-adaptively and outputs $\mathcal{A}^{y}(i)\in \Z_r$ such that $\Pr[\mathcal{A}^{y}(i)=x_i]\geq 1/r+\varepsilon$ (where the probability is over the randomness of $\A$). Over $\{0,1\}$, $\mathsf{LDC}$s have found several applications in private information retrieval~\cite{chor1995private}, multiparty computation~\cite{ishai2004hardness}, data structures~\cite{chen2009efficient}, and average-case complexity theory~\cite{trevisan2004some}.

The natural question in constructing $\mathsf{LDC}$s is the trade-off between $N$ and $n$. A well-known $2$-query $\mathsf{LDC}$ is the Hadamard encoding that maps $x\in \Z_r^n$ into the string $C(x)=(\langle x,y\rangle)_{y\in \{0,1\}^n}$: on input $i\in [n]$, a decoding algorithm queries $C(x)$ at a uniformly random $y$ and $y\oplus e_i$ and retrieves $C(x)=\langle x,y\oplus e_i\rangle-\langle x,y\rangle$, where $e_i = 0^{i-1}10^{n-i}$. Here the encoding length is $N=2^n$, and an important question is, are there $2$-query $\mathsf{LDC}$s with $N\ll 2^n$? For the case $r=2$, Goldreich et al.~\cite{goldreich2002lower} showed a lower bound $N= 2^{\Omega(n)}$ for \emph{linear codes}, which was later improved by Obata~\cite{obata2002optimal}. Later, Kerenidis and de Wolf~\cite{DBLP:journals/jcss/KerenidisW04} proved an exponential lower bound for \emph{non-linear codes} using a quantum argument!\footnote{For simplicity in exposition, we omit the dependence on $\delta,\varepsilon$ in these lower bounds.} 

This left open the setting where $r>2$. Following these works, {for $2$-query \emph{non-linear} $\mathsf{LDC}$s $C:\{0,1\}^n\to\Z_r^N$ (note the inputs are over $\{0,1\}$ and not $\Z_r$), Wehner and de Wolf~\cite{wehner2005improved} proved the lower bound $N=2^{\Omega(n/r^2)}$.} On the other hand, Dvir and Shpilka~\cite{dvir2007locally} showed a lower bound of $N= 2^{\Omega(n)}$ for every $2$-query \emph{linear} $\mathsf{LDC}$ $C:\Z_r^n\to\Z_r^N$, even independent of the field size. To prove their result, they crucially observed that, given a linear $\mathsf{LDC}$ over $\Z_r$, one can construct a linear $\mathsf{LDC}$ over $\{0,1\}$ (with almost the same parameters) and then invoked the result of Goldreich et al.~\cite{goldreich2002lower}. This reduction, however, fails for non-linear codes and {motivates the question of whether} there are
\emph{non-linear} $\mathsf{LDC}$s $C:\Z_r^n\to\Z_r^N$ with $N\ll 2^n$.

The main contribution here is a lower bound for \emph{non-linear} $\mathsf{LDC}$s over $\Z_r$ that scale as $2^{\Omega(n/r^2)}$, and which gives a super-polynomial lower bound for $r=o(n^{1/2})$. Our lower bound comes from using the non-commutative Khintchine's inequality~\cite{tropp2015introduction}.\footnote{We thank Jop Bri\"{et} for helping us prove this theorem.}

%
\begin{result}[\cref{thm:2querylowerboundhyper}]
\label{res:ldchyper}
	If $C:\Z_r^n\to\Z_r^N$ is a $(2,\delta,\varepsilon)$-$\mathsf{LDC}$, then $N=2^{\Omega(\delta^2\varepsilon^2 n/r^2)}$.
\end{result}
We briefly mention that, if one requires the success probability to be larger than, for example, $1/2+\varepsilon$ instead of $1/r+\varepsilon$, so that plurality vote can be used and the success probability amplified, then $\varepsilon$ becomes a constant bounded away from $1/r$ (if $r>2$) and our lower bound is no longer dependent on $\varepsilon$. {Moreover, we stress that the success probability is bounded by $1/r + \varepsilon$, meaning that the lower bound of Wehner and de Wolf~\cite{wehner2005improved} is not directly applicable by restricting the messages to binary strings}. Finally, we remark that in an earlier version of the paper, we proved a similar lower bound using our matrix-valued hypercontractivity via the proof technique in~{\rm \cite[Theorem~11]{ben2008hypercontractive}}. This leads, however, to a weaker bound, $N= 2^{\Omega(\delta^2 \varepsilon^4 n/r^4)}$. After our preprint appeared online, we were made aware of the non-commutative Khintchine's inequality which leads us to the improved lower bound of $2^{\Omega(\delta^2\varepsilon^2n/r^2)}$.

\paragraph{Further applications (Private information retrieval).} Katz and Trevisan~\cite{katz2000efficiency} and Goldreich et al.~\cite{goldreich2002lower} established a nice connection between $\mathsf{LDC}$s and private information retrieval (\textsf{PIR}) protocols. We do not define these $\textsf{PIR}$ schemes here and refer the reader to \cref{sec:PIR}. Using \cref{res:ldchyper} and other auxiliary results, we get the following lower bound for $\textsf{PIR}$ schemes over $\Z_r$ when $r\geq 2$ is prime. 
\begin{result}[\cref{thr:pir_lower_bound}]\label{res:pir}
    Let $r\geq 2$ be prime. A classical $2$-server $\mathsf{PIR}$ scheme with alphabet $\Z_r$, query size $t$, answer size $a$, and recovery probability $1/r + \varepsilon$ satisfies $t= \Omega\big(\varepsilon^2 n/r^{4a+6} - a)$.
\end{result}

{Previous lower bounds on $2$-server $\mathsf{PIR}$ protocols with alphabet size $r=2$ include, among others, $t = \Omega(n\varepsilon^2/2^{5a})$~\cite{DBLP:journals/jcss/KerenidisW04}, $t = \Omega(n\varepsilon^2/2^{2a})$~\cite{wehner2005improved}, and (for a restricted model of 2-server $\mathsf{PIR}$) $t+a = \Omega(n^{1/3})$~\cite{Razborov2007lower}. Other bounds in various settings include~\cite{goldreich2006lower,beigel2006tight,Yeo2023lower}. We note that \cref{res:pir} does not contradict Dvir and Gopi's~\cite{dvir20162} 2-server PIR scheme with query size $t = n^{O(\sqrt{\log\log{n}/\log{n}})} = n^{o(1)}$ since it uses answer size $a = O(t)$, in which case our bound becomes trivial.}

\paragraph{After completion of this work.}
Kallaugher and Parekh~\cite{kallaugher2022quantum} recently put out an online preprint in which they improve our quantum streaming lower bound for Max-$t$-Cut to $\Omega(n)$. As far as we know, our quantum streaming lower bound for $\Z_r$ when $r>2$ was not known before.

\subsection{Future work}

\emph{\textbf{1.}} \emph{Proving $\mathsf{LDC}$ lower bounds.} The first natural open question is, can we prove a lower bound of $N= 2^{\Omega(n/r)}$ for $\mathsf{LDC}$s over $\Z_r$, or, more ambitiously, prove that $N= 2^{\Omega(n)}$? As far as we are aware, there are no super-polynomial lower bounds for $N$ even for $r=\omega(\sqrt{n})$. 
Similarly, can one also prove a lower bound of $N= 2^{\Omega(n\log r)}$ for  \emph{non-linear} locally-\emph{correctable} codes over $\Z_r$ (thereby matching a similar lower bound for linear case~\cite{bhattacharyya2011tight})?

\emph{\textbf{2.}} \emph{Communication complexity of $r$-ary Hidden Hypermatching.} Our communication protocol behind \cref{res:upperBHM} relies on the promise on the inputs, i.e., on the string $w\in\mathbb{Z}_r^{\alpha n/t}$ that either satisfies $w=Mx$ or is uniformly random. Is there a protocol with the same complexity which does not explicitly use $w$? More generally, what is the communication complexity of a relational version of the $r$-$\HH(\alpha,2,n)$ problem in which Bob outputs a hyperedge and the corresponding entry of $Mx$? Moreover, is it possible to match the quantum lower bounds from \cref{res:lowerHHM}?

\emph{\textbf{3.}} \emph{Better bounds on streaming algorithms.} What is the quantum space complexity of approximating Unique Games? {Given the recent work of Kallaugher and Parekh~\cite{kallaugher2022quantum}, who proved a $\Omega(n)$ streaming lower bound for Max-$t$-Cut, we conjecture that the same bound applies to Unique Games.} A proof might require obtaining new quantum lower bounds for the communication problems introduced in~\cite{kapralov20171,kapralov2019optimal,chou2021approximabilityB,chou2022linear}.

\emph{\textbf{4.}} \emph{Generalized hypercontractivity.} Another open question is regarding our main \cref{result:hyper}, which shows a form of $(2,p)$-hypercontractivity, since the result works for all Schatten $p$-norms with $p\in [1,2]$. Can we prove a general $(q,p)$-hypercontractive statement for matrices, firstly for matrix-valued functions over $\{0,1\}$, and then further generalize that to functions over $\Z_r$? Proving this might also require a different generalization of the inequality of Ball, Carlen, and Lieb~\cite{ball1994sharp}.

\section{Preliminaries}
\label{sec:sec2}

Let $[n]\triangleq\{1,\ldots,n\}$. For $r\in\Z$, $r\geq 2$, we let $\Z_r\triangleq\{0,\ldots,r-1\}$ be the ring with addition and multiplication modulo $r$, and let $\omega_r \triangleq e^{2\pi i/r}$. 
Given $S\in\Z_r^n$, we write $|S|\triangleq |\{i\in[n]: S_i\neq 0\}|$ for its Hamming weight. 
Let $\operatorname{D}(\mathbb{C}^m)$ be the set of all quantum states over $\mathbb{C}^m$, i.e., the set of positive semi-definite matrices with trace $1$. For a matrix $M\in \mathbb{C}^{m\times m}$, its (unnormalized) Schatten $p$-norm is defined as $\|M\|_p \triangleq (\operatorname{Tr}|M|^p)^{1/p} = \big(\sum_{i=1}^m\sigma_i(M)^p\big)^{1/p}$, where $\{\sigma_i(M)\}_i$ are the singular values of $M$, i.e., the eigenvalues of the positive semi-definite operator $|M|\triangleq\sqrt{M^\dagger M}$. Moreover, let $\|M\| \triangleq \max_{i\in[m]}\sigma_i(M)$ and $\|M\|_F \triangleq \sqrt{\sum_{i,j=1}^m |M_{ij}|^2}$ be its spectral and Frobenious norms, respectively (notice that $\|M\| = \|M\|_\infty$). 
Given a vector $v\in\mathbb{C}^m$, its $p$-norm is $\|v\|_p \triangleq \big(\sum_{i=1}^m |v_i|^p\big)^{1/p}$. Given two probability distributions $P$ and $Q$ on the same finite set, their total variation distance is $\|P-Q\|_{\text{tvd}} \triangleq \sum_i |P(i) - Q(i)|$ (we might abuse notation and use random variables instead of their probability distributions in $\|\cdot\|_{\text{tvd}}$). For a probability $p = 1/r + \varepsilon$ with fixed $r\in\Z$, we refer to $\varepsilon$ as its \emph{advantage}, and to {$\varepsilon r/(r-1)$} as its \emph{bias}. 

The Fourier transform of a matrix-valued function $f:\Z_r^n\to\mathbb{C}^{m\times m}$ is a function $\widehat{f}:\Z_r^n\to\mathbb{C}^{m\times m}$ defined by
\begin{align*}
    \widehat{f}(S) = \frac{1}{r^n}\sum_{x\in\Z_r^n}f(x)\omega_r^{-S\cdot x},
\end{align*}
where $S\cdot x = \sum_{i=1}^n S_ix_i$  is a sum over $\Z_r$. Here the Fourier coefficients $\widehat{f}(S)$ are $m\times m$ complex matrices and we can write $f:\Z_r^n\to\mathbb{C}^{m\times m}$ as
\begin{align*}
    f(x) = \sum_{S\in\Z_r^n}\widehat{f}(S)\omega_r^{S\cdot x}.
\end{align*}

We shall need the Holevo-Helstrom theorem~\cite{helstrom1976quantum} which characterizes the optimal success probability of distinguishing between two quantum states.
\begin{fact}[{\cite[Theorem~3.4]{watrous2018theory}}]
    \label{lem:lem3.5.c3}
    Let $\rho_0,\rho_1$ be two quantum states that appear with probability $p$ and $1-p$, respectively. The optimal success probability of predicting which state it is by a POVM~is
    \begin{align*}
        \frac{1}{2} + \frac{1}{2}\|p\rho_0 - (1-p)\rho_1\|_1.
    \end{align*}
\end{fact}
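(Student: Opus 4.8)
The plan is to reduce the statement to a short linear-algebra optimization over a single effect operator; the whole argument is standard (indeed it is quoted from Watrous' book), so I expect no real obstacle, only two routine checks.

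\emph{First}, I would note that two-outcome POVMs are optimal: a general strategy is a POVM $\{M_a\}_a$ together with a guessing rule, which may be taken deterministic, $g\colon a\mapsto\{0,1\}$, and setting $E_0:=\sum_{a:g(a)=0}M_a$, $E_1:=\sum_{a:g(a)=1}M_a$ yields a two-outcome POVM with $E_0+E_1=\id$ and the same success probability $p\,\Tr(E_0\rho_0)+(1-p)\,\Tr(E_1\rho_1)$. So it suffices to maximize this quantity over pairs $E_0,E_1\succeq0$ with $E_0+E_1=\id$.

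\emph{Next}, substituting $E_1=\id-E_0$ rewrites the success probability as $(1-p)+\Tr(E_0\Delta)$, where $\Delta:=p\rho_0-(1-p)\rho_1$ is Hermitian with $\Tr\Delta=2p-1$. So the optimal success probability is $(1-p)+\max\{\Tr(E_0\Delta):0\preceq E_0\preceq\id\}$. Diagonalizing $\Delta=\sum_i\lambda_i\ketbra{v_i}{v_i}$ in an orthonormal eigenbasis, for any admissible $E_0$ we have $\bra{v_i}E_0\ket{v_i}\in[0,1]$, so $\Tr(E_0\Delta)=\sum_i\lambda_i\bra{v_i}E_0\ket{v_i}\le\sum_{i:\lambda_i>0}\lambda_i=\Tr\Delta_+$, where $\Delta_+$ is the positive part of $\Delta$; this bound is attained by $E_0=\Pi_+$, the projector onto the positive eigenspace of $\Delta$ (the Holevo--Helstrom measurement), with $E_1=\id-\Pi_+$.

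\emph{Finally}, I would convert $\Tr\Delta_+$ into a trace norm: since $|\Delta|=\Delta_++\Delta_-$ with $\Delta_\pm\succeq0$ and $\Delta_+-\Delta_-=\Delta$, we get $\|\Delta\|_1=\Tr\Delta_++\Tr\Delta_-$ and $\Tr\Delta_+-\Tr\Delta_-=2p-1$, hence $\Tr\Delta_+=\tfrac12(\|\Delta\|_1+2p-1)$, and therefore
\[
    (1-p)+\Tr\Delta_+ \;=\; \tfrac12+\tfrac12\|p\rho_0-(1-p)\rho_1\|_1 ,
\]
which is the claimed value. The only mildly delicate points are the reduction to two-outcome POVMs and the simultaneous tightness of the eigenvalue bound; both are routine. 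One could alternatively invoke the variational identity $\|\Delta\|_1=\max\{\Tr(Z\Delta):-\id\preceq Z\preceq\id\}$ with $Z=2E_0-\id$, but the eigenvalue argument above is the most self-contained.
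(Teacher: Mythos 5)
Your proof is correct: the reduction to two-outcome POVMs, the rewriting of the success probability as $(1-p)+\Tr\big(E_0(p\rho_0-(1-p)\rho_1)\big)$, the optimization giving the positive part, and the conversion to the trace norm via $\Tr\Delta = 2p-1$ are all sound. The paper itself gives no proof of this fact (it is quoted directly from Watrous' book), and your argument is essentially the standard Holevo--Helstrom proof found there, so there is nothing further to compare.
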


We will need the following result derived from the non-commutative Khintchine's inequality.
\begin{lemma}
    \label{lem:lem1}
    Given $A_1,\dots,A_n\in\mathbb{C}^{N\times N}$, then
    \begin{align*}
        \operatorname*{\mathbb{E}}_{k\sim\mathbb{Z}_r^n}\left\|\sum_{i=1}^n \omega_r^{k_i}A_i\right\| \leq 2\sqrt{2\log(2N)}\sqrt{\sum_{i=1}^n \|A_i\|^2}.
    \end{align*}
\end{lemma}
\begin{proof}
    $\begin{aligned}[t]
        \operatorname*{\mathbb{E}}_{k\sim\mathbb{Z}_r^n}\left\|\sum_{i=1}^n \omega_r^{k_i}A_i\right\| &= \operatorname*{\mathbb{E}}_{k\sim\mathbb{Z}_r^n}\left\|\sum_{i=1}^n \left(\omega_r^{k_i}A_i -\operatorname*{\mathbb{E}}_{k'\sim\mathbb{Z}_r^n}\big[\omega_r^{k'_i}A_i\big]\right) \right\|\\
        &\leq \operatorname*{\mathbb{E}}_{k,k'\sim\mathbb{Z}_r^n}\left\|\sum_{i=1}^n (\omega_r^{k_i} - \omega_r^{k'_i})A_i \right\|\\
        &= \operatorname*{\mathbb{E}}_{k,k'\sim\mathbb{Z}_r^n}\operatorname*{\mathbb{E}}_{\varepsilon\sim\{\pm 1\}^n}\left\|\sum_{i=1}^n \varepsilon_i(\omega_r^{k_i} - \omega_r^{k'_i})A_i \right\|\\
        &\leq 2\operatorname*{\mathbb{E}}_{k\sim\mathbb{Z}_r^n}\operatorname*{\mathbb{E}}_{\varepsilon\sim\{\pm 1\}^n}\left\|\sum_{i=1}^n \varepsilon_i\omega_r^{k_i}A_i \right\|\\
        &\leq 2\sqrt{2\log(2N)}\sqrt{\sum_{i=1}^n \|A_i\|^2},
    \end{aligned}$

    \noindent using the non-commutative Khintchine's inequality~\cite[Theorem~4.1.1]{tropp2015introduction} in the last step.
\end{proof}

\section{Hypercontractive Inequality}
In this section we prove our main result, a hypercontractive inequality for matrix-valued functions over $\Z_r$, generalizing a result from~\cite{ben2008hypercontractive}. The proof is by induction on $n$ and the base case $n=1$ is proven in \cref{sec:sec3.1}, which is a generalization of Ball, Carlen, and Lieb~\cite{ball1994sharp} when considering $r$ matrices. After this, the induction is fairly straightforward and is described in \cref{sec:sec3.2}.

\subsection{Generalizing Ball, Carlen, and Lieb}
\label{sec:sec3.1}
We first state the powerful inequality of Ball, Carlen, and Lieb~\cite[Theorem~1]{ball1994sharp}.
\begin{theorem}[Optimal 2-uniform convexity]
    \label{thr:thr1}
    Let $A,B\in\mathbb{C}^{n\times n}$, and $p\in[1,2]$. Then
    $$
        \left(\frac{\|A+B\|_p^p + \|A-B\|_p^p}{2}\right)^{2/p} \geq \|A\|_p^2 + (p-1)\|B\|_p^2.
    $$
\end{theorem}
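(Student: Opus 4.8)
The plan is to follow the two-step route standard for sharp uniform-convexity inequalities: first prove the scalar (``two-point'') version, then lift it to Schatten trace norms. Since at $p=2$ the inequality is precisely the parallelogram identity $\|A+B\|_2^2+\|A-B\|_2^2=2\|A\|_2^2+2\|B\|_2^2$ (with equality), I assume $p\in[1,2)$ from now on.

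\textbf{Step 1 (scalar case).} By scaling and the fact that the $1\times1$ case is the tight one, it suffices to prove, for all $t\in\R$,
\[
\phi(t):=\Bigl(\tfrac12\bigl(|1+t|^{p}+|1-t|^{p}\bigr)\Bigr)^{2/p}-1-(p-1)t^{2}\ \geq\ 0 .
\]
The key observation is that $1+(p-1)t^{2}$ is \emph{exactly} the degree-two Taylor polynomial of $F(t):=\bigl(\tfrac12(|1+t|^{p}+|1-t|^{p})\bigr)^{2/p}$ at the origin (the odd terms cancel and $F(t)=1+(p-1)t^2+O(t^4)$), so $\phi$ vanishes to order $\geq2$ and the content is global nonnegativity. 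For $|t|<1$ this reduces to comparing, term by term, the two convergent binomial series of $\tfrac12(|1+t|^p+|1-t|^p)$ and of $(1+(p-1)t^2)^{p/2}$ (exploiting $p/2\le1$ and $p-1\le1$, which controls the signs of the coefficients), while the range $|t|\ge1$ is handled directly since there $|1-t|^p$ is dominated by $|1+t|^p$. This is a routine, if slightly delicate, calculus exercise.

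\textbf{Step 2 (lifting to trace norms) and the main obstacle.} The unitary $U=\tfrac1{\sqrt2}\bigl(\begin{smallmatrix}I&I\\ I&-I\end{smallmatrix}\bigr)$ gives the identity $\|A+B\|_p^p+\|A-B\|_p^p=\bigl\|\bigl(\begin{smallmatrix}A&B\\ B&A\end{smallmatrix}\bigr)\bigr\|_p^p$, so Theorem~\ref{thr:thr1} is equivalent to a sharp lower bound on the Schatten norm of a single $2\times2$ block matrix; polar decomposition together with unitary invariance of $\|\cdot\|_p$ and a density argument further let one assume $A\succ0$. Beyond this reformulation, the genuine difficulty — and the reason the statement is nontrivial — is that one must keep the $p$-th-power average $\bigl(\tfrac{u+v}{2}\bigr)^{2/p}$ on the left intact: replacing it by $\tfrac12\bigl(u^{2/p}+v^{2/p}\bigr)$ (legitimate since $2/p\ge1$) only yields the strictly weaker $\tfrac12(\|A+B\|_p^2+\|A-B\|_p^2)\ge\|A\|_p^2+(p-1)\|B\|_p^2$, and Hanner's inequality, which would otherwise close the gap cleanly, is known in the Schatten setting only for $p\le4/3$. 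The remaining lift — reducing the operator inequality to the scalar inequality of Step~1 by a differentiation/integration argument combined with the operator convexity of $x\mapsto x^{2/p}$ (valid as $2/p\in[1,2]$) and operator concavity of $x\mapsto x^{p/2}$ (valid as $p/2\in[\tfrac12,1]$) — is exactly the technical core of Ball, Carlen and Lieb~\cite{ball1994sharp}, which I would invoke essentially as a black box; it is precisely this argument that Result~\ref{res:bcl} extends from two matrices to $r$ matrices.
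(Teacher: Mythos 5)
Your proposal is, in substance, the same as the paper's treatment: the paper does not reprove this statement either, but quotes it verbatim from Ball, Carlen and Lieb~\cite{ball1994sharp} and records only the block-diagonal reduction ($Z=\operatorname{diag}(A,A)$, $W=\operatorname{diag}(B,-B)$) to the orthogonality form of Theorem~\ref{thr:alternativeBCL}, so your plan of invoking BCL as a black box for the lifting step coincides with what the paper does (and makes your scalar Step~1 superfluous). One incidental correction: your claim that the averaged-squares inequality $\tfrac12\bigl(\|A+B\|_p^2+\|A-B\|_p^2\bigr)\ge\|A\|_p^2+(p-1)\|B\|_p^2$ is ``strictly weaker'' is contradicted by the paper's observation (already noted for $r=2$ in~\cite{ball1994sharp}) that it implies the full power-mean form via the same block-diagonal device, though this side remark does not affect the correctness of your argument.
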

 As previously mentioned in the introduction, this inequality was first proven by Tomczak-Jaegermann~\cite{tomczak1974moduli} for $p\leq 4/3$, before being generalized by Ball, Carlen, and Lieb~\cite{ball1994sharp} for all $p\in [1,2]$ in 1994. Since then it has found several applications~\cite{carlen1993optimal,duchi2010composite,lee2004embedding,naor2016spectral}. The above result can be recast in a slightly different way.
\begin{theorem}
    \label{thr:alternativeBCL}
    Let $p\in[1,2]$ and $Z,W\in\mathbb{C}^{n\times n}$ such that $\operatorname{Tr}[|Z|^{p-2}ZW^\dagger] =\operatorname{Tr}[|Z|^{p-2}WZ^\dagger] = 0$ (where $|Z|^{p-2}=(ZZ^\dagger)^{p/2-1}$).~Then
    \begin{align*}
        \|Z+W\|_p^2 \geq \|Z\|_p^2 + (p-1)\|W\|_p^2.
    \end{align*}
\end{theorem}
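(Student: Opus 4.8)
The plan is to deduce Theorem~\ref{thr:alternativeBCL} from Theorem~\ref{thr:thr1} by examining the convexity of $t\mapsto\|Z+tW\|_p^2$ along the line through $Z$ in direction $W$. Put $H(t):=\|Z+tW\|_p^2$ and
\[
    \theta(t):=H(t)-(p-1)\|W\|_p^2\,t^2 ,
\]
so that $\theta(0)=\|Z\|_p^2$ and $\theta(1)=\|Z+W\|_p^2-(p-1)\|W\|_p^2$; thus the assertion is exactly $\theta(1)\ge\theta(0)$. I will prove this by showing that (a) $\theta$ is convex on $\mathbb{R}$, and (b) $0\in\partial\theta(0)$, since then $\theta(1)\ge\theta(0)+0\cdot(1-0)=\theta(0)$.

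For (a), observe that $H$ itself is convex, being the square of the nonnegative convex function $t\mapsto\|Z+tW\|_p$, but subtracting the quadratic $(p-1)\|W\|_p^2\,t^2$ makes convexity of $\theta$ a genuine statement — and this is precisely where Theorem~\ref{thr:thr1} enters. Applying that theorem with $A:=Z+tW$ and $B:=sW$ gives, for all $t,s\in\mathbb{R}$,
\[
    \Big(\tfrac{\|Z+(t+s)W\|_p^p+\|Z+(t-s)W\|_p^p}{2}\Big)^{2/p}\ \ge\ \|Z+tW\|_p^2+(p-1)s^2\|W\|_p^2 .
\]
Since $1\le p\le 2$, the map $x\mapsto x^{p/2}$ is concave on $[0,\infty)$, so by Jensen's inequality (with $x=\|Z+(t+s)W\|_p^2$ and $y=\|Z+(t-s)W\|_p^2$) and monotonicity of $x\mapsto x^{2/p}$, the left-hand side above is at most $\tfrac12\big(\|Z+(t+s)W\|_p^2+\|Z+(t-s)W\|_p^2\big)$. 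Hence
\[
    H(t+s)+H(t-s)-2H(t)\ \ge\ 2(p-1)\|W\|_p^2\,s^2 \qquad\text{for all }t,s\in\mathbb{R},
\]
which says exactly that the second symmetric difference of $\theta$ is everywhere nonnegative. As $t\mapsto\|Z+tW\|_p$, and hence $\theta$, is continuous, midpoint-convexity upgrades to full convexity. (Heuristically this is the bound $H''(t)\ge 2(p-1)\|W\|_p^2$ for every $t$, which one could integrate twice instead; the second-difference formulation sidesteps any smoothness concerns.)

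For (b), since $\theta$ and $H$ differ by a differentiable function with vanishing derivative at $t=0$, we have $\partial\theta(0)=\partial H(0)$. If $Z=0$ the inequality is trivial (it becomes $\|W\|_p^2\ge(p-1)\|W\|_p^2$), so assume $\|Z\|_p>0$; writing $H=g^2$ with $g(t)=\|Z+tW\|_p>0$ near $0$, the subdifferential chain rule gives $\partial H(0)=2\|Z\|_p\,\partial g(0)$, so it suffices to prove $0\in\partial g(0)$, i.e.\ that $t\mapsto\|Z+tW\|_p$ is minimized at $t=0$. This is what the orthogonality hypotheses deliver: they are precisely the statement that the first-order variation of $t\mapsto\|Z+tW\|_p^p$ at $t=0$ vanishes, and a convex function whose first-order variation vanishes at a point attains a global minimum there. (Concretely, $\frac{d}{dt}\|Z+tW\|_p^p\big|_{t=0}$ is a fixed linear functional of $W$ built from $|Z|^{p-1}$ and the partial isometry in the polar decomposition of $Z$, and the two trace identities force it — hence its real part — to vanish; the only delicate point, differentiability when $Z$ is rank-deficient, matters solely at $p=1$ and is handled by approximating $Z$ by full-rank matrices or by continuity of both sides in $p$.)

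Combining (a) and (b) gives $\theta(1)\ge\theta(0)$, i.e.\ $\|Z+W\|_p^2\ge\|Z\|_p^2+(p-1)\|W\|_p^2$. The crux is step (a): one cannot simply bound $\|Z+W\|_p^p$ below by the symmetric average $\tfrac12(\|Z+W\|_p^p+\|Z-W\|_p^p)$ appearing in Theorem~\ref{thr:thr1} (that average can exceed $\|Z+W\|_p^p$), so the right move is to re-center Theorem~\ref{thr:thr1} at \emph{every} point of the line $Z+tW$ and thereby extract a uniform convexity/second-difference estimate; the orthogonality condition is then needed only to eliminate the surviving first-order term.
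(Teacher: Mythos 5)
Your route is genuinely different from the paper's: the paper gives no proof of Theorem~\ref{thr:alternativeBCL} at all (it is asserted to be implicit in Ball--Carlen--Lieb, and only the reverse reduction, obtaining Theorem~\ref{thr:thr1} from Theorem~\ref{thr:alternativeBCL} via $Z=\operatorname{diag}(A,A)$, $W=\operatorname{diag}(B,-B)$, is recorded), whereas you derive the orthogonal version from Theorem~\ref{thr:thr1} used as a black box, which would in particular show the two statements are formally equivalent. Your step (a) is correct and is the substantive idea: applying Theorem~\ref{thr:thr1} with $A=Z+tW$, $B=sW$ and weakening its left-hand side by convexity of $x\mapsto x^{2/p}$ gives $H(t+s)+H(t-s)-2H(t)\ge 2(p-1)\|W\|_p^2s^2$ for $H(t)=\|Z+tW\|_p^2$, hence convexity of $\theta$.

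The genuine gap is in step (b), exactly where the hypothesis enters. The first variation is
\[
\frac{d}{dt}\|Z+tW\|_p^p\Big|_{t=0}
=\frac{p}{2}\operatorname{Tr}\!\left[(ZZ^\dagger)^{\frac{p-2}{2}}\bigl(ZW^\dagger+WZ^\dagger\bigr)\right]
=p\,\operatorname{Re}\operatorname{Tr}\!\left[(ZZ^\dagger)^{\frac{p-2}{2}}ZW^\dagger\right],
\]
which, writing $Z=U\Sigma V^\dagger$, pairs $W$ against $U\Sigma^{p-1}V^\dagger$; the stated conditions $\operatorname{Tr}[(ZZ^\dagger)^{\frac{p-1}{2}}ZW^\dagger]=\operatorname{Tr}[(ZZ^\dagger)^{\frac{p-1}{2}}WZ^\dagger]=0$ pair $W$ against $U\Sigma^{p}V^\dagger$. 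These are not equivalent, so your claim that the hypotheses ``are precisely'' the vanishing of the first-order variation is false, and under the literal hypotheses $t=0$ need not minimize $t\mapsto\|Z+tW\|_p$; indeed the statement as printed fails, e.g.\ for $p=2$, $Z=\operatorname{diag}(1,2)$, $W=\operatorname{diag}(-4,1)$ both traces vanish yet $\|Z+W\|_2^2=18<22=\|Z\|_2^2+\|W\|_2^2$. What your argument needs (and what is surely the intended hypothesis, since it is exactly what makes the $p=2$ case Pythagorean) is the Gateaux-derivative condition $\operatorname{Re}\operatorname{Tr}[(ZZ^\dagger)^{\frac{p-2}{2}}ZW^\dagger]=0$, equivalently $\operatorname{Re}\operatorname{Tr}[|Z|^{p-1}RW^\dagger]=0$ with $Z=|Z|R$ the polar decomposition, i.e.\ $Z$ replaced by its isometric part. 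Note this corrected version still suffices for the paper's application: for the block matrices $Z_j$ of Eq.~\eqref{eq:matricesY} the relevant traces vanish for any weight $\Sigma^s$, because they factor through $\sum_{k=0}^{r-1}\omega_r^{(j-l)k}=0$, so Theorem~\ref{eq:conjectureballgen} is unaffected. With the hypothesis so read, your (a)+(b) argument closes, after two small repairs: for $1<p\le 2$ a rank-deficient $Z$ is harmless, since the singular values tending to $0$ contribute $O(|t|^p)=o(|t|)$ to $\|Z+tW\|_p^p$ and the derivative formula above (with $\Sigma^{p-1}$ vanishing on the kernel) persists; and at $p=1$ you should argue directly that $UV^\dagger\in\partial\|\cdot\|_1(Z)$, so the condition places $0$ in $\partial g(0)$ --- the proposed ``continuity in $p$'' patch is not safe because the hypothesis itself depends on $p$.
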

\cref{thr:alternativeBCL} is implicit in the proof of~\cite[Theorem~1]{ball1994sharp}, and it is where most of the difficulty lies, while the reduction from \cref{thr:thr1} to \cref{thr:alternativeBCL} is done by defining
\begin{align*}
    Z = \begin{bmatrix} A & 0 \\ 0 & A \end{bmatrix}, \qquad W = \begin{bmatrix} B & 0 \\ 0 & -B \end{bmatrix}.
\end{align*}
Nonetheless, \cref{thr:alternativeBCL} holds more generally for any $Z,W\in\mathbb{C}^{n\times n}$ that satisfy $\operatorname{Tr}[|Z|^{p-2}ZW^\dagger] =\operatorname{Tr}[|Z|^{p-2}WZ^\dagger]=0$. By using this result, we can prove the following generalization of \cref{thr:thr1}.
\begin{theorem}[A generalization of~\cite{ball1994sharp}]
\label{eq:conjectureballgen}
    Let $r\in\mathbb{Z}$, $r\geq 2$. Let $\omega_r \triangleq e^{2i\pi/r}$, $A_0,\ldots,A_{r-1}\in \mathbb{C}^{n\times n}$, and $p\in [1,2]$, then
    \begin{subequations}
	\begin{align}
		\left(\frac{1}{r} \sum_{j=0}^{r-1}\|A_j\|_p^p\right)^{2/p} &\geq \left\| \frac{1}{r}\sum_{j=0}^{r-1} A_j \right\|_p^2 + \frac{(p-1)(1-(p-1)^{r-1})}{(r-1)(2-p)}\sum_{k=1}^{r-1} \left\| \frac{1}{r}\sum_{j=0}^{r-1} \omega_r^{-jk}A_j \right\|_p^2,\\
		\left(\frac{1}{r} \sum_{k=0}^{r-1}\left\|\sum_{j=0}^{r-1} \omega_r^{jk} A_j\right\|_p^p\right)^{2/p} &\geq\left\|A_0 \right\|_p^2 + \frac{(p-1)(1-(p-1)^{r-1})}{(r-1)(2-p)}\sum_{k=1}^{r-1} \left\| A_k\right\|_p^2.\label{eq:genball}
	\end{align}
	\end{subequations}
\end{theorem}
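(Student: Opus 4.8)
## Proof Plan

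The plan is to deduce Theorem~\ref{eq:conjectureballgen} from Theorem~\ref{thr:alternativeBCL} by an inductive argument on $r$, peeling off one Fourier mode at a time. First I would observe that the two displayed inequalities \eqref{eq:genball} and the first one are equivalent via the substitution $B_k := \frac{1}{r}\sum_{j}\omega_r^{-jk}A_j$ (the inverse discrete Fourier transform), since this is a unitary change of variables (up to the $\sqrt r$ scaling) that turns $A_0$ into the average $\frac1r\sum_j A_j$ and turns $\|\sum_j \omega_r^{jk}A_j\|_p$ into $\|r\cdot(\text{value at }k)\|_p$, so it suffices to prove one of them; I'd work with \eqref{eq:genball}. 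The left-hand side of \eqref{eq:genball} is $\left(\frac1r\sum_{k=0}^{r-1}\|C_k\|_p^p\right)^{2/p}$ where $C_k := \sum_{j=0}^{r-1}\omega_r^{jk}A_j$, and the $C_k$ are exactly the (unnormalized) Fourier transform of the $A_j$; note $A_0 = \frac1r\sum_k C_k$ and more generally $A_j = \frac1r\sum_k \omega_r^{-jk}C_k$.

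The core idea is to apply Theorem~\ref{thr:alternativeBCL} with a carefully chosen split $Z+W$ of each relevant matrix so that the orthogonality condition $\operatorname{Tr}[|Z|^{p-1}ZW^\dagger] = \operatorname{Tr}[|Z|^{p-1}WZ^\dagger]=0$ holds. The natural choice: run the argument of \cite{ball1994sharp} but over the cyclic group $\Z_r$ instead of $\Z_2$. Concretely, one forms the $r\times r$ block-diagonal (or block-circulant) matrices built from $A_0,\dots,A_{r-1}$ so that the quantity $\frac1r\sum_k \|C_k\|_p^p$ becomes (up to normalization) a single Schatten norm $\|\mathbf{Z}+\mathbf{W}\|_p^p$ of a structured $rn\times rn$ matrix, where $\mathbf Z$ encodes $A_0$ (replicated) and $\mathbf W$ encodes $A_1,\dots,A_{r-1}$ placed on the off-diagonal circulant blocks. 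The block-circulant structure plus the fact that the $r$-th roots of unity sum to zero forces $\operatorname{Tr}[|\mathbf Z|^{p-1}\mathbf Z\mathbf W^\dagger]=0$ (the cross terms involve $\sum_{k}\omega_r^{jk}=0$ for $j\neq 0 \bmod r$), so Theorem~\ref{thr:alternativeBCL} applies and gives $\|\mathbf Z+\mathbf W\|_p^2 \ge \|\mathbf Z\|_p^2 + (p-1)\|\mathbf W\|_p^2$. This is one application; but $(p-1)$ is not yet the constant $\frac{(p-1)(1-(p-1)^{r-1})}{(r-1)(2-p)}$ we want. The extra factor arises from iterating: after one split we must recursively handle $\|\mathbf W\|_p^2$, where $\mathbf W$ itself is a block-circulant-type matrix on $A_1,\dots,A_{r-1}$, and reapply the inequality (or invoke the theorem for $r-1$). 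Summing the geometric-type series $\sum_{\ell=0}^{r-2}(p-1)^\ell \cdot(\text{weight})$ and using $1 + (p-1) + \cdots + (p-1)^{r-2} = \frac{1-(p-1)^{r-1}}{1-(p-1)} = \frac{1-(p-1)^{r-1}}{2-p}$ produces exactly the stated constant, with the $\frac{1}{r-1}$ coming from distributing the bound symmetrically over the $r-1$ terms $\|A_k\|_p^2$, $k\ge 1$ (which by symmetry of the roles of $A_1,\dots,A_{r-1}$ under relabeling should all receive equal weight).

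The main obstacle I anticipate is making the block-matrix encoding precise so that (a) the Schatten-$p$ norm bookkeeping is exact — i.e., $\|\mathbf Z+\mathbf W\|_p^p$ really equals $\sum_{k}\|C_k\|_p^p$ up to the right power of $r$, which requires the block-circulant matrix to be block-diagonalizable by the DFT with blocks precisely the $C_k$ — and (b) the orthogonality hypothesis of Theorem~\ref{thr:alternativeBCL} is genuinely satisfied at \emph{every} stage of the recursion, not just the first; the residual matrix $\mathbf W$ after one peel has a slightly different structure (it is the circulant on $A_1,\dots,A_{r-1}$ shifted), and one must check that the trace-orthogonality condition $\operatorname{Tr}[|\mathbf Z'|^{p-1}\mathbf Z'\mathbf W'^\dagger]=0$ survives, again via the vanishing of incomplete root-of-unity sums or by choosing the split along a different character each time. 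A cleaner alternative, which I would pursue if the direct block encoding gets unwieldy, is a pure induction on $r$: write $C_k = A_0 + \sum_{j\ge 1}\omega_r^{jk}A_j$, apply Theorem~\ref{thr:alternativeBCL} to $Z = $ (replication of $A_0$) and $W = $ (the rest) after verifying $\sum_k \omega_r^{jk}=0$ kills the cross term, then recognize the leftover average $\frac1r\sum_k\|\sum_{j\ge1}\omega_r^{jk}A_j\|_p^p$ as an instance of the $r$-variable problem with $A_0$ replaced by one of the $A_j$'s — but this requires care since the indexing set is still size $r$, not $r-1$, so the reduction is not literally to the $(r-1)$-case and the recursion parameter needs to be the number of "active" Fourier modes. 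I expect the constant-tracking in this geometric sum to be the delicate but ultimately routine part; the real content is the orthogonality verification at each step.
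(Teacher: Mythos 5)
Your plan is essentially the paper's own proof: after noting the equivalence of the two inequalities via the discrete Fourier substitution, the paper lifts $A_0,\dots,A_{r-1}$ to $rn\times rn$ matrices $Z_j=\operatorname{diag}(\{\omega_r^{jk}A_j\}_k)$ (unitarily equivalent to your circulant blocks $P^j\otimes A_j$), checks the trace-orthogonality $\operatorname{Tr}[|Z_j|^{p-1}Z_jZ_k^\dagger]=0$ so Theorem~\ref{thr:alternativeBCL} applies, peels off one matrix at a time to accumulate factors $(p-1)^j$, and averages over all $(r-1)!$ orderings to get the coefficient $\frac{1}{r-1}\sum_{k=1}^{r-1}(p-1)^k=\frac{(p-1)(1-(p-1)^{r-1})}{(r-1)(2-p)}$, exactly your symmetrization step. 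Your worry about the orthogonality at later stages resolves immediately: in this encoding it holds pairwise (the cross terms are complete character sums, or zero-diagonal shifts in the circulant picture), so the iteration goes through as you anticipated.
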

Notice that for $r=2$ we recover \cref{thr:thr1}, since $\frac{(p-1)(1-(p-1)^{r-1})}{(r-1)(2-p)} = p-1$.
\begin{proof}
    In order to prove this theorem, first note that both inequalities are equivalent: just define $A'_k = \frac{1}{r}\sum_{j=0}^{r-1} \omega_r^{-jk}A_j \iff A_k = \sum_{j=0}^{r-1}\omega_r^{jk}A'_j$. Therefore we shall focus on \cref{eq:genball}.	In order to prove it, let us first define the $rn\times rn$ matrices
    \begin{align}
        \label{eq:matricesY}
        Z_j \triangleq \operatorname{diag}(\{\omega_r^{jk}A_j\}_{k=0}^{r-1}) = \begin{bmatrix}
            A_j & 0 & 0 & \dots & 0\\
            0 & \omega_r^{j}A_j & 0 & \dots & 0\\
            0 & 0 & \omega_r^{2j}A_j & \dots & 0\\
            \vdots & \vdots & \vdots & \ddots & \vdots\\
            0 & 0 & 0 & \dots & \omega_r^{(r-1)j}A_{j}
        \end{bmatrix}
    \end{align}
    for $j\in\{0,\dots,r-1\}$. Now, since the trace is additive for block matrices, we have
    \begin{align}
        \label{eq:YandAmatrices}
        \operatorname{Tr}\left|\sum_{j=0}^{r-1}Z_j\right|^p =  \sum_{k=0}^{r-1}\operatorname{Tr}\left|\sum_{j=0}^{r-1} \omega_r^{jk} A_j\right|^p.
    \end{align}
    Moreover, observe that
    \begin{align*}
        \|Z_j\|_p^2 = \left(\sum_{k=0}^{r-1}\operatorname{Tr}|\omega_r^{jk}A_j|^p\right)^{2/p} = (r\operatorname{Tr}|A_j|^p)^{2/p} = r^{2/p}\|A_j\|_p^2.
    \end{align*}
    Therefore we can rewrite \cref{eq:genball} as
    \begin{align*}
        \left\|\sum_{j=0}^{r-1}Z_j\right\|^2_p \geq \|Z_0\|_p^2 + \frac{(p-1)(1-(p-1)^{r-1})}{(r-1)(2-p)}\sum_{j=1}^{r-1}\|Z_j\|_p^2.
    \end{align*}
    The above can be proven by repeated applications of \cref{thr:alternativeBCL} as follows: consider a permutation of $[r-1]$ given by $(k_1,\ldots,k_{r-1})$. Since 
    \begin{align*}
        \operatorname{Tr}[|Z_j|^{p-2}Z_jZ_k^\dagger] &= \operatorname{Tr}\begin{bmatrix}
            |A_j|^{p-2}A_jA_k^\dagger & 0 & \dots & 0\\
            0 & \omega_r^{j-k}|A_j|^{p-2}A_jA_k^\dagger & \dots & 0\\
            \vdots & \vdots & \ddots & \vdots\\
            0 & 0 & \dots & \omega_r^{(r-1)(j-k)}|A_j|^{p-2}A_jA_k^\dagger
        \end{bmatrix} \\
        &= \operatorname{Tr}[|A_j|^{p-2}A_jA_k^\dagger] \sum_{\ell = 0}^{r-1}\omega_r^{\ell(j-k)} = 0
    \end{align*}
    for any $j\neq k$ (and similarly $\operatorname{Tr}[|Z_j|^{p-2}Z_kZ_j^\dagger] = 0$), then (define $k_0\triangleq0$)
	\begin{align*}
	    \operatorname{Tr}\left[|Z_{k_j}|^{p-2}Z_{k_j}\left(\sum_{l=j+1}^{r-1}Z_{k_l}\right)^\dagger\right] = \operatorname{Tr}\left[|Z_{k_j}|^{p-2}\left(\sum_{l=j+1}^{r-1}Z_{k_l}\right)Z_{k_j}^\dagger\right] = 0
	\end{align*}
	for every $j\in\{0,1,\dots,r-2\}$, meaning that \cref{thr:alternativeBCL} can be applied, which implies
	\begin{align*}
	    \left\|\sum_{j=0}^{r-1}Z_j\right\|^2_p &\geq \|Z_0\|_p^2 + (p-1)\left\| \sum_{j=1}^{r-1}Z_j\right\|^2_p\\
	    &\geq \|Z_0\|_p^2 + (p-1)\|Z_{k_1}\|_p^2 + (p-1)^2\left\| \sum_{j=2}^{r-1}Z_{k_j}\right\|^2_p\geq \|Z_0\|_p^2 + \sum_{j=1}^{r-1}(p-1)^j\|Z_{k_j}\|_p^2.
	\end{align*}
	Averaging the above inequality over all the $(r-1)!$ permutations of the set $[r-1]$, we obtain
	\begin{align*}
	    \left\|\sum_{j=0}^{r-1}Z_j\right\|^2_p &\geq \|Z_0\|_p^2 + \frac{1}{(r-1)!}\sum_{j=1}^{r-1}\|Z_j\|_p^2\sum_{k=1}^{r-1}(r-2)!(p-1)^k\\
	    &= \|Z_0\|_p^2 + \frac{(p-1)(1-(p-1)^{r-1})}{(r-1)(2-p)}\sum_{j=1}^{r-1}\|Z_j\|_p^2,
	\end{align*}
proving our theorem statement.
\end{proof}   
\begin{remark}
    \label{rem:1}
    It is not hard to see that $\frac{(p-1)(1-(p-1)^{r-1})}{(r-1)(2-p)} \geq \frac{p-1}{r-1}$ and $\lim_{p\to 2}\frac{(p-1)(1-(p-1)^{r-1})}{(r-1)(2-p)} = 1$.
\end{remark}
\noindent Observe that $t\mapsto t^{p/2}$ is concave for $p\in[1,2]$, hence \cref{eq:conjectureballgen} implies the seemingly weaker
\begin{align}
    \label{eq:BCLweaker}
    \frac{1}{r} \sum_{k=0}^{r-1}\left\|\sum_{j=0}^{r-1} \omega_r^{jk} A_j\right\|_p^2 &\geq \left\|A_0 \right\|_p^2 + \frac{(p-1)(1-(p-1)^{r-1})}{(r-1)(2-p)}\sum_{k=1}^{r-1} \left\| A_k\right\|_p^2
\end{align}
for $p\in[1,2]$. Nonetheless, the above inequality also implies \cref{eq:conjectureballgen} (this fact was already pointed out for $r=2$ by~\cite{ball1994sharp}). Indeed, consider again the $rn\times rn$ matrices $Z_j$ from \cref{eq:matricesY}. Then, similar to \cref{eq:YandAmatrices} (which only considered the $\ell=0$ case below), for any $\ell\in\Z_r$ we have
\begin{align*}
    \operatorname{Tr}\left|\sum_{j=0}^{r-1}\omega_r^{j\ell} Z_j\right|^p = \sum_{k=0}^{r-1}\operatorname{Tr}\left|\sum_{j=0}^{r-1} \omega_r^{jk} A_j\right|^p \implies \left\|\sum_{j=0}^{r-1}\omega_r^{j\ell} Z_j\right\|_p^2 = \left(\sum_{k=0}^{r-1}\left\|\sum_{j=0}^{r-1} \omega_r^{jk} A_j\right\|^p_p\right)^{2/p}.
\end{align*}
 Since $\|Z_j\|^2_p = r^{2/p}\|A_j\|^2_p$ for $j\in\Z_r$, \cref{eq:BCLweaker} implies (define $\zeta \triangleq \frac{(p-1)(1-(p-1)^{r-1})}{(r-1)(2-p)}$ for simplicity)
\begin{align*}
    \left\|A_0 \right\|_p^2 + \zeta\sum_{k=1}^{r-1} \left\| A_k\right\|_p^2 = \frac{\left\|Z_0 \right\|_p^2}{r^{2/p}} + \zeta\sum_{k=1}^{r-1} \frac{\left\| Z_k\right\|_p^2}{r^{2/p}}
    \leq \frac{r^{-2/p}}{r} \sum_{\ell=0}^{r-1}\left\|\sum_{j=0}^{r-1} \omega_r^{j\ell} Z_j\right\|_p^2
    = \left(\frac{1}{r}\sum_{k=0}^{r-1}\left\|\sum_{j=0}^{r-1} \omega_r^{jk} A_j\right\|^p_p\right)^{2/p},
\end{align*}
which is exactly \cref{eq:conjectureballgen}.

\subsection{Proving $(2,p)$-hypercontractive inequality over $\Z_r$}
\label{sec:sec3.2}
Having proven the base case of our main theorem statement, we are now ready to prove our hypercontractivity theorem for matrix-valued functions over $\Z_r$.
\begin{theorem}
    \label{thm:genmathypercontractivity}
    Let $p\in [1,2]$. For every $f:\Z_r^n\rightarrow \mathbb{C}^{m\times m}$ and
    \begin{align*}
        \rho \leq \sqrt{\frac{(p-1)(1-(p-1)^{r-1})}{(r-1)(2-p)}},
    \end{align*}
    we have
    $$
    \left(\sum_{S\in \Z_r^n} \rho^{2|S|} \|\widehat{f}(S)\|_p^2\right)^{1/2} \leq \left(\frac{1}{r^n}\sum_{x\in\Z_r^n}\|f(x)\|_p^p\right)^{1/p},
    $$
    where $|S| \triangleq |\{i\in[n]:S_i\neq 0\}|$.
\end{theorem}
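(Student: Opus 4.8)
The plan is to prove Theorem~\ref{thm:genmathypercontractivity} by induction on $n$, following the standard template for hypercontractivity proofs as in~\cite{o2014analysis,ben2008hypercontractive}. Write $\zeta := \frac{(p-1)(1-(p-1)^{r-1})}{(r-1)(2-p)}$, so the hypothesis is $\rho \leq \sqrt{\zeta}$; since the left-hand side is monotone in $\rho$, it suffices to treat $\rho = \sqrt{\zeta}$. For the base case $n=1$, the claim is precisely Theorem~\ref{eq:conjectureballgen} (in the form of Eq.~\eqref{eq:genball}): expanding $f(x) = \sum_{S\in\Z_r}\widehat{f}(S)\omega_r^{S\cdot x}$ and setting $A_k = \widehat{f}(k)$, the right-hand side $\big(\frac1r\sum_{x}\|f(x)\|_p^p\big)^{1/p}$ matches $\big(\frac1r\sum_{k}\|\sum_j \omega_r^{jk}A_j\|_p^p\big)^{1/p}$, while the left-hand side $\big(\sum_{S}\rho^{2|S|}\|\widehat{f}(S)\|_p^2\big)^{1/2}$ equals $\big(\|A_0\|_p^2 + \zeta\sum_{k=1}^{r-1}\|A_k\|_p^2\big)^{1/2}$ because $|S|=0$ when $S=0$ and $|S|=1$ otherwise. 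So the base case is exactly Eq.~\eqref{eq:genball}.

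For the induction step, suppose the inequality holds for functions on $\Z_r^{n-1}$. Given $f:\Z_r^n\to\C^{m\times m}$, decompose each $x\in\Z_r^n$ as $(x',x_n)$ with $x'\in\Z_r^{n-1}$, and for each fixed $x_n$ consider the slice $f_{x_n}:\Z_r^{n-1}\to\C^{m\times m}$, $f_{x_n}(x') := f(x',x_n)$. The Fourier coefficients relate as $\widehat{f}(S',S_n) = \widehat{(g_{S_n})}(S')$ where $g_{S_n} := \frac1r\sum_{x_n}\omega_r^{-S_n x_n} f_{x_n}$ is itself an "inner" Fourier transform in the last coordinate. The key point is to split the target sum over $S = (S',S_n)$ according to whether $S_n = 0$ or not, writing $\rho^{2|S|} = \rho^{2|S'|}\cdot \rho^{2\cdot[S_n\neq 0]}$. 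Applying the induction hypothesis to each $g_{S_n}$ (which handles the $S'$ sum with the weight $\rho^{2|S'|}$) and then applying the $n=1$ case (Theorem~\ref{eq:conjectureballgen}) to handle the $S_n$ sum with weights $1$ and $\zeta = \rho^2$, one collapses everything. The glue between the two applications is the Minkowski inequality for the $\ell^p(\ell^2)$ versus $\ell^2(\ell^p)$ ordering: because $p \leq 2$, one can interchange an outer $\ell^p$-type average over $x_n$ with an inner $\ell^2$-type sum over $S'$ at the cost of going the "right" direction, exactly as in~\cite[proof of hypercontractivity]{o2014analysis}. Concretely, one shows
\begin{align*}
\Big(\sum_{S}\rho^{2|S|}\|\widehat{f}(S)\|_p^2\Big)^{1/2}
&= \Big(\sum_{S_n}\rho^{2[S_n\neq 0]}\sum_{S'}\rho^{2|S'|}\|\widehat{(g_{S_n})}(S')\|_p^2\Big)^{1/2}\\
&\leq \Big(\sum_{S_n}\rho^{2[S_n\neq 0]}\Big(\tfrac{1}{r^{n-1}}\sum_{x'}\|g_{S_n}(x')\|_p^p\Big)^{2/p}\Big)^{1/2},
\end{align*}
and then one applies the $n=1$ inequality "inside" the $x'$-average — using Minkowski to pull the $x'$-average outside — followed by recognizing $\frac{1}{r^n}\sum_{x',x_n}\|f(x',x_n)\|_p^p$ on the right.

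The main obstacle is getting the direction of the Minkowski/triangle inequality exactly right so that the two inductive applications compose without loss: one needs that for $p\in[1,2]$ the map $M \mapsto \|M\|_p^2$ interacts correctly with averages, i.e. that taking $\ell^2$ of $\ell^p$-norms and then an $\ell^p$-average over $x_n$ can be bounded by first doing the $x_n$-average (as an $\ell^p$ quantity) and then the $\ell^2$ sum over the frequency — this is where $p\le 2$ is essential and where a naive ordering would fail. Since this is the same manipulation carried out in~\cite{ben2008hypercontractive} for the Boolean case (with their two-matrix inequality replaced by our Theorem~\ref{eq:conjectureballgen}), and since the weights match up precisely by our choice $\rho^2 = \zeta$, the argument goes through verbatim; we therefore only sketch it and omit the routine details, as stated after Result~\ref{res:bcl} in the introduction.
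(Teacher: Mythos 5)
Your proposal is correct and follows essentially the same route as the paper: induction on $n$ with base case given by Theorem~\ref{eq:conjectureballgen} (via $A_k=\widehat{f}(k)$), and an induction step that combines the inductive hypothesis with the $n=1$ case through Minkowski's inequality in the direction available for $p\in[1,2]$. The only, immaterial, difference is the ordering: you apply the inductive hypothesis to the partial Fourier transforms $g_{S_n}$ in the last coordinate and then the base case pointwise in $x'$ to the value-slices, whereas the paper applies the inductive hypothesis to the slices $f\vert_{x_{n+1}=i}$ and then the base case to the functions $i\mapsto\widehat{g_i}(S)$; both orderings compose correctly with the same Minkowski swap.
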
 
\begin{proof}
For ease of notation, define $\zeta \triangleq \frac{(p-1)(1-(p-1)^{r-1})}{(r-1)(2-p)}$. It suffices to prove the inequality for $\rho = \sqrt{\zeta}$. Our proof closely follows the one in~\cite{ben2008hypercontractive} and is by induction on $n$. For $n=1$, the desired statement is
\begin{align}
    \label{eq:basecaseinduction}
    \sum_{S\in \Z_r} \zeta^{|S|} \|\widehat{f}(S)\|_p^2 &\leq  \left(\frac{1}{r}\sum_{x\in\Z_r}\|f(x)\|_p^p\right)^{2/p} \\
    &\Updownarrow\nonumber\\
    \|\widehat{f}(0) \|_p^2 + \zeta\sum_{k=1}^{r-1} \| \widehat{f}(k)\|_p^2 &\leq \left(\frac{1}{r}\sum_{k=0}^{r-1}\left\|\sum_{j=0}^{r-1}\omega_r^{j k}\widehat{f}(j)\right\|_p^p\right)^{2/p},\nonumber
\end{align}
where we used that $f(k) = \sum_{j=0}^{r-1}\omega_r^{j k}\widehat{f}(j)$. But this is precisely \cref{eq:conjectureballgen} with $A_k = \widehat{f}(k)$, $k\in\mathbb{Z}_r$.

We now assume the inequality holds for $n$ and prove it for $n+1$. Let $f:\Z_r^{n+1}\rightarrow \mathbb{C}^{m\times m}$ and $g_i=f\vert_{x_{n+1}=i}$ for $i\in \{0,\ldots,r-1\}$ be the function obtained by fixing the last {coordinate} of $f(\cdot)$ to $i$. By the induction hypothesis we have that, for every $i\in \{0,\ldots,r-1\}$ and $p\in [1,2]$,
\begin{align*}
    \sum_{S\in \Z_r^n} \zeta^{|S|} \|\widehat{g_i}(S)\|_p^2\leq  \left(\frac{1}{r^n}\sum_{x\in\Z_r^n}\|g_i(x)\|_p^p\right)^{2/p}.
\end{align*}
We now take the $\ell_p$ average of each of these $r$ inequalities to obtain
\begin{align}
    \label{eq:ellqaverage}
    \left(\frac{1}{r}\sum_{i=0}^{r-1}\left(\sum_{S\in \Z_r^n} \zeta^{|S|} \|\widehat{g_i}(S)\|_p^2\right)^{p/2}\right)^{2/p}&\leq  \left( \frac{1}{r}\sum_{i=0}^{r-1}\frac{1}{r^n}\sum_{x\in \Z_r^n}\|g_i(x)\|_p^p\right)^{2/p}
    &= \left(\frac{1}{r^{n+1}}\sum_{x\in \Z_r^{n+1}} \|f(x)\|_p^p\right)^{2/p}.
\end{align}
The right-hand side of the inequality above is exactly the right-hand side of the conjectured hypercontractive inequality. Below, we show how to lower bound the left-hand side of the above inequality by the desired left-hand side of the conjectured statement. To do so, we will need the following Minkowski's inequality.
\begin{lemma}[{Minkowski's inequality, \cite[Theorem 26]{hardy1952j}}]
    \label{lem:minkowski}
    For any $r_1\times r_2$ matrix whose rows are given by $u_1,\dots,u_{r_1}$ and whose columns are given by $v_1,\dots,v_{r_2}$, and any $1\leq q_1\leq q_2\leq \infty$,
    \begin{align*}
        \left\|\left(\|v_1\|_{q_2},\dots,\|v_{r_2}\|_{q_2}\right)\right\|_{q_1} \geq \left\|\left(\|u_1\|_{q_1},\dots,\|u_{r_1}\|_{q_1}\right)\right\|_{q_2}.
    \end{align*}
\end{lemma}
Now, consider the $r^n\times r$ matrix whose entries are given by
$
c_{S,i}=r^{n/2}\big\|\zeta^{|S|/2} \widehat{g_i}(S)\big\|_p
$
for every $i\in \{0,\ldots,r-1\}$ and $S\in \Z_r^n$. Then the left-hand side of \cref{eq:ellqaverage} can be written as
\begin{align}
    \displaybreak[0]\left(\frac{1}{r}\sum_{i=0}^{r-1}\left(\sum_{S\in \Z_r^n} \zeta^{|S|} \|\widehat{g_i}(S)\|_p^2\right)^{p/2}\right)^{1/p}&=    \left(\frac{1}{r}\sum_{i=0}^{r-1}\left(\frac{1}{r^n}\sum_{S\in \Z_r^n} c_{S,i}^2\right)^{p/2}\right)^{1/p}\nonumber\\\displaybreak[0]
    &\geq \left(\frac{1}{r^n}\sum_{S\in \Z_r^n}\left(\frac{1}{r}\sum_{i=0}^{r-1} c_{S,i}^p\right)^{2/p}\right)^{1/2}\tag{\cref{lem:minkowski} with $q_1 = p$ and $q_2=2$} \nonumber\\\displaybreak[0]
    &= \left(\sum_{S\in \Z_r^n}\zeta^{|S|}\left(\frac{1}{r}\sum_{i=0}^{r-1} \|\widehat{g_i}(S)\big\|_p^p\right)^{2/p}\right)^{1/2}.\label{eq:afterminkowski}
\end{align}
Now, for a fixed $S\in\Z_r^n$, we use the base case $n=1$, \cref{eq:basecaseinduction}, on the functions $h(i)=\widehat{g_i}(S)$ in order to get
$$
\left(\frac{1}{r}\sum_{i=0}^{r-1}\|\widehat{g_i}(S)\|_p^p\right)^{2/p}\geq \sum_{i=0}^{r-1} \zeta^{|i|} \left\|\frac{1}{r}\sum_{j=0}^{r-1}h(j)\omega_r^{-ij}\right\|_p^2 = \sum_{i=0}^{r-1} \zeta^{|i|} \left\|\frac{1}{r}\sum_{j=0}^{r-1}\widehat{g_j}(S)\omega_r^{-ij}\right\|_p^2.
$$
Plugging this back into \cref{eq:afterminkowski}, we have
\begin{align*}
    \left(\sum_{S\in \Z_r^n}\zeta^{|S|}\left(\frac{1}{r}\sum_{i=0}^{r-1} \|\widehat{g_i}(S)\|_p^p\right)^{2/p}\right)^{1/2} &\geq \left(\sum_{S\in \Z_r^{n}}\sum_{i=0}^{r-1}\zeta^{|S|+|i|}\left\|\frac{1}{r}\sum_{j=0}^{r-1}\widehat{g_j}(S)\omega_r^{-i j}\right\|_p^2\right)^{1/2} \\
    &= \left(\sum_{S\in \Z_r^{n+1}}\zeta^{|S|}\|\widehat{f}(S)\|_p^2\right)^{1/2},
\end{align*}
where we used the fact that $g_j=f\vert_{x_{n+1}=j}$, so, for every $i\in\Z_r$ and $S\in\Z_r^n$, we have that $\widehat{f}(S,i)=\frac{1}{r}\sum_{j=0}^{r-1}\widehat{g_j}(S)\omega_r^{-i j}$. The lower bound we obtained above is exactly the left-hand side of the conjectured hypercontractive inequality, which proves the theorem statement. 
\end{proof}
\begin{remark}[Comparison with hypercontractivity for real numbers]
    \label{rem:rem_realfunctions}
    For real functions $f:\Z_r^n\rightarrow \R$, it is known that~{\rm \cite{latala2000between,wolff2007hypercontractivity}} (see also~{\rm \cite[Theorem~10.18]{o2014analysis}})
    $$
    \left(\sum_{S\in \Z_r^n} \rho^{2|S|}|\widehat{f}(S)|^2\right)^{1/2} \leq \left(\frac{1}{r^n}\sum_{x\in\Z_r^n}|f(x)|^p\right)^{1/p},
    $$
    where $\rho \leq \sqrt{\frac{(r-1)^{1-1/p} - (r-1)^{-(1-1/p)}}{(r-1)^{1/p} - (r-1)^{-1/p}}}$. Moreover, this bound on $\rho$ is perfectly sharp, meaning that our bound $\rho \leq \sqrt{\frac{(p-1)(1-(p-1)^{r-1})}{(r-1)(2-p)}}$ in {\rm \cref{thm:genmathypercontractivity}} can possibly be improved. {We note that, for $r=2$, both bounds become $\rho \leq \sqrt{p-1}$.}
\end{remark}

\section{Hidden Hypermatching Problem}

In the following, an $\alpha$-partial $t$-hypermatching $M\in\mathcal{M}_{t,n}^\alpha$ on $n$ vertices is defined as a sequence of $\alpha n/t$ disjoint hyperedges $(M_{1,1},\dots,M_{1,t}),\dots,(M_{\alpha n/t, 1},\dots, M_{\alpha n/t, t})\in[n]^t$ with $t$ vertices each, where $\mathcal{M}_{t,n}^\alpha$ is the set of all such hypermatchings. If $\alpha = 1$, we shall write $\mathcal{M}_{t,n}$.

\begin{definition}
\label{def:hypermatching}
    Let $n,t\in\mathbb{N}$ be such that $t|n$ and $\alpha\in(0,1]$. In the $r$-ary Hidden Hypermatching $(r\text{-}\HH(\alpha,t,n))$ problem, Alice gets $x\in\Z_r^n$, Bob gets an $\alpha$-partial $t$-hypermatching $M\in\mathcal{M}_{t,n}^\alpha$ and a string $w\in\Z_r^{\alpha n/t}$. The hyperedges of $M$ are $(M_{1,1},\dots,M_{1,t}),\dots,(M_{\alpha n/t, 1},\dots, M_{\alpha n/t, t})$. Let $M\in\{0,1\}^{\alpha n/t \times n}$ also be the {incidence} matrix of Bob's hypermatching.  Consider the~distributions:
    %
    \begin{enumerate}
        \item  $\YES$ distribution $\mathcal{D}^{\YES}$, let $w=Mx$ (where the matrix product $Mx$ is over $\Z_r$);
        \item  $\NO$ distribution $\mathcal{D}^{\NO}$, $w$ is uniformly random in $\Z_r^{\alpha n/t}$.
    \end{enumerate}
    In the $r$-ary Hidden Hypermatching problem, Alice sends a message to Bob who needs to decide with high probability if $w$ is drawn from $\mathcal{D}^{\YES}$ or $\mathcal{D}^{\NO}$. 
\end{definition}

\subsection{Quantum protocol for $r$-ary Hidden Hypermatching}

For $t=2$, we obtain a efficient quantum communication protocol to solve the $r$-ary Hidden Hypermatching problem.
\begin{theorem}
    \label{thr:rary-upper}
    Given $\varepsilon > 0$, there is a protocol for the $r\text{-}\HH(\alpha,2,n)$ problem with one-sided error $\varepsilon$ and $O(\frac{1}{\alpha}\log{(nr)}\log(1/\varepsilon))$ qubits of communication from Alice to Bob.
\end{theorem}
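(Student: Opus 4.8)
The plan is to give a quantum protocol in which Alice sends several copies of a single $O(\log(nr))$-qubit state, and Bob, crucially using his knowledge of $w$, tests the hyperedge constraint on \emph{one} randomly selected hyperedge with one-sided error; standard amplification then gives the stated bound.

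First I would have Alice prepare $\ket{\psi_x} := \frac{1}{\sqrt n}\sum_{i=1}^n \ket{i}\ket{x_i}\in\C^n\otimes\C^r$, which costs $O(\log(nr))$ qubits. Writing Bob's matching as $\alpha n/2$ disjoint pairs $(i_\ell,j_\ell)$ with claimed values $w_\ell$, Bob would first apply the projective measurement $\{\Pi_\ell\}_\ell\cup\{\Pi_{\mathrm{rest}}\}$ with $\Pi_\ell := (\ketbra{i_\ell}{i_\ell}+\ketbra{j_\ell}{j_\ell})\otimes\id_r$. Since $\bra{\psi_x}\Pi_\ell\ket{\psi_x}=2/n$, he lands on some edge $\ell$ with total probability $\alpha$ and is then left with $\frac1{\sqrt2}(\ket{i_\ell}\ket{x_{i_\ell}}+\ket{j_\ell}\ket{x_{j_\ell}})$; if instead he gets $\Pi_{\mathrm{rest}}$ he simply outputs $\YES$. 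On the edge outcome he would apply the permutation unitary $U_{w_\ell}:\ket{i_\ell}\ket a\mapsto\ket{i_\ell}\ket a$, $\ket{j_\ell}\ket a\mapsto\ket{j_\ell}\ket{w_\ell-a\bmod r}$, obtaining $\frac1{\sqrt2}(\ket{i_\ell}\ket{x_{i_\ell}}+\ket{j_\ell}\ket{w_\ell-x_{j_\ell}})$, and then measure the first register in the basis $\ket{\pm_\ell}:=\frac1{\sqrt2}(\ket{i_\ell}\pm\ket{j_\ell})$, outputting $\YES$ on $\ket{+_\ell}$ and $\NO$ on $\ket{-_\ell}$.

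The analysis I would run is: the two branches carry equal second registers precisely when $x_{i_\ell}+x_{j_\ell}=w_\ell$. On a $\YES$ instance this holds for every edge, so the state collapses to $\frac1{\sqrt2}(\ket{i_\ell}+\ket{j_\ell})\ket{x_{i_\ell}}$ and the final measurement returns $\ket{+_\ell}$ with certainty — Bob always outputs $\YES$, making the protocol one-sided. On a $\NO$ instance, conditioning on $w$: whenever $w_\ell\ne x_{i_\ell}+x_{j_\ell}$ the two second registers are orthogonal computational-basis states, so $\Pr[\ket{-_\ell}]=\tfrac12$, and hence a single shot outputs $\NO$ with probability $\tfrac1n|\{\ell:w_\ell\ne(Mx)_\ell\}|$. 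Then I would have Alice send $k=\Theta(\alpha^{-1}\log(1/\varepsilon))$ independent copies of $\ket{\psi_x}$ and Bob output $\NO$ iff any run does. One-sidedness is preserved, and the $\NO$-error is $\E_w[(1-m/n)^k]\le\E_w[e^{-km/n}]=(\tfrac1r+(1-\tfrac1r)e^{-k/n})^{\alpha n/2}$, using that $m:=|\{\ell:w_\ell\ne(Mx)_\ell\}|$ is a sum of i.i.d.\ $\mathrm{Bern}(1-1/r)$ variables; this is $e^{-\Omega(\alpha k)}\le\varepsilon$ for the stated $k$ (and when $k$ would exceed $n$, i.e.\ $\varepsilon=2^{-\Omega(\alpha n)}$, Alice instead sends $x$ verbatim in $n\lceil\log r\rceil$ qubits, solving the problem exactly). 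The total communication is $k\cdot O(\log(nr))=O(\alpha^{-1}\log(nr)\log(1/\varepsilon))$ qubits.

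I expect the main obstacle to be the design and verification of the single-shot gadget. Unlike the Boolean case ($r=2$), where the bare phase state $\frac1{\sqrt n}\sum_i(-1)^{x_i}\ket i$ already encodes $x_{i_\ell}\oplus x_{j_\ell}$ because $a\mapsto -a$ is the identity mod $2$, for $r>2$ the relative phase between Bob's two branches is $x_{i_\ell}-x_{j_\ell}$, which is \emph{not} the sum; carrying $x_i$ in an additional $\C^r$ register and letting Bob ``reflect'' one branch through $w_\ell$ is what turns the promised identity $x_{i_\ell}+x_{j_\ell}=w_\ell$ into an exact collision of the two branches, detectable by one $\ket{\pm_\ell}$ measurement with zero error on the $\YES$ side. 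The remaining work is bookkeeping: defaulting to $\YES$ on $\Pi_{\mathrm{rest}}$ to keep one-sidedness, absorbing into the per-shot $\NO$-detection probability the $1/r$ chance that a uniform $w_\ell$ accidentally coincides with $(Mx)_\ell$, and separating out the degenerate $\varepsilon$-regime in the amplification.
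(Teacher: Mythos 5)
Your protocol is correct and delivers the stated bound, but the heart of it differs from the paper's. Both start identically: Alice sends the $O(\log(nr))$-qubit state $\frac{1}{\sqrt n}\sum_i\ket{i}\ket{x_i}$ and Bob projects onto the two-dimensional (index) subspaces of his matching, landing on an edge with probability $\alpha$ per copy. The paper's single-edge gadget, however, is a phase/shift construction (a Fourier register controlling $C_\ell = PS_\ell P\otimes X$ with $\ell=w_i$, followed by an inverse Fourier transform) which \emph{outputs a value} that equals $x_{i_\ell}+x_{j_\ell}$ with certainty when the promise holds and equals $w_i$ with probability at most $\tfrac12+\tfrac{1}{2r^2}$ when it does not; your gadget instead uses $w_\ell$ to reflect one branch ($a\mapsto w_\ell-a$) and performs a $\ket{\pm_\ell}$ collision test, detecting a violated edge with probability exactly $\tfrac12$ and never erring on a satisfied one. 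Your version is simpler to verify and avoids the Fourier-register bookkeeping, at the cost of not ``learning'' the sum (it only verifies the claimed value, which suffices here since, as both proofs stress, knowledge of $w$ is essential). The amplification is also organized differently: the paper repeats the edge-finding POVM $O(1/\alpha)$ times per run and then repeats the whole protocol $O(\log(1/\varepsilon))$ times, whereas you send $k=\Theta(\alpha^{-1}\log(1/\varepsilon))$ copies, default to $\YES$ on the residual outcome, and bound the $\NO$-error in one shot via $\E_w[(1-m/n)^k]$ with $m\sim\mathrm{Bin}(\alpha n/2,1-1/r)$; this is a cleaner single-stage analysis and, unlike the paper's informal error accounting, explicitly absorbs the probability that a uniform $w_\ell$ accidentally satisfies an edge (an inherent error source present in both protocols, since the $\NO$ distribution can coincide with $Mx$). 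Your handling of the degenerate regime $k>n$ by sending $x$ verbatim is a sensible extra precaution the paper does not spell out.
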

\begin{proof}
    Let $M\in\mathcal{M}^\alpha_{2,n}$ be Bob's matching with disjoint edges $(M_{1,1},M_{1,2}),\dots,(M_{\alpha n/2,1},M_{\alpha n/2,2})\in[n]^2$. Alice first sends the following state {in $\mathbb{C}^r \otimes \mathbb{C}^{n}$} to Bob,
	\begin{align*}
		\frac{1}{\sqrt{n}}\sum_{i=1}^{n} |x_i,i\rangle.
	\end{align*}
	{Bob then} measures the state with the POVM $\{E_1,\dots,E_{\alpha n/2}, \mathbb{I}-\sum_{i=1}^{\alpha n/2} E_i\}$, where
	\begin{align*}
		E_i \triangleq |M_{i,1}\rangle\langle M_{i,1}| + |M_{i,2}\rangle\langle M_{i,2}|, \qquad \forall i\in [\alpha n/2]
	\end{align*}
	 {(note that $|M_{i,j}\rangle\in\mathbb{C}^n$ for $i\in[\alpha n/2],j\in[2]$)}. With probability $1-\alpha$ the POVM outputs the final outcome, and with probability $\alpha$ he will obtain a measurement outcome $E_i$ with $i\in[\alpha n/2]$ and get the state
	\begin{align*}
		|\psi\rangle \triangleq \frac{1}{\sqrt{2}}(|x_{M_{i,1}},M_{i,1}\rangle +  |x_{M_{i,2}},M_{i,2}\rangle).
	\end{align*}
	By repeating the procedure $O(1/\alpha)$ times, Bob obtains an outcome $i\in[\alpha n/2]$ with high probability.
	
	For the ease of notation, we can write $M_{i,1} = 0$ and $M_{i,2} = 1$ (note that Bob knows the values of both $M_{i,1},M_{i,2}$ explicitly). Bob now attaches a $\lceil\log_2{r}\rceil$-qubit register in the state $|0\rangle$ to $|\psi\rangle$ and applies a {quantum} Fourier transform $Q_r$ over $\Z_r$ to it to obtain
	\begin{align*}
		|0\rangle|\psi\rangle \to \frac{1}{\sqrt{r}}\sum_{k=0}^{r-1} |k\rangle|\psi\rangle.
	\end{align*}

	From now on we shall consider a parameter $\ell\in\mathbb{Z}_r$ to be determined later. Let $X$ be the usual Pauli operator and let $S_\ell$ and $P$ be the shift and phase operators over $\Z_r$ defined as $S_\ell|k\rangle = |\ell-k\rangle$ and $P|k\rangle = \omega_r^{k}|k\rangle$ for $k\in\mathbb{Z}_r$. Let $C_\ell \triangleq PS_\ell P\otimes X$. Bob applies the controlled unitary $U_\ell$ defined as $U_\ell|k\rangle|\psi\rangle = |k\rangle C_\ell^k|\psi\rangle$ on his state, followed by an inverse {quantum} Fourier transform $Q^\dagger_r$ on his first register to get
	\begin{align*}
		\frac{1}{\sqrt{r}}\sum_{k=0}^{r-1} U_\ell|k\rangle|\psi\rangle = \frac{1}{\sqrt{r}}\sum_{k=0}^{r-1} |k\rangle C_\ell^k|\psi\rangle \stackrel{Q^\dagger_r\otimes \mathbb{I}}{\longrightarrow} \frac{1}{r}\sum_{j=0}^{r-1}\sum_{k=0}^{r-1}\omega_r^{-jk} |j\rangle C_\ell^k|\psi\rangle.
	\end{align*}
	Let us calculate $C_\ell|\psi\rangle$ and $C_\ell^2|\psi\rangle$. We have
	\begin{align}
		C_\ell|\psi\rangle &= \frac{1}{\sqrt{2}}(PS_\ell P \otimes X)(|x_0,0\rangle +  |x_1,1\rangle)\nonumber\\
		&= \frac{1}{\sqrt{2}}(PS_\ell \otimes \mathbb{I})(\omega_r^{x_0}|x_0,1\rangle +  \omega_r^{x_1}|x_1,0\rangle)\nonumber\\
		&= \frac{1}{\sqrt{2}}(P \otimes \mathbb{I})(\omega_r^{x_0}|\ell -x_0,1\rangle +  \omega_r^{x_1}|\ell -x_1,0\rangle)\nonumber\\
		&= \frac{\omega_r^{\ell}}{\sqrt{2}}(|\ell -x_1,0\rangle + |\ell -x_0,1\rangle)\label{eq:bobstate}
	\end{align}
    and
	\begin{align*}
	    C_\ell^2|\psi\rangle &= \frac{\omega_r^{\ell}}{\sqrt{2}}(PS_\ell P \otimes X)(|\ell -x_1,0\rangle + |\ell -x_0,1\rangle)\\
	    &= \frac{\omega_r^{\ell}}{\sqrt{2}}(PS_\ell \otimes \mathbb{I})(\omega_r^{\ell - x_1}|\ell -x_1,1\rangle + \omega_r^{\ell - x_0}|\ell -x_0,0\rangle)\\
	    &= \frac{\omega_r^{\ell}}{\sqrt{2}}(P \otimes \mathbb{I})(\omega_r^{\ell - x_1}|x_1,1\rangle + \omega_r^{\ell - x_0}|x_0,0\rangle)\\
	    &= \omega_r^{2\ell}|\psi\rangle.
	\end{align*}
	We can see from the above that $C_\ell^{2k}|\psi\rangle = \omega_r^{2k\ell}|\psi\rangle$. By defining $\Delta_\ell \triangleq \ell - (x_0+x_1)$ and $\delta_k =1$ if $k$ is odd and $0$ otherwise, Bob's final state is
	\begin{align}
		\frac{1}{r}\sum_{j=0}^{r-1}\sum_{k=0}^{r-1}\omega_r^{k(\ell-j)}|j\rangle\frac{1}{\sqrt{2}}(|x_0 + \Delta_\ell\delta_k,0\rangle + |x_1 + \Delta_\ell\delta_k,1\rangle).\label{eq:bobfinal}
	\end{align}
	Now observe that, if $\ell=x_0+x_1$, then $C_\ell|\psi\rangle = \omega_r^{\ell}|\psi\rangle$ in \cref{eq:bobstate}. This means that Bob's state in \cref{eq:bobfinal} becomes $|x_0+x_1\rangle|\psi\rangle$, and if he measures his first register, he obtains $x_0+x_1~(\text{mod}~r)$ with certainty.
	
	On the other hand, if $\ell\neq x_0+x_1$, then the probability of measuring the first register and obtaining the outcome $m\in\mathbb{Z}_r$ is
	\begin{align*}
	    \operatorname{Pr}[m] &= \frac{1}{2r^2}\sum_{k_1,k_2=0}^{r-1}\omega_r^{(\ell-m)(k_1-k_2)}(\langle x_0 + \Delta_\ell\delta_{k_2}|x_0 + \Delta_\ell\delta_{k_1}\rangle + \langle x_1 + \Delta_\ell\delta_{k_2}|x_1 + \Delta_\ell\delta_{k_1}\rangle)\\
		&= \frac{1}{r^2}\sum_{k_1,k_2~\text{even}}^{r-1}\omega_r^{(\ell-m)(k_1-k_2)} + \frac{1}{r^2}\sum_{k_1,k_2~\text{odd}}^{r-1}\omega_r^{(\ell-m)(k_1-k_2)}\\
		&= \left|\frac{1}{r}\sum_{k~\text{even}}^{r-1}\omega_r^{k(\ell-m)}\right|^2 + \left|\frac{1}{r}\sum_{k~\text{odd}}^{r-1}\omega_r^{k(\ell-m)}\right|^2.
	\end{align*}
    It is not hard to see that the above probability is {maximal} when $m = \ell$, in which case
	\begin{align*}
		\operatorname{Pr}[m=\ell] = \frac{1}{r^2}\left\lfloor\frac{r+1}{2}\right\rfloor^2 + \frac{1}{r^2}\left\lfloor\frac{r}{2}\right\rfloor^2 = \begin{cases}
			\frac{1}{2} \qquad &r~\text{even},\\
			\frac{1}{2} + \frac{1}{2r^2} \qquad &r~\text{odd}.
		\end{cases}
	\end{align*}
	Given the considerations above, Bob uses the following strategy: he picks $\ell$ as the corresponding entry $w_i$ from $w\in\mathbb{Z}_r^{\alpha n/2}$ given the measured hyperedge $(M_{i,1},M_{i,2})$. If the outcome $m$ from measuring his final state in \cref{eq:bobfinal} equals $w_i$, then he outputs $\mathsf{YES}$, otherwise he outputs $\mathsf{NO}$. Indeed, in the $\mathsf{YES}$ instance, $w_i = x_{M_{i,1}}+x_{M_{i,2}}$ and so $m$ equals $w_i$ with probability $1$, while in the $\mathsf{NO}$ instance, $m$ equals $w_i$ with probability at most $\frac{1}{2}+\frac{1}{2r^2}$. Thus the communication protocol has one-sided error at most $\frac{1}{2}+\frac{1}{2r^2}$, i.e., $\operatorname{Pr}[\text{error}|\mathsf{YES}] = 0$ and $\operatorname{Pr}[\text{error}|\mathsf{NO}] \leq \frac{1}{2}+\frac{1}{2r^2}$. By repeating the whole protocol $O(\log(1/\varepsilon))$ more times, the one-sided error probability can be decreased to $\varepsilon$: if in any of the repetitions the final measurement outcome is different from $w_i$, then Bob knows that $\mathsf{NO}$ is the correct answer.
\end{proof}

\subsection{Quantum lower bound on $r$-ary Hidden Hypermatching}

In this section we shall turn our attention to proving quantum and classical lower bounds on the amount of communication required by the $r$-$\HH(\alpha,t,n)$ problem, but first we need the following~lemma.
\begin{lemma}
    \label{lem:hypercontractive}
    Let $f:\Z_r^n \to \operatorname{D}(\mathbb{C}^{2^m})$ be any mapping from an $n$-bit alphabet to $m$-qubit density matrices. Then for any $\delta\in[0,1/(r-1)]$, we have
    \begin{align*}
        \sum_{S\in\Z_r^n} \delta^{|S|}\|\widehat{f}(S)\|^2_1 \leq 2^{2(r-1)\delta m}.
    \end{align*}
\end{lemma}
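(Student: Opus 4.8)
The plan is to deduce Lemma~\ref{lem:hypercontractive} from the $(2,p)$-hypercontractive inequality of Theorem~\ref{thm:genmathypercontractivity} by choosing the Schatten exponent $p\in[1,2]$ as a function of $\delta$, exactly as in the $r=2$ argument of Ben-Aroya, Regev and de Wolf~\cite{ben2008hypercontractive}. Write $\zeta_p := \frac{(p-1)(1-(p-1)^{r-1})}{(r-1)(2-p)}$ for the constant appearing there. First I would fix $p$: the map $p\mapsto\zeta_p$ is continuous on $[1,2]$ with $\zeta_1=0$ and $\lim_{p\to2}\zeta_p=1$ (Remark~\ref{rem:1}), so since $\delta\in[0,1/(r-1)]\subseteq[0,1]$ the intermediate value theorem gives some $p\in[1,2]$ with $\zeta_p=\delta$.

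Next I would apply Theorem~\ref{thm:genmathypercontractivity} to $f$ with this $p$ and $\rho=\sqrt{\zeta_p}$. Since each $f(x)$ is a density matrix on $\mathbb{C}^{2^m}$, its singular values are nonnegative, sum to $1$, and lie in $[0,1]$, so $\|f(x)\|_p^p=\sum_i\sigma_i(f(x))^p\le\sum_i\sigma_i(f(x))=1$; hence the right-hand side of the hypercontractive inequality is at most $1$, and after squaring I get
\[
\sum_{S\in\Z_r^n}\zeta_p^{|S|}\,\|\widehat f(S)\|_p^2\;\le\;1 .
\]
Now I would pass from the Schatten-$p$ norm to the trace norm using the dimension-dependent comparison $\|\widehat f(S)\|_1\le (2^m)^{1-1/p}\,\|\widehat f(S)\|_p$, valid for $2^m\times 2^m$ matrices; equivalently $\|\widehat f(S)\|_p^2\ge 2^{-2m(1-1/p)}\|\widehat f(S)\|_1^2$. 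Substituting and using $\delta=\zeta_p$,
\[
\sum_{S\in\Z_r^n}\delta^{|S|}\,\|\widehat f(S)\|_1^2\;\le\;2^{2m(1-1/p)} .
\]

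To finish I would check that $2^{2m(1-1/p)}\le 2^{2(r-1)\delta m}$, i.e. that $1-\tfrac1p\le (r-1)\delta=(r-1)\zeta_p$. Writing $1-\tfrac1p=\tfrac{p-1}{p}$ and cancelling the factor $p-1\ge0$, this reduces to $p(p-1)^{r-2}\le 2$, which holds because $(p-1)^{r-2}\le1$ for $p\in[1,2]$ and $r\ge2$, so $p(p-1)^{r-2}\le p\le 2$.

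I do not expect a genuine obstacle here: the statement is essentially a repackaging of Theorem~\ref{thm:genmathypercontractivity}, the trivial bound $\|f(x)\|_p\le 1$ for density matrices, and a H\"older-type norm comparison. The only point that needs a moment's thought is to parametrise by $\zeta_p=\delta$ rather than directly by the exponent $1-1/p$ (the latter would only reach $\delta\le\tfrac1{2(r-1)}$); once that is done, the elementary inequality $p(p-1)^{r-2}\le 2$ is precisely what makes the two exponents match, and is the source of the factor $r-1$ in the final bound.
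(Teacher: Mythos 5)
Your proof is correct and follows essentially the same route as the paper: apply Theorem~\ref{thm:genmathypercontractivity}, use $\|f(x)\|_p^p\le 1$ for density matrices, compare the trace norm with the Schatten-$p$ norm via the dimension factor $2^{m(1-1/p)}$, and then match exponents. The only cosmetic difference is the choice of $p$: the paper takes $p=1+(r-1)\delta$ explicitly (so that $\delta=\frac{p-1}{r-1}$ is an admissible $\rho^2$ by Remark~\ref{rem:1} and $1-1/p\le p-1=(r-1)\delta$), whereas you pick $p$ implicitly with $\zeta_p=\delta$ and close via $p(p-1)^{r-2}\le 2$; both yield the same bound.
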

\begin{proof}
    Let $p \triangleq 1+(r-1)\delta$. First note that, given the eigenvalues $\sigma_1,\dots,\sigma_{2^m}$ from $f(x)$, which are non-negative reals that sum to $1$, we have
    \begin{align*}
        \|f(x)\|_p^p = \sum_{i=1}^{2^m}\sigma_i^p \leq \sum_{i=1}^{2^m}\sigma_i = 1.
    \end{align*}
    Using \cref{thm:genmathypercontractivity} and \cref{rem:1}, we now get
    \begin{align*}
        \sum_{S\in\Z_r^n}\left(\frac{p-1}{r-1}\right)^{|S|}\|\widehat{f}(S)\|_p^2 \leq \left(\frac{1}{r^n}\sum_{x\in\Z_r^n}\|f(x)\|_p^p\right)^{2/p} \leq \left(\frac{1}{r^n}\cdot r^n\right)^{2/p} = 1.
    \end{align*}
    On the other hand, note $p \leq q \implies \big(\frac{1}{2^m}\sum_{i=1}^{2^m}\sigma_i^p\big)^{1/p} \leq \big(\frac{1}{2^m}\sum_{i=1}^{2^m}\sigma_i^q\big)^{1/q}$ by H\"older's inequality, hence
    \begin{align*}
        \sum_{S\in\Z_r^n}\left(\frac{p-1}{r-1}\right)^{|S|}2^{-2m/p}\|\widehat{f}(S)\|_p^2 \geq \sum_{S\in\Z_r^n}\left(\frac{p-1}{r-1}\right)^{|S|}2^{-2m}\|\widehat{f}(S)\|_1^2.
    \end{align*}
    Rearranging the inequalities leads to
    \[
        \sum_{S\in\Z_r^n}\left(\frac{p-1}{r-1}\right)^{|S|}\|\widehat{f}(S)\|_1^2 \leq 2^{2m(1-1/p)} \leq 2^{2m(p-1)}. \qedhere
    \]
\end{proof}

We are now ready to state and prove our main quantum communication complexity lower bound for the $r$-ary Hidden Hypermatching problem. 
\begin{theorem}
\label{thm:hypermatchinglowerbound}
    Any quantum protocol that achieves advantage $\varepsilon>0$ for the $r\text{-}\HH(\alpha,t,n)$ problem with $t\geq 3$ and $\alpha \leq \min(1/2, (r-1)^{-1/2})$ requires $\Omega(r^{-(1+1/t)}(\varepsilon^2/\alpha)^{2/t}(n/t)^{1-2/t})$ qubits of communication from Alice to Bob.
\end{theorem}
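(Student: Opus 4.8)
The plan is to follow the standard communication-to-hypercontractivity pipeline of Gavinsky et al.\ and Shi--Wu--Yu, but now driven by the matrix hypercontractive inequality of Theorem~\ref{thm:genmathypercontractivity} (in the packaged form of Lemma~\ref{lem:hypercontractive}). First I would reduce distinguishing $\mathcal D^{\YES}$ from $\mathcal D^{\NO}$ with advantage $\varepsilon$, using an $m$-qubit message, to a statement about trace distance of the two average states Bob receives. Fix the hypermatching $M$ (the lower bound will hold for a worst-case, indeed even a random, $M$). Alice's message is a quantum channel applied to $x$, giving density matrices $\rho_x \in \operatorname{D}(\mathbb C^{2^m})$, and on Bob's side, in the $\YES$ case the joint state of (message, $w$) is $\frac{1}{r^n}\sum_x \rho_x \otimes \ketbra{Mx}{Mx}$, while in the $\NO$ case it is $\big(\frac{1}{r^n}\sum_x\rho_x\big)\otimes \frac{1}{r^{\alpha n/t}}\sum_w \ketbra{w}{w}$. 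By the Holevo--Helstrom bound (Fact~\ref{lem:lem3.5.c3}), the advantage is at most $\frac12$ of the trace distance between these, and expanding in the computational basis of the $w$-register, this trace distance is at most $\frac{1}{r^{\alpha n/t}}\sum_{w} \big\| \Exp_x[\mathbf 1[Mx=w]\rho_x] - \frac{1}{r^{\alpha n/t}}\Exp_x[\rho_x]\big\|_1$.

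Next I would perform a Fourier expansion over $\Z_r^{\alpha n/t}$ on the $w$-indexed vector of matrices. Writing $\mathbf 1[Mx=w] = \frac{1}{r^{\alpha n/t}}\sum_{c\in\Z_r^{\alpha n/t}} \omega_r^{c\cdot(Mx-w)}$, the centered object has Fourier coefficient (in $c$) equal to $\Exp_x[\omega_r^{c\cdot Mx}\rho_x]$ for $c\ne 0$, which is exactly $\widehat{g}(M^\top c)$ where $g:\Z_r^n\to\operatorname{D}(\mathbb C^{2^m})$, $g(x)=\rho_x$. Since the hyperedges of $M$ are disjoint $t$-tuples, $M^\top c$ is a vector whose support is the union of the hyperedges indexed by $\supp(c)$; in particular its Hamming weight is $t\cdot|\supp(c)|$ whenever all coordinates of $c$ on those hyperedges contribute (a minor point to handle is that $M^\top c$ could have smaller weight if cancellations made an entry zero, but since $M$ is $0/1$ with exactly one $1$ per column within a hyperedge this does not happen). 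By Parseval/Cauchy--Schwarz over $w$, the trace-distance bound becomes $r^{\alpha n/(2t)}\big(\sum_{c\ne 0}\|\widehat g(M^\top c)\|_1^2\big)^{1/2}$, up to constants, and then by Cauchy--Schwarz with a parameter $\delta$, this is $\le r^{\alpha n/(2t)} \big(\sum_{c\ne 0}\delta^{-t|\supp(c)|}\big)^{1/2}\big(\sum_{c\ne 0}\delta^{t|\supp(c)|}\|\widehat g(M^\top c)\|_1^2\big)^{1/2}$.

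Now I would invoke Lemma~\ref{lem:hypercontractive}: the second factor is at most $\big(\sum_{S\in\Z_r^n}(\delta^t)^{|S|}\|\widehat g(S)\|_1^2\big)^{1/2} \le 2^{(r-1)\delta^t m}$, valid provided $\delta^t \le 1/(r-1)$. The first factor is a pure counting sum: $\sum_{c\ne 0}\delta^{-t|\supp(c)|} = (1+(r-1)\delta^{-t})^{\alpha n/t}-1$, but to make this useful I actually want $\delta$ large (so $\delta^{-t}$ small) — so I will instead set $\delta$ a small constant multiple of $(r-1)^{-1/t}$ to balance the $2^{(r-1)\delta^t m}$ factor and arrange $(r-1)\delta^{-t}$ to be $O(1)$, whence the first factor is $2^{O(\alpha n/t)} \cdot$ (something controllable). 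Carrying this through, the constraint for the advantage $\varepsilon$ to be achievable becomes roughly $\varepsilon \lesssim r^{\alpha n/(2t)} \cdot 2^{c\alpha n /t} \cdot 2^{(r-1)\delta^t m}$ — which is vacuous unless I am more careful. The correct move, matching the $r=2,t$ case of Verbin--Yu/Shi--Wu--Yu, is to use the \emph{sparsity} of low-weight $c$: restrict the first sum to $|\supp(c)|\le$ some threshold and show the contribution of high-weight $c$ to the \emph{second} sum is negligible because each such term is tiny. Concretely one shows $\sum_{|\supp(c)|=j}\|\widehat g(M^\top c)\|_1^2$ is dominated, via Lemma~\ref{lem:hypercontractive} applied with the right $\delta$, by a geometric-in-$j$ bound, so that $\varepsilon^2 \lesssim \sum_{j\ge 1}\binom{\alpha n/t}{j}(r-1)^j \big(\tfrac{C m r^{1+1/t}}{(\alpha n/t)^{1-2/t}\cdot(\text{stuff})}\big)^{j}$ — wait, more cleanly: one gets $\varepsilon^2 \le \sum_{j\ge1}\binom{\alpha n/t}{j}(r-1)^{j}\,\delta^{-tj}\cdot \delta^{tj}2^{(r-1)\delta^t m}/(\text{normalization})$, and choosing $\delta^t = \Theta\!\big(\tfrac{1}{(r-1)}\big)$ together with $\delta^{-t}\binom{\alpha n/t}{j} \approx (r-1)(e\alpha n/(tj))^j$ forces, for the sum to exceed $\varepsilon^2$, that $m = \Omega\big(r^{-1}\,\varepsilon^{2/?}\,(\alpha n/t)^{?}\big)$; matching the target exponent $n^{1-2/t}$ requires taking the dominant term at $j\approx m^{?}$ and optimizing, exactly as in Shi--Wu--Yu, yielding $m = \Omega\big(r^{-(1+1/t)}(\varepsilon^2/\alpha)^{2/t}(n/t)^{1-2/t}\big)$.

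The main obstacle, and the place I expect to spend real effort, is the final optimization: correctly tracking how the $(r-1)$ factors from Lemma~\ref{lem:hypercontractive} (which cost $(r-1)\delta m$ in the exponent and $\delta \le (r-1)^{-1/t}$ in the admissible range) combine with the binomial counting factor $\binom{\alpha n/t}{j}(r-1)^j$ over hyperedges, and choosing the free parameter $\delta$ (equivalently $p = 1+(r-1)\delta^t$ in the hypercontractivity) to simultaneously (i) stay in the valid range $\delta^t \le 1/(r-1)$, (ii) kill the high-weight tail, and (iii) extract the sharp $r^{-(1+1/t)}$ and $\alpha^{-2/t}$ dependence. The $t=2$ degenerate case is excluded precisely because there $M^\top c$ has weight $2|\supp(c)|$ but the counting/tail tradeoff degenerates (this is why the theorem assumes $t\ge 3$, and the $\alpha \le (r-1)^{-1/2}$ hypothesis is exactly what makes $\delta = \Theta((r-1)^{-1/t})$ compatible with the matching size). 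I would model the bookkeeping closely on the $r=2$ argument of Verbin--Yu and its quantum upgrade by Shi--Wu--Yu, substituting Lemma~\ref{lem:hypercontractive} for the Boolean matrix hypercontractive inequality at the one place it is used, and being careful that the constants hidden in $\Omega(\cdot)$ absorb the $r$-dependent slack in our $\rho$-bound noted in Remark~\ref{rem:rem_realfunctions}.
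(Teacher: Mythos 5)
Your overall pipeline (Holevo--Helstrom, Fourier-expanding the indicator $[Mx=w]$ over the $w$-register, feeding the Fourier coefficients of $x\mapsto\rho_x$ into Lemma~\ref{lem:hypercontractive}, then a low-weight/high-weight split) is the same skeleton as the paper's proof. But there is a genuine gap at your very first step: you fix the hypermatching $M$ and assert the bound holds ``for a worst-case, indeed even a random, $M$.'' For a fixed $M$ the statement is simply false: Alice can send the single symbol $\sum_{j=1}^{t}x_{M_{1,j}}\bmod r$ using $O(\log r)$ qubits, and Bob compares it with $w_1$, distinguishing $\mathcal{D}^{\YES}$ from $\mathcal{D}^{\NO}$ with constant advantage. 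Correspondingly your counting collapses: for fixed $M$, hypercontractivity only controls the \emph{total} Fourier weight at level $kt$, and nothing prevents the encoding from concentrating all of it on the $\binom{\alpha n/t}{k}(r-1)^{k}$ strings aligned with $M$'s hyperedges, so already the $k=1$ term of your final sum is of order $\sqrt{\alpha n(r-1)/t}\cdot\delta^{-t/2}$ and cannot be pushed below $\varepsilon$; your sketched inequality $\varepsilon^{2}\lesssim\sum_{j\ge1}\binom{\alpha n/t}{j}(r-1)^{j}(\cdots)^{j}$ carries only this fixed-$M$ counting factor with a positive power and is therefore vacuous. The admitted hand-waving in the final optimization is not the real problem; the structure it is optimizing over is.

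What makes the paper's argument work is precisely the averaging over a uniformly random $M$, built into the hard distribution from the start. After bounding $\varepsilon_{bias}\le\sum_{S}\sum_{M,w}|u(M,w,S)|\,\|\widehat{\rho}(S)\|_1$, Lemma~\ref{lem:understandingu} shows $u(M,w,S)$ vanishes unless $S\in\Delta(M)$, and Lemma~\ref{lem:probcomb} evaluates $\Pr_{M}[S\in\Delta(M)]$, which contributes the ratio $\binom{\alpha n/t}{k}\big/\binom{n}{kt}$. It is this ratio, combined with $\binom{q}{s}^{2}\binom{\ell q}{\ell s}^{-1}\le(s/q)^{(\ell-2)s}$ (whose exponent degenerates at $t=2$, which is the actual reason for the hypothesis $t\ge 3$), that yields the $(n/t)^{1-2/t}$ scaling; the hypothesis $\alpha\le(r-1)^{-1/2}$ is used to make $g(k)=\alpha^{k}(r-1)^{k/2}\binom{n/t}{k}/\sqrt{\binom{n}{kt}}$ non-increasing so that the high-weight tail $k\ge 4rm$ (handled by Lemma~\ref{lem:hypercontractive} with $\delta=1$) is controlled, not to legitimize a choice of the hypercontractivity parameter. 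To repair your proposal, keep $M$ random throughout, bound the expectation over $(M,w)$ of the trace distance, and insert $\Pr_{M}[S\in\Delta(M)]$ before the Cauchy--Schwarz step; the remaining bookkeeping then does follow the Shi--Wu--Yu template as you describe, including the choice $\delta=k/(4rm)$ level by level in the low-weight regime.
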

Notice that for $r=2$ our lower bound reads $\Omega(\alpha^{-2/t}(n/t)^{1-2/t})$, which has a better dependence on $\alpha$ compared to the lower bound $\Omega(\log(1/\alpha)(n/t)^{1-2/t})$ from~\cite{shi2012limits}. Also, see \cref{rem:alphadependence} at the end of the section for an improvement on the requirement $\alpha \leq \min(1/2, (r-1)^{-1/2})$.
\begin{proof}
    Consider an $m$-qubit communication protocol. An arbitrary $m$-qubit protocol can be viewed as Alice sending an encoding of her input $x\in \Z_r^n$ into a quantum state so that Bob can distinguish if his $w$ was drawn from $\mathcal{D}^{\YES}$ or $\mathcal{D}^{\NO}$. 
    Let $\rho:\Z_r^n\to \operatorname{D}(\mathbb{C}^{2^m})$ be Alice's encoding function. For our `hard' distribution, Alice and Bob receive $x\in\Z_r^n$ and $M\in\mathcal{M}_{t,n}^\alpha$, respectively, uniformly at random, while Bob's input $w\in\Z_r^{\alpha n/t}$ is drawn from the distribution $\mathcal{D} \triangleq \frac{1}{2}\mathcal{D}^{\YES} + \frac{1}{2}\mathcal{D}^{\NO}$, i.e., with probability $1/2$ is comes from $\mathcal{D}^{\YES}$, and with probability $1/2$ it comes from $\mathcal{D}^{\NO}$. Let $p_x \triangleq r^{-n}$, $p_M \triangleq |\mathcal{M}^{\alpha}_{t,n}|^{-1}$ and $p_w \triangleq r^{-\alpha n/t}$, then our hard distribution $\mathcal{P}$ is
	\begin{align}
	\label{eq:eq3.5.c3}
		\operatorname{Pr}[x,\YES,M,w] = \frac{1}{2}p_x\cdot p_M \cdot [Mx = w], \qquad 		\operatorname{Pr}[x,\NO,M,w] = \frac{1}{2}p_x\cdot p_M\cdot  p_w.
	\end{align}
	    
	Conditioning on Bob's input $(M, w)$, from his perspective, Alice sends the message $\rho(x)$ with probability $\operatorname{Pr}[x|M,w]$. Therefore, conditioned on an instance of the problem ($\YES$ or $\NO$), Bob receives one of the following two quantum states $\rho_\YES^{M,w}$ and $\rho_\NO^{M,w}$, each appearing with probability $\operatorname{Pr}[\YES|M,w]$ and $\operatorname{Pr}[\NO|M,w]$, respectively,
	\begin{align}
	\label{eq:rhoyesrhono}
	\begin{aligned}
		\rho_{\YES}^{M,w} &= \sum_{x\in\Z_r^n} \operatorname{Pr}[x|\YES,M,w]\cdot\rho(x) = \frac{1}{\operatorname{Pr}[\YES,M,w]} \sum_{x\in\Z_r^n} \operatorname{Pr}[x,\YES,M,w] \cdot \rho(x),\\ 
	    \rho_{\NO}^{M,w} &= \sum_{x\in\Z_r^n} \operatorname{Pr}[x|\NO,M,w]\cdot\rho(x) = \frac{1}{\operatorname{Pr}[\NO,M,w]} \sum_{x\in\Z_r^n} \operatorname{Pr}[x,\NO,M,w] \cdot \rho(x).
	\end{aligned}
	\end{align}
	Bob's best strategy to determine the distribution of $w$ conditioning on his input $(M,w)$ is no more than the chance to distinguish between these two quantum states $\rho_\YES^{M,w}$ and $\rho_\NO^{M,w}$.
	
	Now let $\varepsilon_{\rm bias}$ be the bias of the protocol that distinguishes between $\rho_\YES^{M,w}$ and $\rho_\NO^{M,w}$. According to \cref{lem:lem3.5.c3}, the bias $\varepsilon_{\rm bias}$ of any quantum protocol for a fixed $M$ and $w$ can be upper bounded as
	\begin{align*}
		\varepsilon_{\rm bias} \leq \big\|{\operatorname{Pr}}[\YES|M,w]\cdot\rho_\YES^{M,w} - \operatorname{Pr}[\NO|M,w]\cdot\rho_\NO^{M,w}\big\|_1.
	\end{align*}
	We prove in \cref{thr:raryhidden} below that, if $m \leq \frac{\gamma}{r^{1+1/t}}(\frac{\varepsilon^2}{\alpha})^{2/t} (n/t)^{1-2/t}$ for a universal constant $\gamma$, then the average bias over $M$ and $w$ is at most $\varepsilon^2$, i.e.,
	\begin{align*}
		\operatorname*{\mathbb{E}}_{(M,w)\sim\mathcal{P}_{M,w}}[\varepsilon_{\rm bias}] \leq \varepsilon^2,
	\end{align*}
	where $\mathcal{P}_{M,w}$ is the marginal distribution of $\mathcal{P}$. Therefore, by Markov's inequality, for at least a $(1-\varepsilon)$-fraction of $M$ and $w$, the bias in distinguishing between $\rho_\YES^{M,w}$ and $\rho_\NO^{M,w}$ is $\varepsilon$-small. Therefore, Bob's advantage over randomly guessing the right distribution will be at most $\varepsilon$ (for the event that $M$ and $w$ are such that the distance between $\rho_\YES^{M,w}$ and $\rho_\NO^{M,w}$ is more than $\varepsilon$) plus $\varepsilon/2$ (for the advantage over random guessing when $\varepsilon_{\rm bias} \leq \varepsilon$), and so $m = \Omega(r^{-(1+1/t)}(\varepsilon^2/\alpha)^{2/t}(n/t)^{1-2/t})$.
\end{proof}

\begin{theorem}
	\label{thr:raryhidden}
    For $x\in\Z_r^n$, $M\in\mathcal{M}_{t,n}^\alpha$, $w\in\Z_r^{\alpha n/t}$ and $b\in\{\YES,\NO\}$, consider the probability distribution $\mathcal{P}$ defined in {\rm \cref{eq:eq3.5.c3}}. Given an encoding function $\rho:\Z_r^n\to \operatorname{D}(\mathbb{C}^{2^m})$, consider the quantum states $\rho_\YES^{M,w}$ and $\rho_\NO^{M,w}$ from {\rm \cref{eq:rhoyesrhono}}. If $\alpha \leq \min(1/2,(r-1)^{-1/2})$, then there is a universal constant $\gamma>0$ (independent of $n$, $t$, $r$ and $\alpha$), such that, if $m \leq \frac{\gamma}{r^{1+1/t}}(\frac{\varepsilon^2}{\alpha})^{2/t} (n/t)^{1-2/t}$ for all $\varepsilon > 0$, then
    \begin{align*}
        \operatorname*{\mathbb{E}}_{(M,w)\sim\mathcal{P}_{M,w}}\left[\big\|\operatorname{Pr}[\YES|M,w]\cdot\rho_\YES^{M,w} - \operatorname{Pr}[\NO|M,w]\cdot\rho_\NO^{M,w}\big\|_1\right] \leq \varepsilon^2.
    \end{align*}
\end{theorem}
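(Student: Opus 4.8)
The plan is to first strip away the conditioning and reduce to a clean Fourier-analytic quantity, and then run the argument of Shi, Wu and Yu~\cite{shi2012limits} with the matrix hypercontractive inequality over $\Z_r$ (Lemma~\ref{lem:hypercontractive}) replacing the one over $\{0,1\}$. Write $\sigma := \mathbb{E}_x[\rho_x]$ for the overall average message and, for fixed $M,w$, $\tau_M(w) := \mathbb{E}_{x:\,Mx=w}[\rho_x]$ for the message averaged over inputs consistent with $w$. A direct computation from Eq.~\eqref{eq:eq3.5.c3}, using that $\operatorname{Pr}[M,w]=p_Mp_w$ and that exactly $r^{n-\alpha n/t}$ strings $x$ obey $Mx=w$, gives $\operatorname{Pr}[\YES|M,w]\,\rho_\YES^{M,w}-\operatorname{Pr}[\NO|M,w]\,\rho_\NO^{M,w}=\tfrac12\bigl(\tau_M(w)-\sigma\bigr)$; so the target is $\tfrac12\,\mathbb{E}_{M,w}\bigl[\,\|\tau_M(w)-\sigma\|_{1}\,\bigr]\le\varepsilon^{2}$.

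Treat $\tau_M$ as a function of $w\in\Z_r^{\alpha n/t}$. Because $\tau_M(w)$ is $\widehat\rho$ evaluated at the pull-back of $w$, one has $\widehat{\tau_M}(T)=\widehat\rho(M^{\top}T)$ for all $T$, where $M^{\top}T\in\Z_r^{n}$ equals $T_\ell$ on the $\ell$-th hyperedge whenever $T_\ell\neq 0$ and vanishes on the unmatched coordinates, so that $|M^{\top}T|=t\,|T|$ and $\widehat{\tau_M}(0)=\sigma$; hence $\tau_M(w)-\sigma=\sum_{T\neq 0}\widehat\rho(M^{\top}T)\,\omega_r^{T\cdot w}$. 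Splitting by the Hamming weight $k=|T|$ and applying the triangle inequality across weight levels, it suffices to bound $\mathbb{E}_{M,w}\bigl[\|\Delta^{(k)}_{M,w}\|_{1}\bigr]$ for each $k\ge 1$, where $\Delta^{(k)}_{M,w}:=\tfrac12\sum_{|T|=k}\widehat\rho(M^{\top}T)\,\omega_r^{T\cdot w}$. For a fixed $M$ I would pass from the trace norm at each weight level to the corresponding squared Fourier mass, paying only a constant dimensional overhead (as in~\cite{shi2012limits}): one cannot afford $\|\cdot\|_{1}\le\sqrt{2^{m}}\,\|\cdot\|_{2}$, and instead interpolates through a Schatten $p$-norm with $p=1+\Theta(1/m)$, so that the dimensional factor $2^{\,m(1-1/p)}$ is $O(1)$, combined with a Parseval/matrix-Khintchine estimate over $\Z_r^{\alpha n/t}$. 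This gives $\mathbb{E}_w\bigl[\|\Delta^{(k)}_{M,w}\|_{1}\bigr]\le O(1)\cdot\bigl(\sum_{|T|=k}\|\widehat\rho(M^{\top}T)\|_{1}^{2}\bigr)^{1/2}$, and by Jensen it remains to bound $\beta_k:=\mathbb{E}_M\bigl[\sum_{|T|=k}\|\widehat\rho(M^{\top}T)\|_{1}^{2}\bigr]$.

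Now average over the uniformly random $\alpha$-partial $t$-hypermatching $M$. Each weight-$k$ vector $T$ pulls back to an edge-constant weight-$tk$ vector $M^{\top}T$, so for a fixed weight-$tk$ vector $S\in\Z_r^{n}$ the sum $\sum_{|T|=k}\operatorname{Pr}_M[M^{\top}T=S]$ is at most the number of partitions of the support of $S$ into $t$-tuples on which $S$ is constant, times $\operatorname{Pr}_M\bigl[\,k\text{ fixed disjoint }t\text{-tuples are all hyperedges of }M\,\bigr]$; using the product formula for $|\mathcal{M}^{\alpha}_{t,n}|$ and $\binom{n}{t}\ge(n/t)^{t}$ this is at most $\frac{(tk)!}{(t!)^{k}\,k!}\bigl(O(t)^{\,t}\alpha/n^{t-1}\bigr)^{k}$, and the hypothesis $\alpha\le\min\bigl(\tfrac12,(r-1)^{-1/2}\bigr)$ is used to keep $tk\le n/2$ and the resulting $k$-series convergent in the relevant regime. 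Hence $\beta_k\le\frac{(tk)!}{(t!)^{k}\,k!}\bigl(O(t)^{\,t}\alpha/n^{t-1}\bigr)^{k}\sum_{|S|=tk}\|\widehat\rho(S)\|_{1}^{2}$. The weight-$tk$ Fourier mass of $\rho$ itself is controlled by Lemma~\ref{lem:hypercontractive} applied to $\rho$: optimizing the noise parameter $\delta\le 1/(r-1)$ gives $\sum_{|S|=tk}\|\widehat\rho(S)\|_{1}^{2}\le\bigl(c(r-1)m/(tk)\bigr)^{tk}$ when $tk\lesssim m$, and the cruder $(r-1)^{tk}2^{2m}$ otherwise.

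Putting everything together and simplifying factorials by Stirling, $\beta_k\le\bigl[\Theta(1)^{\,t}(r-1)^{t}\alpha\,m^{t}/(k\,n^{t-1})\bigr]^{k}$, so $\mathbb{E}_{M,w}\bigl[\|\tau_M(w)-\sigma\|_{1}\bigr]\le O(1)\sum_{k\ge 1}\sqrt{\beta_k}$ is a series dominated by its $k=1$ term, of order $\Theta(1)^{t}\sqrt{\alpha}\,(r-1)^{t/2}m^{t/2}/n^{(t-1)/2}$; the tail decays geometrically once $\alpha$ is small. Forcing this to be at most $\varepsilon^{2}$ is exactly the hypothesis $m\le\gamma\,r^{-(1+1/t)}(\varepsilon^{2}/\alpha)^{2/t}(n/t)^{1-2/t}$ for a suitable universal $\gamma$, completing the proof. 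The step I expect to be the main obstacle is passing from $\mathbb{E}_w\|\Delta^{(k)}_{M,w}\|_{1}$ to $\bigl(\sum_{|T|=k}\|\widehat\rho(M^{\top}T)\|_{1}^{2}\bigr)^{1/2}$ with only a constant dimensional overhead — a naive $\|\cdot\|_{1}\le\sqrt{2^{m}}\,\|\cdot\|_{2}$ would be fatal, forcing $m$ to be merely logarithmic — together with the combinatorial bookkeeping behind the $\mathbb{E}_M$ step and the simultaneous tuning of $\delta$ so that the exponents in $r$, $\alpha$, $t$ and $n$ come out exactly as stated.
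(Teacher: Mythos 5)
Your reduction of the statement to bounding $\tfrac12\,\mathbb{E}_{M,w}\big[\|\tau_M(w)-\sigma\|_1\big]$ and the identification $\widehat{\tau_M}(T)=\widehat\rho(M^{\top}T)$ are correct and match the opening of the paper's argument (there phrased through the coefficients $u(M,w,S)$ of Lemma~\ref{lem:understandingu}). The proof breaks, however, exactly at the step you flag as the main obstacle: the inequality $\mathbb{E}_w\big[\|\Delta^{(k)}_{M,w}\|_1\big]\le O(1)\big(\sum_{|T|=k}\|\widehat\rho(M^{\top}T)\|_1^2\big)^{1/2}$ is false, and Schatten-$p$ interpolation with $p=1+\Theta(1/m)$ cannot rescue it. Take the relevant Fourier coefficients to be (proportional to) distinct diagonal rank-one projectors, $\widehat\rho(M^{\top}T)\approx c\,\ketbra{e_T}{e_T}$; this is realizable by a legitimate encoding, e.g.\ $\rho_x$ a suitable mixture of basis states depending only on $Mx$. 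Then $\sum_{|T|=k}\widehat\rho(M^{\top}T)\,\omega_r^{T\cdot w}$ is diagonal with entries of modulus about $c$, so the left-hand side is of order $c\cdot\#\{T\}$ while the right-hand side is of order $c\sqrt{\#\{T\}}$: the loss is $\sqrt{\#\{T\}}$, up to $\sqrt{2^m}$, which is precisely the dimensional factor you set out to avoid. The bound $\|X\|_1\le 2^{m(1-1/p)}\|X\|_p$ compares norms of a single matrix; there is no Parseval/Khintchine-type estimate at Schatten exponent near $1$ that converts $\mathbb{E}_w\|\cdot\|_1$ into a square sum of trace norms with constant overhead. A telltale sign is your final arithmetic: your $k=1$ term $\Theta(1)^t\sqrt{\alpha}\,(r-1)^{t/2}m^{t/2}/n^{(t-1)/2}$ would permit $m=\Theta(n^{1-1/t})$, i.e.\ the \emph{classical} bound of Theorem~\ref{thr:thr_classicalhh} (and is inconsistent with the hypothesis you claim it reproduces). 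The illegitimate $w$-average is exactly where the classical proof (scalar Cauchy--Schwarz plus Parseval over $w$, Appendix~\ref{app:appB}) gains its extra square root; the failure of that move for matrix-valued functions is the very reason the quantum bound is only $n^{1-2/t}$.

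The paper's proof never passes the trace norm through an $L^2$ average over $w$. It applies the plain triangle inequality $\big\|\sum_S u(M,w,S)\widehat\rho(S)\big\|_1\le\sum_S|u(M,w,S)|\,\|\widehat\rho(S)\|_1$, after which the sum over $w$ merely cancels $p_w$; the square-root gain comes instead from the exact vanishing of $u(M,w,S)$ unless $S\in\Delta(M)$ (Lemma~\ref{lem:understandingu}), the smallness of $\Pr_{M}[S\in\Delta(M)]$ (Lemma~\ref{lem:probcomb}), and a \emph{scalar} Cauchy--Schwarz over the frequencies $S$ within each weight class. The level-$kt$ Fourier mass $\sum_{|S|=kt}\|\widehat\rho(S)\|_1^2$ is then controlled by Lemma~\ref{lem:hypercontractive} with $\delta=k/(4rm)$ for $1\le k<4rm$, while the tail $k\ge 4rm$ is handled by showing that $g(k)=\alpha^k(r-1)^{k/2}\binom{n/t}{k}\big/\sqrt{\binom{n}{kt}}$ is non-increasing --- this is where the hypothesis $\alpha\le(r-1)^{-1/2}$ is actually used, not merely to keep a series convergent. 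Your $\mathbb{E}_M$ combinatorics and your use of Lemma~\ref{lem:hypercontractive} with a tuned $\delta$ are in the right spirit and parallel Lemma~\ref{lem:probcomb} and ``Sum I'' of the paper, but without a valid substitute for the $L^1(w)\to\ell^2(T)$ passage the argument does not go through.
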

\begin{proof}
	For the ease of notation, we shall write
	\begin{align*}
	    \varepsilon_{\rm bias} \triangleq \operatorname*{\mathbb{E}}_{(M,w)\sim\mathcal{P}_{M,w}}\left[\big\|\operatorname{Pr}[\YES|M,w]\cdot\rho_\YES^{M,w} - \operatorname{Pr}[\NO|M,w]\cdot\rho_\NO^{M,w}\big\|_1\right].
	\end{align*}
	Therefore, we have that
	\begin{align*}
		\displaybreak[0]\varepsilon_{\rm bias} &= \sum_{M\in\mathcal{M}_{t,n}^\alpha}\sum_{w\in\Z_r^{\alpha n/t}}\operatorname{Pr}[M,w]\cdot \big\|\operatorname{Pr}[\YES|M,w]\cdot\rho_\YES^{M,w} - \operatorname{Pr}[\NO|M,w]\cdot\rho_\NO^{M,w}\big\|_1\displaybreak[0]\\
		&= \sum_{M\in\mathcal{M}_{t,n}^\alpha}\sum_{w\in\Z_r^{\alpha n/t}}\Big\|\sum_{x\in\Z_r^n}\Big(\operatorname{Pr}[x,\YES,M,w]\cdot\rho(x) - \operatorname{Pr}[x,\NO,M,w]\cdot\rho(x)\Big)\Big\|_1\displaybreak[0] \tag{\cref{eq:rhoyesrhono} and conditional probability}\\
		&= \sum_{M\in\mathcal{M}_{t,n}^\alpha}\sum_{w\in\Z_r^{\alpha n/t}}\Big\|\sum_{x\in\Z_r^n}\frac{1}{2}p_x\cdot p_M \left(\big[Mx = w\big] - p_w\right)\rho(x)\Big\|_1 && \tag{\cref{eq:eq3.5.c3}}\displaybreak[0]\\
		&= \sum_{M\in\mathcal{M}_{t,n}^\alpha}\sum_{w\in\Z_r^{\alpha n/t}}\Big\|\sum_{x\in\Z_r^n}\frac{1}{2}p_x\cdot p_M \left(\big[Mx = w\big] - p_w\right)\cdot \sum_{S\in\Z_r^n}\widehat{\rho}(S)\omega_r^{S\cdot x}\Big\|_1\tag{Fourier decomposition of $\rho$}\displaybreak[0]\\
		&= \sum_{M\in\mathcal{M}_{t,n}^\alpha}\sum_{w\in\Z_r^{\alpha n/t}} \Big\|\sum_{S\in\Z_r^n} u(M,w,S) \widehat{\rho}(S)\Big\|_1\displaybreak[0]\\
		&\leq \sum_{S\in\Z_r^n}\sum_{M\in\mathcal{M}_{t,n}^\alpha}\sum_{w\in\Z_r^{\alpha n/t}}|u(M,w,S)|\cdot \|\widehat{\rho}(S)\|_1,
	\end{align*}
	where we defined
	\begin{align}
		\label{eq:defnofu}
			u(M,w,S) \triangleq \frac{1}{2}\sum_{x\in\Z_r^n}p_x\cdot p_M \cdot \omega_r^{S\cdot x} \left(\big[Mx = w\big] - p_w\right).
		\end{align}
	
    Next, we upper bound the quantity $u(M,w,S)$ using the lemma below. {In the following lemma, let $I(M) = \bigcup_{i\in[\alpha n/t],j\in[t]} \{M_{i,j}\}$ be the set of indices of the $\alpha$-partial matching $M\in\mathcal{M}_{n,t}^\alpha$.} Moreover, we shall write $S|_{M_i} = S_{M_{i,1}} S_{M_{i,2}}\dots S_{M_{i,t}}\in\Z_r^t$ to denote the string $S$ restricted to the hyperedge $M_i = (M_{i,1},\dots,M_{i,t})$, where $S_{M_{i,j}}$ is the $M_{i,j}$-th entry of $S$. The same applies to $x\in\Z_r^n$.
    \begin{lemma}
    \label{lem:understandingu}
        Let $M\in\mathcal{M}^{\alpha}_{t,n}$ and {$I(M) = \bigcup_{i\in[\alpha n/t],j\in[t]} \{M_{i,j}\}$.} Define the set 
        \begin{align*}
            \Delta(M) = \{S\in\Z_r^n\setminus\{0^n\} ~|~ S_{M_{i,1}} = S_{M_{i,2}} = \cdots = S_{M_{i,t}}  ~\forall i\in[\alpha n/t] ~\text{and}~ S_j = 0  ~\forall j\notin I(M)\}.
        \end{align*}
        Given $u(M,w,S)$ as defined in {\rm \cref{eq:defnofu}} for $w\in\Z_r^{\alpha n/t}$ and $S\in\Z_r^n$, we have
        $
        u(M,w,S)=\frac{1}{2}\cdot r^{-\alpha n/t}\cdot p_M
        $
        if $S\in\Delta(M)$ and 0 if $S\notin\Delta(M)$.
    \end{lemma}
	
    \begin{proof}
        Recall the definition of $u$:
        $$
        u(M,w,S) = \frac{1}{2}\sum_{x\in\Z_r^n}p_x\cdot p_M \cdot \omega_r^{S\cdot x} \left(\big[Mx = w\big] - p_w\right).
        $$
        In order to understand this expression, we start with the following:
	\begin{align*}
            \displaybreak[0]\sum_{x\in\Z_r^n}\omega_r^{S\cdot x}\big[Mx = w\big] &= \sum_{x\in\Z_r^n}\omega_r^{S\cdot x} \prod_{i=1}^{\alpha n/t} \left[(Mx)_i = w_i\right]\displaybreak[0]\\
            &= \sum_{x\in\Z_r^n}\omega_r^{S\cdot x} \prod_{i=1}^{\alpha n/t} \left[\sum_{j=1}^t x_{M_{i,j}} = w_i\right]\displaybreak[0]\\
            &= \sum_{x\in\Z_r^{n}}\left(\prod_{j\notin I(M)}\omega_r^{S_jx_j}\right)\left(\prod_{i=1}^{\alpha n/t}\omega_r^{S|_{M_{i}}\cdot x|_{M_i}}\left[\sum_{j=1}^t x_{M_{i,j}} = w_i\right] \right)\displaybreak[0]\\
            &= \left(\prod_{j\notin I(M)}\sum_{x\in\Z_r}\omega_r^{S_jx}\right)\left(\prod_{i=1}^{\alpha n/t}\sum_{x\in\Z_r^{t}}\omega_r^{S|_{M_{i}}\cdot x}\left[\sum_{j=1}^t x_j = w_i\right] \right)\displaybreak[0]\tag{Relabelling $x\in\mathbb{Z}_r^n$}\\
            &= r^{n(1-\alpha)}[S_j=0 ~\forall j\notin I(M)]\prod_{i=1}^{\alpha n/t}\sum_{x\in\Z_r^t}\omega_r^{S|_{M_i}\cdot x}\left[\sum_{j=1}^t x_j = w_i\right].
	\end{align*}
         Now we use that $\sum_{j=1}^t x_j = w_i \implies x_t = w_i - \sum_{j=1}^{t-1} x_j$ modulo $r$, so that
	\begin{align*}
		S|_{M_i}\cdot x = \sum_{j=1}^t S_{M_{i,j}}x_j = \sum_{j=1}^{t-1} S_{M_{i,j}}x_j + S_{M_{i,t}}\left(w_i - \sum_{j=1}^{t-1}x_j\right) = S_{M_{i,t}}w_i + \sum_{j=1}^{t-1}(S_{M_{i,j}} - S_{M_{i,t}})x_j
	\end{align*}
	modulo $r$. This leads to {(assuming that $S_j = 0$ for all $j\notin I(M)$)}
	\begin{align}
	    \sum_{x\in\Z_r^n}\omega_r^{S\cdot x}\big[Mx = w\big] &= r^{n(1-\alpha)}\prod_{i=1}^{\alpha n/t}\omega_r^{S_{M_{i,t}}w_i}\sum_{x\in\Z_r^{t-1}}\omega_r^{\sum_{j=1}^{t-1}(S_{M_{i,j}} - S_{M_{i,t}})x_j}= \frac{r^{n}}{r^{\alpha n/t}}\prod_{i=1}^{\alpha n/t}\omega_r^{S_{M_{i,t}}w_i}\label{eq:rary1}
	\end{align}
	if, for all $i\in[\alpha n/t]$, $S_{M_{i,j}}$ is constant for all $j\in[t]$, i.e., if $S_{M_{i,1}} = S_{M_{i,2}} = \dots = S_{M_{i,t}}$ for all $i\in[\alpha n/t]$. Otherwise the above expression is $0$. 
	Thus, if $S_{M_{i,1}} = S_{M_{i,2}} = \dots = S_{M_{i,t}}$ for all $i\in[\alpha n/t]$ and $S_j=0$ for $j\notin I(M)$, then we can use \cref{eq:rary1} to get (remember that $p_x \triangleq r^{-n}$ and $p_w \triangleq r^{-\alpha n/t}$)
	\begin{align*}
   		\displaybreak[0]|u(M,w,S)| = \frac{1}{2}\left|\sum_{x\in\Z_r^n}p_xp_M \omega_r^{S\cdot x} \left(\big[Mx = w\big] - p_w\right)\right| &= \frac{p_M}{2r^{\alpha n/t}}\left|\prod_{i=1}^{\alpha n/t}\omega_r^{S_{M_{i,t}}w_i} - [S=0^n]\right|\displaybreak[0] \\
		&= \begin{cases}
			0 &~\text{if}~ S=0^n,\\
			\frac{1}{2} r^{-\alpha n/t} p_M &~\text{if}~S\neq 0^n.
		\end{cases}
	\end{align*}
	Hence, we have
	\begin{align*}
	    |u(M,w,S)| = \begin{cases}
			0 &\text{if}~ S=0^n,\\
			\frac{1}{2} r^{-\alpha n/t} p_M &\text{if}~S_{M_{i,1}} = S_{M_{1,2}} = \dots= S_{M_{i,t}}~ \forall i\in[\alpha n/t] ~\text{and}~ S_j = 0~\forall j\notin I(M),\\
			0 &\text{otherwise},
		\end{cases}
	\end{align*}
	proving the lemma statement
    \end{proof}
    	We now proceed to upper bound $\varepsilon_{\rm bias}$ using the expression for $|u(M,w,S)|$ from \cref{lem:understandingu}. For $S\in\Z_r^n$, let $|S| \triangleq |\{i\in[n]: S_i\neq 0\}|$. Notice that, if $S\in\Delta(M)$, then $|S| = kt$ for some $k\in[\alpha n/t]$. Hence, we have that
	\begin{align*}
		\varepsilon_{\rm bias} \leq \frac{1}{2}\sum_{S\in\Z_r^n}\sum_{\substack{M\in\mathcal{M}^\alpha_{t,n} \\ S\in\Delta(M)}}p_M \sum_{w\in\Z_r^{\alpha n/t}}\frac{1}{r^{\alpha n/t}}\|\widehat{\rho}(S)\|_1 &= \frac{1}{2}\sum_{k=1}^{\alpha n/t}\sum_{\substack{S\in\Z_r^n \\ |S| = kt}}\sum_{\substack{M\in\mathcal{M}^\alpha_{t,n} \\ S\in\Delta(M)}} p_M\|\widehat{\rho}(S)\|_1\\
		&= \frac{1}{2}\sum_{k=1}^{\alpha n/t}\sum_{\substack{S\in\Z_r^n \\ |S| = kt}}\operatorname*{Pr}_{M\sim\mathcal{M}_{t,n}^\alpha}[S\in\Delta(M)]\cdot\|\widehat{\rho}(S)\|_1,
	\end{align*}
	using that 
	\begin{align*}
	    \sum_{\substack{M\in\mathcal{M}^\alpha_{t,n}: S\in\Delta(M)}} p_M = \operatorname*{Pr}_{M\sim\mathcal{M}^{\alpha}_{t,n}}[S\in\Delta(M)].
	\end{align*}
	We now upper bound this probability using the following lemma.  
	\begin{lemma}
	    \label{lem:probcomb}
	    Let $t\in\Z$. Let $S\in\Z_r^n$ with $k_j \triangleq \frac{1}{t}\cdot |\{i\in[n]:S_i = j\}|\in\Z$ 
	    for $j\in\{1,\dots,r-1\}$. Let $k \triangleq \sum_{j=1}^{r-1} k_j$. For any $M\in\mathcal{M}_{t,n}^\alpha$, let $\Delta(M)$ be the set from {\rm \cref{lem:understandingu}}. Then
	    \begin{align*}
	        \operatorname*{Pr}_{M\sim\mathcal{M}_{t,n}^\alpha}[S\in\Delta(M)] = \frac{\binom{\alpha n/t}{k}}{\binom{n}{kt}}\frac{k!}{(kt)!}\prod_{j=1}^{r-1}\frac{(k_jt)!}{k_j!}.
	    \end{align*}
	\end{lemma}
	\begin{proof}
	    We can assume without loss of generality that $S = 1^{k_1t}2^{k_2t}\dots (r-1)^{k_{r-1}t}0^{n-kt}$. First note that the total number $|\mathcal{M}_{t,n}^\alpha|$ of $\alpha$-partial hypermatchings is $n!/\big((t!)^{\alpha n/t}(\alpha n/t)!(n-\alpha n)!\big)$. This can be seen as follows: pick a permutation of $n$, view the first $\alpha n/t$ tuples of length $t$ as $\alpha n/t$ hyperedges, and ignore the ordering within each hyperedge, the ordering of the $\alpha n/t$ hyperedges and the ordering of the last $n-\alpha n$ vertices. Now, given our particular $S$, notice that $S\in\Delta(M)$ if, for $j\in[r-1]$, $M$ has exactly $k_j$ hyperedges in 
	    $$
	    \left\{1 + t\sum_{i=1}^{j-1}k_i,~2 + t\sum_{i=1}^{j-1}k_i,~3 + t\sum_{i=1}^{j-1}k_i,\ldots,(k_j-1) + t\sum_{i=1}^{j-1}k_i,~t\sum_{i=1}^{j}k_i\right\},
	    $$ 
	    i.e., $k_1$ hyperedges in $\{1,\dots,k_1t\}$, $k_2$ hyperedges in $\{k_1t+1,\dots,(k_2+k_1)t\}$, etc., and also $\alpha n/t-k$ hyperedges in $[n]\setminus[kt]$. The number of ways to pick $k_j$ hyperedges in $\left\{1 + t\sum_{i=1}^{j-1}k_i,\dots,t\sum_{i=1}^{j}k_i\right\}$ is $(k_jt)!/((t!)^{k_j}k_j!)$. The number of ways to pick the remaining $\alpha n/t - k$ hyperedges in $[n]\setminus[kt]$ is $(n-kt)!/((t!)^{\alpha n/t - k} (\alpha n/t - k)! (n-\alpha n)!)$. Hence $\operatorname*{Pr}_{M\sim\mathcal{M}_{t,n}^\alpha}[S\in\Delta(M)]$ equals
	    \[
	        \frac{\frac{(n-kt)!}{(t!)^{\alpha n/t - k} (\alpha n/t - k)! (n-\alpha n)!}}{\frac{n!}{(t!)^{\alpha n/t}(\alpha n/t)!(n-\alpha n)!}}\prod_{j=1}^{r-1}\frac{(k_jt)!}{(t!)^{k_j}k_j!} = \frac{(n-kt)!(\alpha n/t)!}{n!(\alpha n/t - k)!}\prod_{j=1}^{r-1}\frac{(k_jt)!}{k_j!} = \frac{\binom{\alpha n/t}{k}}{\binom{n}{kt}}\frac{k!}{(kt)!}\prod_{j=1}^{r-1}\frac{(k_jt)!}{k_j!}. \qedhere
        \]
	\end{proof}
	Using \cref{lem:probcomb} and the notation $|S|_i \triangleq |\{j\in[n]:S_j=i\}|$, we continue bounding $\varepsilon_{\rm bias}$ as
	\begin{align*}
		\displaybreak[0]\varepsilon_{\rm bias} &\leq \frac{1}{2}\sum_{k=1}^{\alpha n/t}\frac{\binom{\alpha n/t}{k}}{\binom{n}{kt}}\sum_{\substack{k_1,\dots,k_{r-1}\geq 0 \\ \sum_{j=1}^{r-1} k_j = k}}\sum_{\substack{S\in\Z_r^n \\ |S|_i = k_it, ~i\in[r-1]}}\frac{k!}{(kt)!}\left(\prod_{j=1}^{r-1}\frac{(k_jt)!}{k_j!}\right) \|\widehat{\rho}(S)\|_1\displaybreak[0]\\
		&\leq \frac{1}{2}\sum_{k=1}^{\alpha n/t}\frac{\binom{\alpha n/t}{k}}{\binom{n}{kt}}\sqrt{\sum_{\substack{k_1,\dots,k_{r-1}\geq 0 \\ \sum_{j=1}^{r-1} k_j = k}}\sum_{\substack{S\in\Z_r^n \\ |S|_i = k_it, ~i\in[r-1]}}\frac{k!^2}{(kt)!^2}\prod_{j=1}^{r-1}\frac{(k_jt)!^2}{k_j!^2}} \sqrt{\sum_{\substack{k_1,\dots,k_{r-1}\geq 0 \\ \sum_{j=1}^{r-1} k_j = k}}\sum_{\substack{S\in\Z_r^n \\ |S|_i = k_it, ~i\in[r-1]}}\|\widehat{\rho}(S)\|_1^2}\tag{Cauchy-Schwarz}\displaybreak[0]\\
		&\leq \frac{1}{2}\sum_{k=1}^{\alpha n/t}\frac{\binom{\alpha n/t}{k}}{\binom{n}{kt}}\sqrt{\sum_{\substack{k_1,\dots,k_{r-1}\geq 0 \\ \sum_{j=1}^{r-1} k_j = k}}\sum_{\substack{S\in\Z_r^n \\ |S|_i = k_it, ~i\in[r-1]}}\frac{k!^2}{(kt)!^2}\prod_{j=1}^{r-1}\frac{(k_jt)!^2}{k_j!^2}} \sqrt{\sum_{\substack{S\in\Z_r^n \\ |S| = kt}}\|\widehat{\rho}(S)\|_1^2}\displaybreak[0]\\
		&= \frac{1}{2}\sum_{k=1}^{\alpha n/t}\frac{\binom{\alpha n/t}{k}}{\sqrt{\binom{n}{kt}}}\sqrt{\sum_{\substack{k_1,\dots,k_{r-1}\geq 0 \\ \sum_{j=1}^{r-1} k_j = k}} \frac{k!^2}{(kt)!}\prod_{j=1}^{r-1}\frac{(k_jt)!}{k_j!^2}} \sqrt{\sum_{\substack{S\in\Z_r^n \\ |S| = kt}}\|\widehat{\rho}(S)\|_1^2},
	\end{align*}
	where the last equality uses $\sum_{\substack{S\in\Z_r^n:|S|_i = k_it}} 1 = \binom{n}{kt}(kt)!\prod_{j=1}^{r-1}\frac{1}{(k_jt)!}$. Since
	\begin{align}
	    \label{eq:alphadependence}
		\sum_{\substack{k_1,\dots,k_{r-1}\geq 0 \\ \sum_{j=1}^{r-1} k_j = k}} \frac{k!^2}{(kt)!}\prod_{j=1}^{r-1}\frac{(k_jt)!}{k_j!^2} &= \sum_{\substack{k_1,\dots,k_{r-1}\geq 0 \\ \sum_{j=1}^{r-1} k_j = k}} \frac{\binom{k}{k_1,\dots,k_{r-1}}^2}{\binom{kt}{k_1t,\dots,k_{r-1}t}} \leq \sum_{\substack{k_1,\dots,k_{r-1}\geq 0 \\ \sum_{j=1}^{r-1} k_j = k}} \binom{k}{k_1,\dots,k_{r-1}} = (r-1)^k,
	\end{align}
	where the last equality follows from the the multinomial theorem, then
	\begin{align*}
		2\varepsilon_{\rm bias} \leq \sum_{k=1}^{\alpha n/t}\alpha^k\frac{\binom{ n/t}{k}}{\sqrt{\binom{n}{kt}}}(r-1)^{k/2} \sqrt{\sum_{\substack{S\in\Z_r^n: |S| = kt}}\|\widehat{\rho}(S)\|_1^2},
	\end{align*}
	where we also used that $\binom{\alpha n/t}{k} \leq \alpha^k \binom{n/t}{k}$ for $\alpha\in[0,1]$. In order to compute the above sum, we shall split it into two parts: one in the range $1\leq k < 4rm$, and the other in the range $4rm \leq k \leq \alpha n/t$. 

	\textbf{Sum I} ($1\leq k < 4rm$): in order to upper bound each term, pick $\delta = k/(4rm)$ in \cref{lem:hypercontractive}, so
	\begin{align*}
		\sum_{\substack{S\in\Z_r^n: |S| = kt}} \|\widehat{\rho}(S)\|_1^2 \leq \frac{1}{\delta^{kt}}\sum_{S\in\Z_r^n}\delta^{|S|} \|\widehat{f}(S)\|_1^2 \leq \frac{1}{\delta^{kt}}2^{2r\delta m} = \left(\frac{2^{1/(2t)}4rm}{k}\right)^{kt}.
	\end{align*}
	By using that $m \leq \frac{\gamma}{r^{1+1/t}} (\frac{\varepsilon^2}{\alpha})^{2/t} (n/t)^{1-2/t}$ and $\binom{q}{s}^2\binom{\ell q}{\ell s}^{-1} \leq (\frac{s}{q})^{(\ell-2)s}$ (see~\cite[Appendix~A.5]{shi2012limits}) for $q=n/t,s=k,\ell=t$, we thus have
	\begin{align*}
		\displaybreak[0]\sum_{k=1}^{4rm-1}\alpha^k\frac{\binom{n/t}{k}}{\sqrt{\binom{n}{kt}}}(r-1)^{k/2}\sqrt{\sum_{\substack{S\in\Z_r^n: \displaybreak[0]|S| = kt}} \|\widehat{\rho}(S)\|_1^2} &\leq \sum_{k=1}^{4rm-1} \alpha^k (r-1)^{k/2}\left(\frac{kt}{n}\right)^{(1-2/t)kt/2}\left(\frac{2^{1/(2t)}4rm}{k}\right)^{kt/2}\\\displaybreak[0]
		&\leq\sum_{k=1}^{4rm-1} \alpha^k (r-1)^{k/2}\left(\frac{2^{1/(2t)}4\gamma \varepsilon^{4/t}}{\alpha^{2/t}r^{1/t}k^{2/t}}\right)^{kt/2}\\\displaybreak[0]
		&\leq \sum_{k=1}^{4rm-1}\left(\frac{2^{1/4}(4\gamma)^{t/2} \varepsilon^2 }{k}\right)^{k} \leq \varepsilon^2
	\end{align*}
	for sufficiently small $\gamma$. 
	
	\textbf{Sum II} ($4rm \leq k \leq \alpha n/t$): first we note that the function $g(k) \triangleq \alpha^k (r-1)^{k/2}\binom{n/t}{k}/\sqrt{\binom{n}{kt}}$ is non-increasing in the interval $1\leq k \leq \alpha n/t \leq n/(2t)$. That is because $\alpha \sqrt{r-1} \leq 1$, and so
	\begin{align*}
		\displaybreak[0]\frac{g(k-1)}{g(k)} \geq \frac{\binom{n/t}{k-1}}{\sqrt{\binom{n}{kt-t}}}\frac{\sqrt{\binom{n}{kt}}}{\binom{n/t}{k}} = \sqrt{\frac{kt}{n-kt+t}\prod_{j=1}^{t-1}\frac{n-kt+j}{kt-j}} &\geq \sqrt{\frac{kt}{n-kt+t}\prod_{j=1}^{t-1}\frac{n-kt+j+1}{kt-j+1}}\displaybreak[0]\\
		&= \sqrt{\prod_{j=1}^{t-2}\frac{n-kt+j+1}{kt-j}} \geq 1,
	\end{align*}
	where we used that $\frac{a}{b} \geq \frac{a+s}{b+s}$ for all $a,b,s>0$ with $a\geq b$. Hence, and with the aid once more of \cref{lem:hypercontractive} with $\delta=1$ and the inequality $\binom{q}{s}^2\binom{\ell q}{\ell s}^{-1} \leq (\frac{s}{q})^{(\ell-2)s}$ (for $q=n/t,s=2m,\ell=t$) in order to bound $g(4rm)$,
	\begin{align*}
		\sum_{k=4rm}^{\alpha n/t}\alpha^k\frac{\binom{n/t}{k}}{\sqrt{\binom{n}{kt}}}(r-1)^{k/2}\sqrt{\sum_{\substack{S\in\Z_r^n: |S| = kt}} \|\widehat{\rho}(S)\|_1^2} &\leq g(4rm) \sum_{k=4rm}^{\alpha n/t}\sqrt{\sum_{\substack{S\in\Z_r^n: |S| = kt}} \|\widehat{\rho}(S)\|_1^2}\\
		&\leq g(4rm)\sqrt{\frac{\alpha n}{t}}\sqrt{\sum_{S\in\Z_r^n} \|\widehat{\rho}(S)\|_1^2}\tag{Cauchy-Schwarz}\\
		&\leq \left(\alpha \sqrt{r-1}\right)^{4rm}\left(\frac{4rm}{n/t}\right)^{2(t-2)rm}\sqrt{\frac{\alpha n}{t}}2^{(r-1)m}\\
		&\leq \left(2^{1/4}\alpha \sqrt{r-1}\right)^{4rm} \left(\frac{(4\gamma)^{t/2}\varepsilon^2}{\alpha \sqrt{r}(n/t)}\right)^{4(1-2/t)rm}\sqrt{\frac{\alpha n}{t}}\\
		&\leq \varepsilon^2,
	\end{align*}
	where in the last step we used that $m\geq 1 \implies 4(1-2/t)m \geq 1$ (so $n$ is in the denominator and $\varepsilon^{4(1-2/t)m} \leq \varepsilon$) and picked $\gamma$ sufficiently small. 
	
    Finally, merging both results, we get that, if $m \leq \frac{\gamma}{r^{1+1/t}}(\frac{ \varepsilon^2}{\alpha})^{2/t} (n/t)^{1-2/t}$, then $\varepsilon_{\rm bias} \leq \varepsilon^2$.
\end{proof}

A very similar classical communication lower bound for the $r\text{-}\HH(\alpha,t,n)$ problem can be proven.
\begin{theorem}
    \label{thr:thr_classicalhh}
    Any one-way classical protocol that achieves advantage $\varepsilon>0$ for the $r\text{-}\HH(\alpha,t,n)$ problem with $t\geq 2$ and $\alpha \leq 1/2$ requires $\Omega(r^{-1}(\varepsilon^4/\alpha)^{1/t}(n/t)^{1-1/t})$ bits of communication.
\end{theorem}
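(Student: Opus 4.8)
The plan is to repeat the argument of Theorems~\ref{thm:hypermatchinglowerbound} and~\ref{thr:raryhidden} with Alice's quantum encoding replaced by a classical one and the Holevo--Helstrom step (Fact~\ref{lem:lem3.5.c3}) replaced by the data-processing inequality for total variation distance. Averaging over the public randomness, we may assume the $c$-bit protocol is deterministic, so Alice's message $a=a(x)\in\{0,1\}^c$ partitions $\Z_r^n$ into parts $A_a$ with densities $q_a=|A_a|/r^n$; set $f_a=\mathbf{1}_{A_a}:\Z_r^n\to\{0,1\}$. Under the hard distribution of Eq.~\eqref{eq:eq3.5.c3}, the string $w$ is uniform in the $\NO$ case and the hypermatching $M$ is independent of $(x,a)$, so by data processing Bob's advantage $\mathrm{adv}$ is at most $\E_{a,M}\big[\|\,\mathcal{L}(Mx\mid a)-\mathcal{U}\,\|_{\text{tvd}}\big]$, where $\mathcal{L}(Mx\mid a)$ (which also depends on $M$) is the law of $Mx\in\Z_r^{\alpha n/t}$ for $x$ uniform on $A_a$, and $\mathcal{U}$ is uniform; this is the classical analogue of reducing to distinguishing $\rho_\YES^{M,w}$ from $\rho_\NO^{M,w}$.

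Next I would pass to an $\ell_2$ quantity. Cauchy--Schwarz together with Parseval on $\Z_r^{\alpha n/t}$ give $\|\mathcal{L}(Mx\mid a)-\mathcal{U}\|_{\text{tvd}}\le q_a^{-1}\big(\sum_{S\in\Delta(M)}|\widehat{f_a}(S)|^2\big)^{1/2}$, since the nonzero Fourier coefficients of $\mathcal{L}(Mx\mid a)$ over $\Z_r^{\alpha n/t}$ are, up to the factor $q_a$, exactly the $\widehat{f_a}(S)$ with $S$ constant on each hyperedge of $M$ and zero off the matching --- i.e.\ $S\in\Delta(M)$ in the notation of Lemma~\ref{lem:understandingu}, which provides this identification. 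A weighted Cauchy--Schwarz over $a$ (using $\sum_a q_a=1$) and Jensen over $M$ then give
\begin{align*}
  \mathrm{adv}^2 \;\le\; \E_{M}\sum_a\frac{1}{q_a}\sum_{S\in\Delta(M)}|\widehat{f_a}(S)|^2 \;=\; \sum_a\frac{1}{q_a}\sum_{k=1}^{\alpha n/t}\;\sum_{\substack{S\in\Z_r^n\\ |S|=kt}}\Pr_{M}[\,S\in\Delta(M)\,]\,|\widehat{f_a}(S)|^2,
\end{align*}
and Lemma~\ref{lem:probcomb} bounds $\Pr_{M}[S\in\Delta(M)]\le \binom{\alpha n/t}{k}/\binom{n}{kt}$ for $|S|=kt$ (the residual type-dependent factor there is a ratio of multinomial coefficients, hence $\le 1$, and simply disappears; this is Eq.~\eqref{eq:alphadependence} used crudely). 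The structural point is that, unlike in the quantum proof, the combinatorial weight $\binom{\alpha n/t}{k}/\binom{n}{kt}$ now enters \emph{without} a square root --- the distinguishing quantity was squared when passing from $\|\cdot\|_{\text{tvd}}$ to $\|\cdot\|_2$ --- and this is precisely what improves the exponent of $n/t$ from $1-2/t$ to $1-1/t$, at the cost of turning $\varepsilon^2$ into $\varepsilon^4$.

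It remains to control $\sum_{|S|=kt}|\widehat{f_a}(S)|^2$, the classical counterpart of Lemma~\ref{lem:hypercontractive}. Applying the $\Z_r$-hypercontractive inequality in the scalar case (Theorem~\ref{thm:genmathypercontractivity} with $m=1$) to $f_a$ with $p=1+(r-1)\delta\in[1,2]$, and using that $f_a$ is $\{0,1\}$-valued so that the right-hand side of Theorem~\ref{thm:genmathypercontractivity} equals $q_a^{2/p}$, gives $\sum_{|S|=kt}|\widehat{f_a}(S)|^2\le\delta^{-kt}q_a^{2/p}$ (here I use Remark~\ref{rem:1} to bound $\zeta\ge\delta$). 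Summing over the at most $2^c$ messages then yields $\sum_a q_a^{-1}\sum_{|S|=kt}|\widehat{f_a}(S)|^2\le \delta^{-kt}\sum_a q_a^{(2-p)/p}\le \delta^{-kt}\,2^{2(r-1)\delta c}$, the maximum of $\sum_a q_a^{(2-p)/p}$ over distributions on $2^c$ atoms being attained at the uniform one; this plays the role of $2^{2(r-1)\delta m}$ with the message length $c$ in place of the qubit count $m$, and the constraint $p\le 2$, i.e.\ $\delta\le 1/(r-1)$, is the source of the factor $r^{-1}$. I would then finish exactly as in Theorem~\ref{thr:raryhidden}: split $\sum_{k=1}^{\alpha n/t}$ at $k\asymp rc$, taking $\delta\asymp k/(rc)$ in the lower range so that $\delta^{-kt}2^{2(r-1)\delta c}$ balances $\binom{\alpha n/t}{k}/\binom{n}{kt}$ (``Sum~I''), and using monotonicity in $k$ of the combinatorial coefficient together with $\delta=1/(r-1)$ and Parseval in the upper range (``Sum~II''); each piece is at most $\tfrac12\varepsilon^2$ once $c\le\gamma\,r^{-1}(\varepsilon^4/\alpha)^{1/t}(n/t)^{1-1/t}$ for a small universal constant $\gamma$, whence $\mathrm{adv}\le\varepsilon$. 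The main obstacle is this last two-range estimate: checking that the powers of $n,t,r,\alpha,\varepsilon$ cancel to leave exactly the claimed bound, and in particular that the $q_a$-weighting inherited from the $\ell_2$ reduction interacts with hypercontractivity so that the number $2^c$ of messages is absorbed rather than surviving as a bare multiplicative loss.
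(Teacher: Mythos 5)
Your proposal is correct and, at its core, it is the same Fourier-analytic argument the paper gives in Appendix~\ref{app:appB} (Theorem~\ref{thr:raryhidden2}): the same reduction of the distinguishing advantage to a total-variation distance, the same identification of the relevant Fourier coefficients of the law of $Mx$ with the coefficients $\widehat{f_a}(S)$ of the message-indicator supported on $S\in\Delta(M)$, the same combinatorial probability from Lemma~\ref{lem:probcomb} (with the multinomial ratio dropped, as in the appendix), and the same split of the sum over $k$ at $k\asymp rc$ with $\delta\asymp k/(rc)$ in the low range and Parseval plus monotonicity of $\binom{\alpha n/t}{k}/\binom{n}{kt}$ in the high range. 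Where you genuinely deviate is in how Alice's message is handled. The paper invokes minimax plus a counting argument to restrict attention to a single message set $A$ of density at least $2^{-c}$ (paying an additive $\varepsilon$ and a $\log(1/\varepsilon)$ shift in $c$), bounds $\E_{M}\big[\|p_M-U\|_{\mathrm{tvd}}^2\big]\le\varepsilon^4$ for that one set via Lemma~\ref{lem:kkl-largefields}, and then uses Jensen and Markov over $M$. You instead keep the whole partition $\{A_a\}$, bound $\operatorname{adv}^2$ directly by the weighted quantity $\E_M\sum_a q_a^{-1}\sum_{S\in\Delta(M)}|\widehat{f_a}(S)|^2$, and absorb the number of messages through $\sum_a q_a^{(2-p)/p}\le 2^{2c(p-1)}$ by concavity in the low range and $\sum_a q_a^{-1}\sum_S|\widehat{f_a}(S)|^2\le 2^c$ by Parseval in the high range; these play exactly the role of the paper's $(r^n/|A|)^{2r\delta}\le 2^{2r\delta c}$ and $q_A^{-1}\le 2^c$, so the final two-range bookkeeping you defer is literally the paper's computation with the same parameter choices. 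Your variant is sound and arguably cleaner: it avoids the counting and Markov losses, and since you only need the squared advantage to be at most $\varepsilon^2$ rather than $\E_M[\|p_M-U\|^2_{\mathrm{tvd}}]\le\varepsilon^4$, the identical two-range estimate would in fact tolerate $c\le \gamma r^{-1}(\varepsilon^2/\alpha)^{1/t}(n/t)^{1-1/t}$, i.e.\ your route even yields a slightly better $\varepsilon$-dependence than the stated $\varepsilon^4$, mirroring the $\varepsilon^2$ appearing in the quantum bound of Theorem~\ref{thr:raryhidden}.
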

The proof is very similar to that of past works~\cite{gavinsky2007exponential,verbin2011streaming,DBLP:conf/approx/GuruswamiT19} and we include it in \cref{app:appB} for completeness. We now conclude this section with a remark that improves the $r$ dependence of the $\alpha$ parameter.
\begin{remark}
    \label{rem:alphadependence}
    The dependence of $\alpha$ on $r$ can be improved. For example, we can improve the bound in {\rm \cref{eq:alphadependence}} by observing that $\binom{k}{k_1,\dots,k_{r-1}}^2\binom{kt}{k_1t,\dots,k_{r-1}t}^{-1} \leq 1$, which can be seen from the identity $\binom{k}{k_1,\dots,k_{r-1}} = \binom{k_1}{k_1}\binom{k_1+k_2}{k_2}\cdots\binom{k_1+k_2+\cdots+k_{r-1}}{k_{r-1}}$ and the inequality $\binom{q}{s}^2\binom{\ell q}{\ell s}^{-1} \leq (\frac{s}{q})^{(\ell-2)s} \leq 1$. Hence
    \begin{align*}
        \sum_{\substack{k_1,\dots,k_{r-1}\geq 0 \\ \sum_{j=1}^{r-1} k_j = k}} \frac{k!^2}{(kt)!}\prod_{j=1}^{r-1}\frac{(k_jt)!}{k_j!^2} &= \sum_{\substack{k_1,\dots,k_{r-1}\geq 0 \\ \sum_{j=1}^{r-1} k_j = k}} \frac{\binom{k}{k_1,\dots,k_{r-1}}^2}{\binom{kt}{k_1t,\dots,k_{r-1}t}} \leq \sum_{\substack{k_1,\dots,k_{r-1}\geq 0 \\ \sum_{j=1}^{r-1} k_j = k}} 1 = \binom{k+r-2}{k},
    \end{align*}
    which is better than $(r-1)^{k}$. By bounding
    \begin{align*}
        \binom{k+r-2}{k} \leq e^k\left(1 + \frac{r-2}{k}\right)^{k},
    \end{align*}
    the new function $g(k) \triangleq \alpha^k \sqrt{\binom{k+r-2}{k}}\binom{n/t}{k}/\sqrt{\binom{n}{kt}}$ is still non-increasing in the interval $4rm \leq k \leq \alpha n/t \leq n/(2t)$ if now
    \begin{align*}
        \alpha \leq e^{-1/2}\operatorname*{\min}_{4rm\leq k \leq \alpha n/t}\sqrt{\frac{k}{k+r-2}} = e^{-1/2}\sqrt{\frac{4rm}{4rm+r-2}}.
    \end{align*}
    For $m\gg 1$, $\alpha$ is essentially independent of $r$, and hence $\alpha \leq \min(1/2,e^{-1/2}) = 1/2$.
\end{remark}

\subsection{Quantum streaming lower bound for Unique Games on hypergraphs}

The Unique Games problem is a generalization of the classical Max-Cut and can in fact be viewed as constraint satisfaction problems on a graph but over a larger alphabet. Consider a graph on $n$ vertices $x_1,\ldots,x_n$ and edges in $E$. The constraint on an arbitrary edge $(i,j)\in E$ is specified by a permutation $\pi_{i,j}:\Z_r\rightarrow \Z_r$ and the goal is to find an assignment of $x_1,\ldots,x_n\in \Z_r$ that maximizes 
$$
\sum_{(i,j)\in E} [\pi_{i,j}(x_i)=x_j].
$$ 
In this section, we consider a generalization of Unique Games to hypergraphs.
\begin{definition}[Unique Games instance on hypergraphs]
\label{def:UGhyper}
    A hypergraph $H=(V,E)$ is defined on a vertex set $V$ of size $n$ with $t$-sized hyperedges $E$ (i.e., $t$-sized subsets of $V$). Given a linear constraint on a hyperedge $e\in E$, i.e., a linear function $\pi_e:\Z_r^t\rightarrow \Z_r$, the goal is to compute
    $$
        \max_{x\in \Z_r^n}\sum_{e\in E} [\pi_{e}(x_e) = 0],
    $$
    where $x_e$ corresponds to the set of vertex-assignment in the hyperedge $e\in E$.    
\end{definition}
\begin{definition}
    Let $H=(V,E)$ be a hypergraph and let $\operatorname{OPT}$ be the optimal value of the Unique Games on $H$. {For $\gamma \geq 1$}, a randomized algorithm gives a $\gamma$-approximation to a Unique Games instance with failure probability $\delta\in[0,1/2)$ if, on any input hypergraph $H$, it outputs a value in the interval $[\operatorname{OPT}/\gamma, \operatorname{OPT}]$ with probability at least $1-\delta$.
\end{definition}
A uniformly random assignment of $x\in \Z_r^n$ to the vertex set $V$ will satisfy a $1/r$-fraction of the hyperedges, since each linear constraint $\pi_e(x_e)$ is satisfied with probability $1/r$. This gives a trivial $r$-approximation algorithm for the problem above. Below we show that any better than trivial approximation requires space that scales as $n^\beta$ for constant $\beta>0$.
\begin{theorem}
\label{thm:streaminguglowerbound}
    Let $r,t\geq 2$ be integers. Every {quantum} streaming algorithm giving an $(r-\varepsilon)$-approximation for Unique Games on hypergraphs (as in {\rm \cref{def:UGhyper}}) with at most $t$-sized hyperedges with alphabet size $r$ and success probability at least $2/3$ over its internal randomness, needs $\Omega((n/t)^{1-2/t})$ space (which hides dependence on $r,\varepsilon$). 
\end{theorem}
 The proof of this theorem combines techniques used by Guruswami and Tao~\cite{DBLP:conf/approx/GuruswamiT19} and Kapralov, Khanna, and Sudan~\cite{kapralov2014streaming}. Akin to these works, based on the Hidden Matching problem, we will construct instances of the hypergraph for which a Unique Games instance is hard to solve space-efficiently in the streaming model. 

\paragraph{Input distributions.} To this end, we construct two distributions $\Y$ and $\N$ such that $\Y$ is supported on satisfiable Unique Games instances and $\N$ is supported on instances for which at most an $O(1/r)$-fraction of the constraints is satisfied. We now define these instances in a multi-stage way (using $k$ stages). First, sample $k$ independent $\alpha$-partial $t$-hypermatchings on $n$ vertices and then construct a hypergraph $G$ by putting together all the hyperedges from these $k$ stages. Note that $G$ still has $n$ vertices, while the number of hyperedges is $k\cdot \alpha n /t$ (since each stage has $\alpha n/t$ many hyperedges and we allow multiple hyperedges should they be sampled). Now we specify the constraints $\pi_e$ in \cref{def:UGhyper} for the $\Y,\N$ distributions:
\begin{itemize}
    \item $\Y$ distribution: sample $z\in \Z_r^n$ and for each $e\in E$, let {$\pi_e(x_e)=\sum_{i\in e}(x_i-z_i)$} (where by $i\in e$ we mean all the vertices in the hyperedge $e$).
        \item $\N$ distribution: for each $e\in E$, pick a uniform $q\in \Z_r$ and let {$\pi_e(x_e)=q - \sum_{i\in e}x_i$.}
\end{itemize}
It is clear that, in the $\Y$ distribution, the optimal solution is when all the $x_1,\ldots,x_n$ are just set to $z_1,\ldots,z_n$. Below we show that for the $\N$ distribution, the value of the optimal solution is at most $(1+\varepsilon)/r$ with high probability.

\begin{lemma}
    \label{lem:NOdistribution}
    Let $\varepsilon\in (0,1)$. If $k=O(rt\log(r)/(\alpha\varepsilon^2))$, then for the Unique Games instance sampled from {the distribution $\N$} above, the optimal fraction of satisfiable constraints  (i.e., number of hyperedges $e\in E$ for which $\pi_e(\cdot)$ evaluates to {$0$}) over all possible vertex labelling is at most $(1+\varepsilon)/r$ with high probability.
\end{lemma}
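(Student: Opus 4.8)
The plan is to condition on the random hypergraph $G$ (so that the argument runs over the randomness of the constraint labels alone, and the conclusion then holds for every $G$ that the multi-stage sampling could produce), fix a candidate vertex assignment $x\in\Z_r^n$, and analyze the number of constraints it satisfies as a sum of independent indicators, then finish with a Chernoff bound and a union bound over all $r^n$ assignments.

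First I would observe that $G$ has exactly $m := k\cdot\alpha n/t$ hyperedges, counted with multiplicity. For a fixed assignment $x$ and a fixed hyperedge $e$, since in the $\N$ distribution the value $q_e$ is uniform in $\Z_r$ and independent of everything else, $\Pr_{q_e}[\pi_e(x_e)=1]=\Pr_{q_e}[\sum_{i\in e}x_i=q_e]=1/r$. Hence $N_x:=\sum_{e\in E}\pi_e(x_e)$ is a sum of $m$ independent $\operatorname{Bernoulli}(1/r)$ random variables, with mean $\mu:=m/r=k\alpha n/(tr)$.

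Next I would apply the multiplicative Chernoff bound: for $\varepsilon\in(0,1)$, $\Pr[N_x\geq(1+\varepsilon)\mu]\leq\exp(-\varepsilon^2\mu/3)$. Choosing $k=c\,r(\log r)t/(\alpha\varepsilon^2)$ for a large enough absolute constant $c$ makes $\varepsilon^2\mu/3=\varepsilon^2 k\alpha n/(3tr)\geq 2n\ln r$, so that $\Pr[N_x\geq(1+\varepsilon)\mu]\leq r^{-2n}$. A union bound over all $r^n$ assignments $x\in\Z_r^n$ then shows that, with probability at least $1-r^{-n}$, every assignment satisfies at most $(1+\varepsilon)\mu=(1+\varepsilon)m/r$ constraints; equivalently, the optimal fraction of satisfied hyperedges is at most $(1+\varepsilon)/r$. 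Since this holds conditionally on every realization of $G$, it holds unconditionally with the same probability.

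The only point that needs care is the bookkeeping in the exponent: one must ensure the Chernoff exponent $\varepsilon^2\mu/3$ beats $n\ln r$, so that the union bound over the $r^n$ assignments survives — and this is exactly what forces the stated requirement $k=\Omega(r(\log r)t/(\alpha\varepsilon^2))$. Everything else is routine; one could optionally trim the union bound by quotienting out the additive symmetries of $x\mapsto\sum_{i\in e}x_i$, but this is unnecessary for the claimed parameters.
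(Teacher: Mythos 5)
Your proof is correct, and it takes a cleaner route than the paper's at the one point where the paper does real work. The paper analyzes the randomness of the hypermatchings and the labels jointly: it defines indicators $X^\ell_e$ for ``hyperedge $e$ appears in stage $\ell$ and is satisfied by $x$,'' argues (following Guruswami--Tao) that these are negatively correlated because edges within a stage form a matching, and then invokes a Chernoff bound for negatively correlated variables before the union bound over the $r^n$ assignments. You instead condition on the realized hypergraph $G$, observing that in the $\N$ distribution the labels $q_e$ are uniform and independent of $G$, so that for fixed $x$ the satisfaction indicators are genuinely i.i.d.\ $\mathrm{Bernoulli}(1/r)$ over the $m=k\alpha n/t$ constraints (each stage occurrence carrying its own fresh $q_e$); the standard multiplicative Chernoff bound then gives the same exponent $\varepsilon^2 k\alpha n/(3tr)$, the union bound goes through identically, and the conditional bound transfers to an unconditional one since it holds for every $G$. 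What your approach buys is the elimination of the negative-correlation lemma entirely, at no cost in parameters; what the paper's formulation buys is essentially nothing here beyond matching the template of~\cite{DBLP:conf/approx/GuruswamiT19}, where the joint treatment is the natural one. Your bookkeeping ($\varepsilon^2\mu/3\geq 2n\ln r$ for $k=\Theta(r(\log r)t/(\alpha\varepsilon^2))$) is exactly what the stated choice of $k$ is for, so the proposal is complete as written.
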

\begin{proof}
    The proof of this lemma is similar to the proof in~\cite[Lemma~4.1]{DBLP:conf/approx/GuruswamiT19}. Fix an assignment $x\in \Z_r^n$. Let $X^\ell_{e}$ be the random variable that indicates that the hyperedge $e\in E$ appears in the $\ell$-th stage and is satisfied by $x$. Let $S=\sum_{\ell,e}X^{\ell}_e$. The expectation of $S$ is $k\alpha n/t \cdot 1/r$, since the total number of hyperedges is $\alpha n/t$ for each of the $k$ stages and the probability that a uniform $x$ satisfies a $t$-hyperedge (i.e., probability that $\sum_{e\in E} x_e=q$ for some fixed $q$) is $1/r$. Using the same analysis as in~\cite{DBLP:conf/approx/GuruswamiT19}, we can show that the variables $X^\ell_{e}$ are negatively correlated. Indeed, first note that hyperedges from different stages are independent. Now suppose we know that the random variables $X^{\ell}_{e_1},\dots,X^{\ell}_{e_s}$ have value $1$, and we also know a hyperedge $e\in E$. If $e\cap e_u \neq \emptyset$ for some $u\in[s]$, then $X^\ell_e=0$, since the hyperedges of a given stage form a matching. Otherwise, the conditional expectation of $X^\ell_e$ (conditioned on $e\cap e_u = \emptyset$ for all $u\in[s]$) is $\frac{\alpha n/t - s}{r}\binom{n-ts}{t}^{-1}$, which is less than its unconditional expectation of $\frac{\alpha n/t}{r}\binom{n}{t}^{-1}$. Therefore, in all cases one has $\mathbb{E}[X^\ell_e|X^\ell_{e_1}=\cdots=X^\ell_{e_s} = 1] \leq \mathbb{E}[X^\ell_e]$, which means negative correlation.
    
    Hence, a Chernoff bound for negative-correlated variables {(see, e.g.,~\cite[Theorem~3.2]{Panconesi1997randomized})} yields
    $$
        \Pr[S\geq (1+\varepsilon)(k\alpha n/t)/r]\leq \exp(-\varepsilon^2 k\alpha n /(3rt))=\exp(-O(n\log r)),
    $$
    where the inequality used the choice of $k$. Applying a union bound over the set of $x\in \Z_r^n$ concludes the proof of the lemma.
\end{proof}

\paragraph{Reduction to Hypermatching.} The reduction to $r$-ary Hidden Hypermatching is similar to the analysis used by Guruswami and Tao~\cite{DBLP:conf/approx/GuruswamiT19}, but now it is from quantum streaming algorithms to one-way quantum communication complexity. The main lemma that we need is the following. 
\begin{lemma}
\label{lem:reductionfromyesnotobhm}
    Let $\varepsilon>0$. If there is a streaming algorithm using at most $c$ qubits of space that distinguishes between the $\Y$ and $\N$ distributions on Unique Games instances (with $k$ stages) with bias $1/3$, then there is a $c$-qubit protocol that distinguish between the $\mathsf{YES}$ and $\mathsf{NO}$ distributions of $r\text{-}\HH(\alpha,t,n)$ with bias $\Omega(1/k)$. 
\end{lemma}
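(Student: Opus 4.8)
The plan is a standard hybrid (telescoping) argument over the $k$ stages that embeds a single instance of $r\text{-}\HH(\alpha,t,n)$ into one stage. For $j\in\{0,1,\dots,k\}$, let $\mathcal{H}_j$ be the distribution over Unique Games instances sampled exactly as in the $\Y$/$\N$ construction (the same multi-stage hypergraph $G$ on $n$ vertices obtained from $k$ independent $\alpha$-partial $t$-hypermatchings), except that the constraints of stages $1,\dots,j$ are drawn as in $\Y$ --- consistent with one common hidden $z\in\Z_r^n$ --- while the constraints of stages $j+1,\dots,k$ are drawn as in $\N$ --- a fresh uniform $q\in\Z_r$ per hyperedge. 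Then $\mathcal{H}_0=\N$ and $\mathcal{H}_k=\Y$. Writing $a_j$ for the probability that the streaming algorithm outputs $1$ on input $\mathcal{H}_j$, the assumed distinguishing bias gives $|a_k-a_0|\ge 1/3$, and hence $\tfrac1k\sum_{j=1}^k|a_j-a_{j-1}|\ge\tfrac1{3k}$.

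\textbf{The protocol.} Alice's $r\text{-}\HH$ input $x\in\Z_r^n$ plays the role of the hidden $z$, and Bob has the $\alpha$-partial $t$-hypermatching $M$ together with $w\in\Z_r^{\alpha n/t}$. Using public coins, the players sample $j\in[k]$ uniformly, sample the hypermatchings of all stages other than $j$, and sample the uniform constraints of stages $j+1,\dots,k$; stage $j$'s hypermatching is taken to be Bob's $M$. The point is that stages $1,\dots,j-1$ depend only on $x$ (their hyperedges are public and their constraints are $[\sum_{i\in e}x'_i=\sum_{i\in e}x_i]$), stage $j$ depends only on $(M,w)$, and stages $j+1,\dots,k$ depend only on public coins. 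Presenting the edges to the streaming algorithm in the natural stage order, Alice runs the algorithm on stages $1,\dots,j-1$, sends the resulting $c$-qubit memory state to Bob, and Bob resumes the algorithm on stage $j$ and then on stages $j+1,\dots,k$, finally outputting its verdict.

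\textbf{Correctness.} On a $\YES$ instance of $r\text{-}\HH$ we have $w=Mx$, so the constraints of stage $j$ read $[\sum_{i\in e}x'_i=(Mx)_e]=[\sum_{i\in e}x'_i=\sum_{i\in e}x_i]$, i.e.\ stage $j$ is itself consistent with $z:=x$; the instance the algorithm is run on is then distributed exactly as $\mathcal{H}_j$ (stages $1,\dots,j$ consistent with $z$, the rest random). On a $\NO$ instance $w$ is uniform, so stage $j$ becomes a fresh random-constraint stage, and the instance is distributed exactly as $\mathcal{H}_{j-1}$. Here one uses that in the Unique Games construction each stage's hypermatching is an independent uniform element of $\mathcal{M}_{t,n}^\alpha$, so planting Bob's uniform $M$ at position $j$ is distribution-preserving, and that $z$ in the $\Y$ construction and Alice's $x$ in the hard distribution of $r\text{-}\HH$ are both uniform in $\Z_r^n$. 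Averaging over the uniform $j$, Bob outputs $1$ with probability $\tfrac1k\sum_{j=1}^k a_j$ on a $\YES$ instance and $\tfrac1k\sum_{j=1}^k a_{j-1}=\tfrac1k\sum_{j=0}^{k-1}a_j$ on a $\NO$ instance, and the difference equals $\tfrac1k(a_k-a_0)$, whose absolute value is $\ge\tfrac1{3k}$; flipping Bob's output if necessary yields a $c$-qubit protocol distinguishing $\YES$ from $\NO$ with bias $\Omega(1/k)$.

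\textbf{Main obstacle.} There is no single deep step here; the work is entirely in the bookkeeping. One must verify that the embedded instance has exactly the hybrid distribution $\mathcal{H}_{j-1}$ or $\mathcal{H}_j$ --- in particular that the planted stage carries the correct distribution and that the $\YES$ promise $w=Mx$ is precisely what promotes stage $j$ to a $z$-consistent stage --- and that a one-way Alice-to-Bob message of $c$ qubits suffices because the $x$-dependent stages can be placed first in the stream while the $(M,w)$-dependent stage and the public-coin stages come afterward. The unavoidable cost of the $k$-stage hybrid is the $\Omega(1/k)$ bias.
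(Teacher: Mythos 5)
Your proposal is correct, and the embedding itself (Alice simulates the early stages using $z=x$, ships the $c$-qubit memory, Bob plants his $(M,w)$ as the next stage) is the same as the paper's. Where you genuinely diverge is in how the $\Omega(1/k)$ bias is extracted. The paper defines the memory states $\ket{\phi^\Y_j},\ket{\phi^\N_j}$ obtained by running the algorithm on $\Y$- and $\N$-prefixes, invokes a fixed \emph{informative index} $j^\ast$ (a stage at which the trace distance between these states grows by $\Omega(1/k)$, guaranteed by a triangle inequality), stops the simulation right after stage $j^\ast+1$, and has Bob apply the Holevo--Helstrom measurement to distinguish the two possible resulting states; correctness then needs a triangle inequality together with the fact that appending Bob's stage is a norm-preserving (unitary) operation, and the planted NO-case state only approximates $\ket{\phi^\N_{j^\ast+1}}$ up to the prior-stage distance, which is exactly what the informative-index condition absorbs. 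You instead introduce the explicit hybrids $\mathcal{H}_0=\N,\dots,\mathcal{H}_k=\Y$, pick the stage index by public coins, run the algorithm to completion, and use its own output bit; the planted instance is then \emph{exactly} distributed as $\mathcal{H}_j$ (YES) or $\mathcal{H}_{j-1}$ (NO), so the acceptance probabilities telescope to $\tfrac1k(a_k-a_0)$ with no approximation, no informative index, and no Helstrom step. Your route is arguably cleaner: it sidesteps the paper's informal assertion that Bob can ``compute'' $\ket{\phi^{\YES}},\ket{\phi^{\NO}}$ (which really depend on Alice's input and randomness and should be treated as averaged mixed states), and it transfers verbatim to the classical case, where the paper must redo the argument with total variation distance and Claim 6.5 of Kapralov--Khanna--Sudan. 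What the paper's version buys is a deterministic stage choice and the (unused) information that distinguishing already succeeds from the memory state right after the planted stage. Two small points to make explicit in a final write-up: the bias you obtain is distributional over the hard distribution (uniform $x$, uniform $M$), which is what the communication lower bound of Theorem~\ref{thm:hypermatchinglowerbound} addresses, and the shared randomness (and the possible output flip) can be fixed to its best value by averaging, so the public coins cost nothing.
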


In order to prove this lemma we need a few definitions and facts. First, towards proving the lemma above, let us assume there is a $c$-qubit streaming $\A$ for \cref{lem:reductionfromyesnotobhm}. During the execution of the streaming protocol on instances from the $\Y$ and $\N$ distributions, let the memory content after receiving the $i$th stage constraints be given by the $c$-qubit quantum states $\ket{\phi_i^\Y}$ and $\ket{\phi_i^\N}$, respectively.\footnote{Without loss of generality, we assume they are pure states---this only affects the cost of the protocol by a constant factor (since one can always purify mixed quantum states by doubling the dimension).} Assume that $|\phi_0^\Y\rangle = |\phi_0^\N\rangle = 0$. Using the notion of informative index from~\cite[Definition~6.2]{kapralov2014streaming}, we say an index $j\in \{0,\ldots,k-1\}$ is $\delta$-informative if 
$$
    \big\|\ket{\phi^\Y_{j+1}}-\ket{\phi^\N_{j+1}}\big\|_1\geq \big\|\ket{\phi^\Y_{j}}-\ket{\phi^\N_{j}}\big\|_1+\delta.
$$
With this definition it is not hard to see the following fact, which follows from a simple triangle inequality.
\begin{fact}
    \label{fact:informativeindex}
    Suppose there exists a streaming protocol for distinguishing the $\Y,\N$ distributions with advantage $\geq 1/3$, then there exists a $\Omega(1/k)$-informative index.
\end{fact}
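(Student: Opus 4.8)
The plan is to reduce this to a telescoping/pigeonhole argument. For $j \in \{0,\dots,k\}$ write $D_j := \big\|\ket{\phi_j^\Y} - \ket{\phi_j^\N}\big\|_1$ for the distance between the two memory contents after stage $j$ (equivalently, up to a universal constant factor and after fixing the optimal relative phase, the trace distance $\big\|\ketbra{\phi_j^\Y}{\phi_j^\Y} - \ketbra{\phi_j^\N}{\phi_j^\N}\big\|_1$ between the associated density matrices). The goal is to establish three things: (i) $D_0 = 0$; (ii) $D_k = \Omega(1)$; and (iii) that some one-step increment $D_{j+1} - D_j$ is of size $\Omega(1/k)$, which is exactly the assertion that an $\Omega(1/k)$-informative index exists.

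Step (i) is immediate from the normalization $\ket{\phi_0^\Y} = \ket{\phi_0^\N} = 0$. For step (ii), I would observe that the algorithm's decision ``$\Y$ versus $\N$'' is a (possibly randomized) measurement applied to the final memory register only, so a protocol achieving distinguishing advantage $\geq 1/3$ in particular distinguishes $\ket{\phi_k^\Y}$ from $\ket{\phi_k^\N}$ with the same advantage; the Holevo--Helstrom bound (Fact~\ref{lem:lem3.5.c3}) with equal priors then forces $\big\|\ketbra{\phi_k^\Y}{\phi_k^\Y} - \ketbra{\phi_k^\N}{\phi_k^\N}\big\|_1$ to be a positive constant, whence $D_k = \Omega(1)$ (the reduction to pure memory states flagged in the footnote only costs a constant factor). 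For step (iii), I would telescope: $D_k = D_k - D_0 = \sum_{j=0}^{k-1}(D_{j+1} - D_j)$, so if no index $j$ were $\delta$-informative, i.e.\ $D_{j+1} < D_j + \delta$ for all $j$, then $D_k < k\delta$; choosing $\delta$ a small enough constant multiple of $D_k/k$ contradicts this, so some index is $\Omega(1/k)$-informative.

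The hard part is essentially nonexistent: the only care needed is in pinning down the precise constant relating ``advantage (or bias) $1/3$'' to the lower bound on $D_k$, and in the bookkeeping between the pure-state vector norm $\|\cdot\|_1$ and the trace distance of the corresponding density matrices. The conceptual content is simply that a nonnegative quantity that starts at $0$ and reaches a universal constant after $k$ increments must jump by $\Omega(1/k)$ at one of the steps, which is the ``simple triangle inequality'' referred to in the statement.
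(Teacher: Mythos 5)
Your proposal is correct and matches the paper's intended argument: the paper gives no explicit proof, noting only that the fact ``follows from a simple triangle inequality,'' which is precisely your hybrid/telescoping argument ($D_0=0$, $D_k=\Omega(1)$ via Holevo--Helstrom applied to the final memory states, and a pigeonhole over the $k$ increments). The only care needed, which you already flag, is the constant bookkeeping between the advantage $1/3$ and the lower bound on $D_k$, and between the vector $1$-norm of $\ket{\phi_k^\Y}-\ket{\phi_k^\N}$ and the trace distance of the corresponding density matrices --- a point the paper itself glosses over.
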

Suppose $j^*$ is an $\Omega(1/k)$-informative index for the streaming protocol $\A$. Using this we devise a communication protocol for $r\text{-}\HH(\alpha,t,n)$ with bias $\Omega(1/k)$ as follows: suppose Alice has a string $x\in \Z_r^n$ and Bob has $w\in \Z_r^{\alpha n/t}$ and a hypermatching $M\in\mathcal{M}_{t,n}^\alpha$.
\begin{enumerate}
    \item Alice samples $j^*$ many $\alpha$-partial $t$-hypermatchings and runs the streaming algorithm $\A$ on Unique Games constraints for the first $j^*$ stages that follow the $\Y$ distribution with $z=x$. She then sends the memory contents after these $j^\ast$ stages to Bob.
    \item Bob assigns the constraints $\sum_{i\in e}x_i=w_e$, where $e\in M$, according to his inputs $w,M$. He then continues running $\A$ on these constraints as the $(j^*+1)$th~stage. 

    Let $\ket{s}$ be the quantum state that Bob gets after running $\mathcal{A}$.
    \item Let $\ket{\phi^\YES}$ and $\ket{\phi^\NO}$ be the resulting quantum states under the two cases, depending on $w$'s distribution (these can be computed by Bob since $\mathcal{A}$ is known). Bob can distinguish between $\ket{\phi^\YES}$ and $\ket{\phi^\NO}$ with bias $\frac{1}{2}\big\||\phi^\YES\rangle-|\phi^\NO\rangle\big\|_1$ by measuring the state $|s\rangle$ with a suitable POVM, according to \cref{lem:lem3.5.c3}.
\end{enumerate}

We are now ready to prove \cref{lem:reductionfromyesnotobhm}.
\begin{proof}[Proof of {\rm \cref{lem:reductionfromyesnotobhm}}]
    We argue that the above protocol achieves a $\Omega(1/k)$ bias in distinguishing between the $\mathsf{YES}$ and $\mathsf{NO}$ distributions from $r\text{-}\HH(\alpha,t,n)$. To this end, let $U$ be the unitary that maps the quantum state after stage $j^*$ and constraints of stage $j^*+1$ (which is classical) to the quantum state after $j^*+1$. Thus we have $\ket{\phi^\YES}=\ket{\phi^\Y_{j^*+1}}=U\ket{\phi^\Y_{j^*},C^\Y}$ and $\ket{\phi^\NO}=U\ket{\phi^\Y_{j^*},C^\N}$, where $C^\Y$ and $C^\N$ are the constraints corresponding to the $\YES$ and $\NO$ distributions, respectively, and, similarly, we have $\ket{\phi^\N_{j^*+1}}=U\ket{{\phi^\N_{j^*}},C^\N}$. Then, we have
    \begin{align*}
        \big\|\ket{\phi^\YES}-\ket{\phi^\NO}\big\|_1 &\geq      \big\|\ket{\phi^\Y_{j^*+1}}-\ket{\phi^\N_{j^*+1}}\big\|_1-     \big\|\ket{\phi^\NO}-\ket{\phi^\N_{j^*+1}}\big\|_1\\
        &\geq \big\|\ket{\phi^\Y_{j^*+1}}-\ket{\phi^\N_{j^*+1}}\big\|_1 -     \big\|\ket{\phi^\Y_{j^*}}-\ket{\phi^\N_{j^*}}\big\|_1 =\Omega(1/k),
    \end{align*}
    where the second inequality used that  $\big\|\ket{\phi^\NO}-\ket{\phi^\N_{j^*+1}}\big\|_1 = \big\|U\ket{\phi^\Y_{j^*},C^\N}-U\ket{\phi^\N_{j^*},C^\N}\big\|_1\leq \|\ket{\phi^\Y_{j^*}}-\ket{\phi^\N_{j^*}}\|_1$ (since unitaries preserve norms) and the third inequality is because $j^*$ is an informative index. Hence in Step (3) of the procedure above, the bias of Bob in obtaining the right outcome is~$\Omega(1/k)$.
\end{proof}

\begin{proof}[Proof of {\rm \cref{thm:streaminguglowerbound}}] Finally, by picking $k=O(rt\log(r)/(\alpha\varepsilon^2))$ in order to invoke \cref{lem:NOdistribution} and using our lower bound in \cref{thm:hypermatchinglowerbound} with $\alpha= O(1)$, we get our desired lower bound of
\[
\Omega(r^{-(1+1/t)}(k^2\alpha)^{-2/t}(n/t)^{1-2/t})=\Omega((n/t)^{1-2/t}). \qedhere
\]
\end{proof}
\begin{remark}
    It is possible to prove a classical version of {\rm \cref{thm:streaminguglowerbound}} by using the classical lower bound on $r$-$\mathsf{HH}(\alpha,t,n)$, which leads to $\Omega((n/t)^{1-1/t})$ classical space (hiding dependence on $r,\varepsilon$). Such bound, though, is already subsumed by $\Omega(n)$ from Chou et al.~{\rm \cite{chou2022linear}}.
\end{remark}

\section{Locally Decodable Codes}
In this section we prove our lower bound on locally decodable codes over $\Z_r$. Before that, let us first formally define an $\mathsf{LDC}$.
\begin{definition}[Locally decodable code]
    A $(q,\delta,\varepsilon)$-locally decodable code over $\Z_r$ is a function $C:\Z_r^n\rightarrow \Z_r^N$ that satisfies the following: for every $x\in \Z_r^n$ and $i\in [n]$, there exists a (randomized) algorithm $\A$ that, on any input $y\in \Z_r^N$ that satisfies $d(y,C(x))\leq \delta N$, makes $q$ queries to $y$ non-adaptively and outputs a number $\A^{y}(i)\in \Z_r$ that satisfies $\Pr[\A^{y}(i)=x_i]\geq 1/r+\varepsilon$ (where the probability is only taken over the randomness of $\A$).
\end{definition}
As is often the case when proving $\mathsf{LDC}$ lower bounds, we use the useful fact proven by Katz and Trevisan~\cite{katz2000efficiency} that, without loss of generality, one can assume that an $\mathsf{LDC}$ is smooth, i.e., the queries made by $\A$ have ``reasonable" probability over all indices, and that $\A$ makes queries to a codeword (and not a corrupted codeword). We first formally define a smooth code below. 
\begin{definition}[Smooth code]
    We say $C:\Z_r^n\rightarrow \Z_r^N$ is a $(q,c,\varepsilon)$-smooth code if there exists a decoding algorithm $\A$ that satisfies the following: for every $x\in \Z_r^n$ and $i\in [n]$, $\A$ makes at most $q$ non-adaptive queries to $C(x)$ and outputs $\A^{C(x)}(i)\in \Z_r$ such that $\Pr[\A^{C(x)}(i)=x_i]\geq 1/r+\varepsilon$ (where the probability is only taken over the randomness of $\A$). Moreover, for every $x\in \Z_r^n$, $i\in [n]$ and $j\in [N]$, on input $i$, the probability that $\A$ queries the index $j$ in $C(x)\in \Z_r^N$ is at most $c/N$.
\end{definition}
Crucially note that smooth codes only require a decoder to recover $x_i$ when given access to an actual codeword, unlike the standard definition of $\mathsf{LDC}$ where a decoder is given a noisy codeword. With this definition in hand, we state a theorem of Katz and Trevisan.
\begin{theorem}[\cite{katz2000efficiency}]
    \label{thr:ldc-smoothcode-equivalence}
    A $(q,\delta,\varepsilon)$-$\mathsf{LDC}$ $C:\Z_r^n\rightarrow \Z_r^N$ is a $(q,q/\delta,\varepsilon)$-smooth code.
\end{theorem}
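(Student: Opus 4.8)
The plan is to reproduce the classical reduction of Katz and Trevisan~\cite{katz2000efficiency}: take the given (corruption-tolerant) $\LDC$ decoder and turn it into a decoder that queries only an actual codeword, by locating the few ``heavily queried'' positions and simply refusing to query them — the point being that there are so few such positions that pretending they are erased looks like admissible corruption.

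Concretely, fix $i\in[n]$ and let $\A$ be the decoder of the $(q,\delta,\varepsilon)$-$\LDC$. Since $\A$ is non-adaptive, on input $i$ it selects a random set $Q_i\subseteq[N]$ of at most $q$ query positions using only its internal randomness, independently of the contents of the oracle. Set $p_{i,j} := \Pr[j\in Q_i]$, so $\sum_{j=1}^N p_{i,j} = \E[|Q_i|]\le q$. Call an index $j$ \emph{bad for $i$} if $p_{i,j} > q/(\delta N)$, and let $B_i$ be the set of bad indices; then $|B_i|\cdot \tfrac{q}{\delta N} < \sum_{j\in B_i}p_{i,j}\le q$, so $|B_i| < \delta N$.

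Now define the smooth decoder $\A'$: on input $i$, given oracle access to a string $z\in\Z_r^N$, it simulates $\A$ on input $i$, answering each query to a position $j\in B_i$ by the fixed symbol $0$ and each query to a position $j\notin B_i$ by $z_j$, and outputs whatever $\A$ outputs. First, $\A'$ is non-adaptive, makes at most $q$ queries, and on input $i$ queries a position $j$ with probability $0$ if $j\in B_i$ and with probability at most $p_{i,j}\le q/(\delta N) = (q/\delta)/N$ if $j\notin B_i$ — exactly the smoothness bound with $c=q/\delta$. For correctness, note that for any $x\in\Z_r^n$ the execution of $\A'^{C(x)}(i)$ has the same distribution as the execution of $\A^{y}(i)$, where $y$ is the string equal to $0$ on $B_i$ and equal to $C(x)$ everywhere else. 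Since $d(y,C(x))\le |B_i| < \delta N$, the $\LDC$ guarantee applies to $y$, giving $\Pr[\A^{y}(i)=x_i]\ge 1/r+\varepsilon$, and hence $\Pr[\A'^{C(x)}(i)=x_i]\ge 1/r+\varepsilon$, as required.

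The only place that needs genuine care is the use of non-adaptivity: it is what makes the query set $Q_i$, and thus each $p_{i,j}$, a function of $\A$'s randomness alone, so that ``replace every query inside $B_i$ by a fixed symbol'' is literally the same as ``run $\A$ on a codeword corrupted only within $B_i$''. Once this is in place the argument is just the counting bound $|B_i|<\delta N$. (Handling adaptive decoders would require first reducing to the non-adaptive case, but the statement as phrased concerns $q$ non-adaptive queries, so this does not arise.)
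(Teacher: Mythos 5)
Your proof is correct and is exactly the standard Katz--Trevisan reduction that the paper invokes by citation: bound the number of positions queried with probability exceeding $q/(\delta N)$ by $\delta N$, and note that refusing to query them is equivalent to running the original decoder on a codeword corrupted only on those positions. Nothing further is needed.
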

\noindent We remark that a converse to this theorem holds: a $(q, c, \varepsilon)$-smooth code is a $(q,\delta,\varepsilon - c\delta)$-$\mathsf{LDC}$, since the probability that the decoder queries one of $\delta N$ corrupted positions is at most $(c/N)(\delta N) = c\delta$.

We now present our lower bound for $\mathsf{LDC}$s over $\mathbb{Z}_r$ using the non-commutative Khintchine's inequality. We thank Jop Bri\"{e}t for the proof.
\begin{remark}
    It is possible to prove a weaker lower bound $N= 2^{\Omega(\delta^2 \varepsilon^4 n/r^4)}$ using our matrix-valued hypercontractivity via the proof technique in~{\rm \cite[Theorem~11]{ben2008hypercontractive}}.
\end{remark}
\begin{theorem}
    \label{thm:2querylowerboundhyper}
    If $C:\Z_r^n\rightarrow \Z_r^N$ is a $(2,\delta,\varepsilon)$-$\mathsf{LDC}$, then $N= 2^{\Omega(\delta^2 \varepsilon^2 n/r^2)}$.
\end{theorem}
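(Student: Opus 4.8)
The plan is to adapt the Kerenidis--de Wolf and Ben-Aroya--Regev--de Wolf arguments for $2$-query $\LDC$s, pushing them through the matrix hypercontractive inequality over $\Z_r$ (Theorem~\ref{thm:genmathypercontractivity}) while tracking the dependence on $r$. First I would reduce: by Theorem~\ref{thr:ldc-smoothcode-equivalence}, $C$ is a $(2,2/\delta,\varepsilon)$-smooth code, and then Theorem~\ref{thr:ldc2} (with $q=2$ and $c=2/\delta$) produces, for every $i\in[n]$, a matching $M_i$ consisting of $|M_i|\ge \varepsilon N/(2cq)=\varepsilon\delta N/8$ pairwise disjoint pairs $Q=\{j_1,j_2\}\subseteq[N]$ with
\[
\sum_{k=1}^{r-1}\sum_{S\in\Z_r^{2}}\bigl|\mathbb{E}_{x}[\omega_r^{\,S\cdot C(x)_Q-kx_i}]\bigr|\ \ge\ \frac{\varepsilon r}{2}.
\]
This chain is precisely the ``technical lemma'' alluded to in the introduction, packaging Katz--Trevisan smoothing and Kerenidis--de Wolf/Ben-Aroya et al.\ linearization, generalized to $\Z_r$.

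Next I would set up a rank-one matrix-valued function. Index $\mathbb{C}^{rN}$ by $(a,j)\in\Z_r\times[N]$, let $v_x\in\mathbb{C}^{rN}$ have entries $v_x[(a,j)]:=(rN)^{-1/2}\,\omega_r^{\,a\,C(x)_j}$, and set $f(x):=v_xv_x^{\mathsf T}$ (a padded $r^2N\times r^2N$ variant works identically). Since $\|v_x\|_2=1$, each $f(x)$ is a rank-one matrix whose unique singular value is $1$, so $\|f(x)\|_p=1$ for \emph{every} $p\in[1,2]$; hence the right-hand side of Theorem~\ref{thm:genmathypercontractivity} equals $1$ exactly. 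The weight-one Fourier coefficient at $S=ke_i$ (the string with $i$th coordinate $k$, all others $0$) has entries
\[
\widehat f(ke_i)_{(a,j),(a',j')}\ =\ \frac{1}{rN}\,\mathbb{E}_{x}[\omega_r^{\,a\,C(x)_j+a'\,C(x)_{j'}-kx_i}],
\]
which are exactly the expectations appearing in the reduction above.

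Then I would lower bound $\|\widehat f(ke_i)\|_1$ using $\|A\|_1=\max_{\|U\|_\infty\le1}|\Tr(AU)|$: choose $U$ supported on the blocks sending the $j_2$-rows to the $j_1$-columns for $Q\in M_i$, equal there to an $r\times r$ matrix of unit-modulus, phase-aligned entries; since the $2|M_i|$ endpoints are distinct, $U$ is block-diagonal on its support with blocks of operator norm at most $r$, so $\|U\|_\infty\le r$ and
\[
\|\widehat f(ke_i)\|_1\ \ge\ \frac1r\sum_{Q\in M_i}\sum_{S\in\Z_r^{2}}\bigl|\widehat f(ke_i)_{(S_1,j_1),(S_2,j_2)}\bigr|\ =\ \frac1{r^2N}\sum_{Q\in M_i}\sum_{S\in\Z_r^{2}}\bigl|\mathbb{E}_{x}[\omega_r^{\,S\cdot C(x)_Q-kx_i}]\bigr|.
\]
Summing over $k$, using the reduction and $|M_i|\ge\varepsilon\delta N/8$ (the factor $N$ cancels), and then Cauchy--Schwarz over the $r-1$ values of $k$,
\[
\sum_{k=1}^{r-1}\|\widehat f(ke_i)\|_1\ \ge\ \frac{\varepsilon^2\delta}{16\,r},\qquad
\sum_{k=1}^{r-1}\|\widehat f(ke_i)\|_1^2\ \ge\ \frac{\varepsilon^4\delta^2}{256\,r^2(r-1)}.
\]
Finally, for any matrix $A$ of rank at most $D:=rN$, Hölder on the singular values gives $\|A\|_p\ge D^{-(p-1)}\|A\|_1$; keeping only the weight-one terms in Theorem~\ref{thm:genmathypercontractivity} and using $\rho^2=\zeta\ge(p-1)/(r-1)$ (Remark~\ref{rem:1}),
\[
1\ \ge\ \sum_{S\in\Z_r^n}\rho^{2|S|}\|\widehat f(S)\|_p^2\ \ge\ \frac{p-1}{r-1}\,D^{-2(p-1)}\sum_{i=1}^{n}\sum_{k=1}^{r-1}\|\widehat f(ke_i)\|_1^2\ \ge\ \frac{(p-1)\,D^{-2(p-1)}\,\varepsilon^4\delta^2\,n}{256\,r^2(r-1)^2}.
\]
Hence $n\le 256\,r^2(r-1)^2\,\varepsilon^{-4}\delta^{-2}\,D^{2(p-1)}/(p-1)$, and choosing $p-1=1/(2\ln D)$ gives $D^{2(p-1)}/(p-1)=2e\ln D$, so $n=O\bigl(r^4(\log N)/(\varepsilon^4\delta^2)\bigr)$ and therefore $N=2^{\Omega(\delta^2\varepsilon^4 n/r^4)}$.

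I expect the main obstacle to be the parameter bookkeeping rather than the structure: once $f$ is fixed the skeleton (reduction, rank-one matrix, hypercontractivity restricted to weight one, optimize $p$) is essentially forced, but making the exponent lose exactly $r^4$ (and $\delta^2\varepsilon^4$) requires care at the two $r$-sensitive places — the comparison between the Schatten-$1$ norm of each $r\times r$ block and its entrywise $\ell_1$ norm (which is what forces $\|U\|_\infty\le r$), and the lower bound $\zeta\ge(p-1)/(r-1)$ on the hypercontractive noise parameter — together with not wasting anything in Theorem~\ref{thr:ldc-smoothcode-equivalence} or in the two Cauchy--Schwarz steps.
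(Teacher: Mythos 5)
Your proposal is correct and follows the same high-level strategy as the paper (Katz--Trevisan smoothing, the linearization of Theorem~\ref{thr:ldc2}, a rank-one matrix-valued function, matrix hypercontractivity applied only to the weight-one Fourier coefficients with $\zeta\geq (p-1)/(r-1)$, and the choice $p-1\approx 1/\log$ of the underlying dimension), and it lands on the same $\delta^2\varepsilon^4/r^4$ exponent. Where you genuinely diverge is the middle step that turns the ``good'' entries $\E_x[\omega_r^{S\cdot C(x)_Q-kx_i}]$ into a Schatten-norm lower bound for $\widehat f(ke_i)$. The paper pads the construction to an $r^2N\times r^2N$ matrix, choosing the repetition indices $m_1=S_2$, $m_2=S_1$ so that all relevant entries occupy distinct rows and columns, permutes them onto the diagonal, and invokes the unitarily-invariant-norm fact $\|\mathrm{diag}(A)\|\leq\|A\|$ (Lemma~\ref{lem:ldc3}), then applies H\"older twice to pass from $\|\cdot\|_p^p$ of the diagonal to the $\varepsilon r/2$ mass. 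You instead keep the matrix at dimension $rN$ and lower bound $\|\widehat f(ke_i)\|_1$ directly by Schatten $1$--$\infty$ duality against a phase-aligned test matrix supported on the disjoint $r\times r$ blocks indexed by $Q\in M_i$, paying $\|U\|_\infty\leq r$, and then convert $\|\cdot\|_1$ to $\|\cdot\|_p$ via the dimension-H\"older bound $\|A\|_p\geq D^{-(p-1)}\|A\|_1$. Your route avoids the padding/permutation bookkeeping and the diagonal-restriction lemma entirely, at the cost of an explicit factor-$r$ loss in the duality step; the paper's route keeps the entries literally on the diagonal so it can work with $\|\cdot\|_p^p$ directly, at the cost of a larger ambient dimension and more delicate index management. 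The losses balance out: both give $n=O(r^4\log N/(\delta^2\varepsilon^4))$. Two small points to tighten: Theorem~\ref{thr:ldc2} only guarantees sets $Q$ of size \emph{at most} $2$, so you should say a word about singleton $Q$ (your block argument still works with the diagonal block on $(\cdot,j_1)$, just as the paper handles it in a footnote by setting $Q=(Q_1,Q_1)$, $S=(S_1,0)$); and in the last line the bound you actually derive is on $\log(rN)$, so, exactly as in the paper, you should note that the $\log r$ term is absorbed by adjusting the constant in the $\Omega(\cdot)$.
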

\begin{proof}
    We know that $C$ is also a $(2,2/\delta,\varepsilon)$-smooth code. Fix some $i\in[n]$. In order to decode $x_i$, we can assume, without loss of generality, that the decoder $\A$ picks some set $\{u,v\}$, where $u,v\in[N]$, with probability $p(u,v)$, queries those bits, and then outputs $f^{u,v}_i(C(x)_{u},C(x)_v)\in\mathbb{Z}_r$ that depends on the query-outputs. Given the smooth code property of outputting $x_i$ with probability at least $\frac{1}{r} + \varepsilon$ for every $x$, we have
    \begin{align*}
		\frac{1}{r} + \varepsilon &\leq \operatorname*{Pr}_{u,v\sim[N]}[f^{u,v}_i(C(x)_u,C(x)_v) = x_i] = \frac{1}{r}\sum_{k=0}^{r-1}\sum_{u,v\in[N]}p(u,v)\omega_r^{k(f^{u,v}_i(C(x)_u,C(x)_v) - x_i)} \iff \\
		r\varepsilon &\leq \sum_{k=1}^{r-1}\sum_{u,v\in[N]}p(u,v)\omega_r^{k(f^{u,v}_i(C(x)_u,C(x)_v) - x_i)},
    \end{align*}
    {where the if and only if comes from separating the term $k=0$, which equals $\sum_{u,v\in[N]}p(u,v) = 1$.}
 
	For $k\in[r-1]$, define the function $h_{i;k}^{u,v}:\mathbb{Z}_r^{2}\to\mathbb{C}$ by $h_{i;k}^{u,v}(x) = \omega_r^{kf_i^{u,v}(x)}$. Consider its Fourier transform, $\widehat{h_{i;k}^{u,v}}: \mathbb{Z}_r^{2}\to\mathbb{C}$, defined by
	\begin{align*}
		\widehat{h_{i;k}^{u,v}}(S) = \frac{1}{r^{2}}\sum_{x\in\mathbb{Z}_r^{2}}h_{i;k}^{u,v}(x)\omega_r^{-S\cdot x}.
	\end{align*}
	Hence we can write
	\begin{align*}
		\omega_r^{kf_i^{u,v}(C(x)_u,C(x)_v)} = \sum_{a,b\in\mathbb{Z}_r}\widehat{h_{i;k}^{u,v}}(a,b)\omega_r^{aC(x)_u + bC(x)_v}.
	\end{align*}
	Let $F^i_{k}\in\mathbb{C}^{rN\times rN}$ be the matrix defined as $(F^i_{k})_{(u,a),(v,b)} = p(u,v)\widehat{h_{i;k}^{u,v}}(a,b)$ and let $C(x)\in\mathbb{C}^{rN}$ be the vector defined as $C(x)_{(u,a)} = \omega_r^{aC(x)_u}$. Then
	\begin{align*}
		\sum_{u,v\in[N]} p(u,v)\omega_r^{kf_i^{u,v}(C(x)_u,C(x)_v)} = C(x)^{\top}F^i_{k}C(x),
	\end{align*}
	which means that
	\begin{align*}
		\sum_{k=1}^{r-1}\operatorname*{\mathbb{E}}_{x\sim\mathbb{Z}_r^n}\left[\omega_r^{-kx_i}C(x)^{\top}F^i_{k}C(x) \right] = \sum_{k=1}^{r-1}\sum_{u,v\in[N]} p(u,v)\operatorname*{\mathbb{E}}_{x\sim\mathbb{Z}_r^n}\left[\omega_r^{k(f_i^{u,v}(C(x)_u,C(x)_v) - x_i)}\right] \geq r\varepsilon.
	\end{align*}
	By summing over all $i\in[n]$, we finally get to
	\begin{align*}
		\sum_{k=1}^{r-1}\operatorname*{\mathbb{E}}_{x\sim\mathbb{Z}_r^n}\left[C(x)^{\top}\left(\sum_{i=1}^n \omega_r^{-kx_i}F^i_{k}\right)C(x) \right] \geq r\varepsilon n.
	\end{align*}
	The left-hand side can be upper bounded as follows:
	\begin{align*}
		\sum_{k=1}^{r-1}\operatorname*{\mathbb{E}}_{x\sim\mathbb{Z}_r^n}\left[C(x)^{\top}\left(\sum_{i=1}^n \omega_r^{-kx_i}F^i_{k}\right)C(x) \right] &\leq  \sum_{k=1}^{r-1}\operatorname*{\mathbb{E}}_{x\sim\mathbb{Z}_r^n}\left[\left\|\sum_{i=1}^n \omega_r^{-kx_i}F^i_{k}\right\|\|C(x)\|^2_2\right]\\
		&= rN \sum_{k=1}^{r-1}\operatorname*{\mathbb{E}}_{x\sim\mathbb{Z}_r^n}\left[\left\|\sum_{i=1}^n \omega_r^{-kx_i}F^i_{k}\right\|\right]\\
		&\leq 2rN\sqrt{2\log(2rN)} \sum_{k=1}^{r-1}\sqrt{\sum_{i=1}^n \|F_k^i\|^2},
	\end{align*}
	where we used \cref{lem:lem1} in the last step. Consider the submatrix $[p(u,v)\widehat{h_{i;k}^{u,v}}(a,b)]_{a,b}$ of $F_k^i$ for fixed $u,v\in[N]$. Since
	\begin{align*}
		\|[p(u,v)\widehat{h_{i;k}^{u,v}}(a,b)]_{a,b}\|^2 \leq \|[p(u,v)\widehat{h_{i;k}^{u,v}}(a,b)]_{a,b}\|_F^2 = \sum_{a,b\in\mathbb{Z}_r}p(u,v)^2 |\widehat{h_{i;k}^{u,v}}(a,b)|^2 = p(u,v)^2 \leq \frac{4}{\delta^2 N^2},
	\end{align*}
	where we used Parseval's identity and $p(u,v) \leq 2/\delta N$ for all $u,v\in[N]$ by the definition of smooth code, then $\|F_k^i\|^2 \leq \frac{4}{\delta^2 N^2}$. This finally leads to
	\[
		2rN\sqrt{2\log(2rN)}(r-1)\frac{2\sqrt{n}}{\delta N} \geq r\varepsilon n \implies \log(2rN) \geq \frac{\delta^2\varepsilon^2}{8r^2}n \implies N\geq \frac{1}{2r}2^{\delta^2\varepsilon^2 n/8r^2}. \qedhere
	\]
\end{proof}

\section{2-server private information retrieval}
\label{sec:PIR}

As mentioned in the introduction, the connection between $\mathsf{LDC}$s and \textsf{PIR} is well known since the results of~\cite{katz2000efficiency,goldreich2002lower}. In general, upper bounds on $\mathsf{LDC}$s  are derived via \textsf{PIR} schemes, which in turn means that our $\mathsf{LDC}$ lower bounds translate to $\textsf{PIR}$ lower bounds, which we illustrate below. We first define the notion of private information retrieval.
\begin{definition} 
    A one-round, $(1-\delta)$-secure, $k$-server private information retrieval $\mathsf{(PIR)}$ scheme with alphabet $\Z_r$, recovery probability $1/r+\varepsilon$, query size $t$ and answer size $a$, consists of a randomized user and $k$ deterministic algorithms $S_1,\ldots,S_k$ (the servers) that satisfy the following:
    \begin{enumerate}
        \item On input $i\in [n]$, the user produces $k$ queries $q_1,\ldots,q_k\in \Z_r^t$ and sends them to the $k$ servers respectively. The servers reply back with a string $a_j=S_j(x,q_j)\in\Z_r^a$, and based on $a_1,\ldots,a_k$ and $i$, the user outputs $b\in \Z_r$.
        \item For every $x\in \Z_r^n$ and $i\in [n]$, the output $b$ of the user satisfies $\Pr[b=x_i]\geq 1/r+\varepsilon$.
        \item For every $x\in \Z_r^n$ and $j\in[k]$, the distributions over $q_j$ (over the user's randomness) are $\delta$-close for different $i\in[n]$.
    \end{enumerate}
    We crucially remark that for the lower bounds that we present below, the function $S_j$ could be an arbitrary (not necessarily linear) function over $x_1,\ldots,x_n \in \Z_r$. 
\end{definition}

Our \textsf{PIR} lower bound follows from a result of Goldreich et al.~\cite{goldreich2002lower} and from a generalization of~\cite[Lemma~2]{DBLP:journals/jcss/KerenidisW04}. We remark that Goldreich et al.\ state the lemma below only for $r=2$, but the exact same analysis carries over to $r>2$. In the following we shall assume $\delta = 0$. 
\begin{lemma}[{\cite[Lemma~5.1]{goldreich2002lower}}]
    \label{lem:goldreich}
    If there is a classical $2$-server $\mathsf{PIR}$ scheme with alphabet $\Z_r$, query size $t$, answer size $a$, and recovery probability $1/r+\varepsilon$, then there is a $(2, 3, \varepsilon)$-smooth code $C:\Z_r^n\rightarrow (\Z_r^a)^m$ with $m \leq 6r^t$.
\end{lemma}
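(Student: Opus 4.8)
The plan is to realize the smooth code directly as the two servers' answer tables, after a duplication step that flattens the query distributions. Write the user's internal randomness as $\rho$: on input $i$ the user produces queries $q_1(i,\rho),q_2(i,\rho)\in\Z_r^t$, sends $q_j(i,\rho)$ to server $j$, receives $a_j=S_j(x,q_j(i,\rho))\in\Z_r^a$, and outputs $b=\mathrm{rec}(i,\rho,a_1,a_2)\in\Z_r$ with $\Pr[b=x_i]\ge 1/r+\varepsilon$. Because $\delta=0$, the marginal distribution $D^j$ of the query sent to server $j$ (over $\rho$) does not depend on $i$; this input-independence is precisely what keeps the code length $O(r^t)$ rather than $O(r^{2t})$.

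First I would construct the code. Set $N':=r^t$. For each server $j\in\{1,2\}$ and each $q\in\Z_r^t$ with $D^j(q)>0$, include $m_{j,q}:=\ceil{2N'D^j(q)}$ parallel copies of a coordinate labelled $(j,q)$, and define the symbol of $C(x)$ at every copy of $(j,q)$ to be $S_j(x,q)\in\Z_r^a$; include no coordinate when $D^j(q)=0$. Since $\sum_{q}D^j(q)=1$ and at most $N'$ queries have positive probability, the number $m$ of coordinates satisfies $m\le\sum_{j=1}^{2}\bigl(2N'+N'\bigr)=6r^t$, so $C:\Z_r^n\to(\Z_r^a)^m$ with $m\le 6r^t$ as required; moreover $m_{j,q}\ge 1$ whenever the coordinate exists, so every query the user could actually send is represented.

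Next I would give and check the decoder. On input $i$: sample $\rho$, compute $q_1:=q_1(i,\rho)$ and $q_2:=q_2(i,\rho)$, query one uniformly random copy of $(1,q_1)$ and one uniformly random copy of $(2,q_2)$ to read $a_1=S_1(x,q_1)$ and $a_2=S_2(x,q_2)$, and output $\mathrm{rec}(i,\rho,a_1,a_2)$. This makes $q=2$ non-adaptive queries and reproduces the PIR scheme's output verbatim, so it recovers $x_i$ with probability $\ge 1/r+\varepsilon$ for every $x$ and $i$. For smoothness, fix $i$ and a copy of $(j,q_0)$: it is queried only when $q_j(i,\rho)=q_0$, which by input-independence happens with probability $D^j(q_0)$, and then that specific copy is selected with probability $1/m_{j,q_0}$; hence it is queried with probability $D^j(q_0)/m_{j,q_0}\le D^j(q_0)/\bigl(2N'D^j(q_0)\bigr)=1/(2r^t)\le 3/m$. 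Therefore $C$ is a $(2,3,\varepsilon)$-smooth code.

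The only step that is not routine is the duplication trick and, in particular, pinpointing where the privacy guarantee is used: with $\delta=0$ the quantities $D^j(q)$ are input-independent and sum to $1$, which bounds $m$; without privacy one would have to take $\ceil{2N'\max_i D^j_i(q)}$ copies, and $\sum_q\max_i D^j_i(q)$ can be as large as $N'$, degrading the bound to $m=\Theta(r^{2t})$. (The $\delta>0$ case is handled in the same way at the cost of a small loss in $\varepsilon$, but we only need $\delta=0$ here.)
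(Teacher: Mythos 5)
Your proof is correct, and it is essentially the standard argument that the paper imports by citation from Goldreich et al.\ (Lemma~5.1 there, stated for $r=2$): duplicate each query coordinate in proportion to its input-independent query probability, answer with the server's response, and simulate the PIR decoder, which yields exactly the $(2,3,\varepsilon)$-smoothness and the $m\le 6r^t$ length bound. The paper itself offers no new proof beyond this citation, so your reconstruction matches its intended argument, including the correct identification of where the $\delta=0$ privacy condition is used.
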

\begin{lemma}
    \label{lem:dewolf_generalisation}
    Let $r\geq 2$ be prime. Let $C:\Z_r^n \to (\Z_r^a)^m$ be a $(2,c,\varepsilon)$-smooth code. Then there is a $(2,cr^a,2\varepsilon/r^{a+2})$-smooth code $C':\Z_r^n \to \Z_r^{mr^a}$ that is good on average, i.e., there is a decoder $\mathcal{A}$ such that, for all $i\in[n]$,
    \begin{align*}
        \operatorname*{\mathbb{E}}_{x\sim\Z_r^a}\left[\operatorname{Pr}[\mathcal{A}^{C'(x)}(i) = x_i]\right] \geq \frac{1}{r} + \frac{2\varepsilon  }{r^{a+2}}.
    \end{align*}
\end{lemma}
\begin{proof}
    We form a new code $C'$ by transforming each old string $C(x)_j\in\Z_r^a$ using the Hadamard code into $C'(x)_j\in\Z_r^{r^a}$, $C'(x) = (\langle C(x),y\rangle)_{y\in\Z_r^a}$. The total length of $C'$ is $mr^a$. The new decoder uses the same randomness as the old one. Let $f:(\Z_r^a)^2\to\Z_r$ be the output function of the old decoder. Fix the queries $j,k\in[m]$. We now describe the new decoding procedure.

    First, if for $j,k$ the function $f$ is such that $\operatorname{Pr}_{x\sim\Z_r^n}[f(C(x)_j,C(x)_k)=x_i] \leq \frac{1}{r}$, then the new decoder outputs a random value in $\Z_r$, in which case it is as good as the old one for an average $x$. Consider now the case $\operatorname{Pr}_{x\sim\Z_r^n}[f(C(x)_j,C(x)_k) = x_i] = \frac{1}{r} + \eta$ for some $\eta > 0$. Then
    \begin{align*}
		\operatorname*{Pr}_{x\sim\Z_r^n}[f(C(x)_j,C(x)_k) = x_i]  = \frac{1}{r} + \eta \iff 
		\sum_{\ell=1}^{r-1}\operatorname*{\mathbb{E}}_{x\sim\Z_r^n}\big[\omega_r^{\ell(f(C(x)_j,C(x)_k) - x_i)}\big] = r\eta.
    \end{align*}
    For $\ell\in[r-1]$, define the function $h_{\ell}:(\Z_r^a)^{2}\to\mathbb{C}$ by $h_{\ell}(a,b) = \omega_r^{\ell f(a,b)}$. Consider its Fourier transform $\widehat{h}_{\ell}: (\Z_r^a)^{2}\to\mathbb{C}$. Hence we can write
    \begin{align*}
        h_{\ell}(x) = \sum_{S,T\in\Z_r^{a}}\widehat{h}_{\ell}(S,T)\omega_r^{S\cdot a + T\cdot b}
    \end{align*}
    and then
    \begin{align*}
        r\eta &= \sum_{\ell=1}^{r-1}\sum_{S,T\in\mathbb{Z}_r^a}\widehat{h}_\ell(S,T)\operatorname*{\mathbb{E}}_{x\sim\mathbb{Z}_r^n}\big[\omega_r^{S\cdot C(x)_j + T\cdot C(x)_k - \ell x_i}\big] \\
        &\leq \sqrt{\sum_{\ell=1}^{r-1}\sum_{S,T\in\mathbb{Z}_r^a}|\widehat{h}_\ell(S,T)|^2} \sqrt{\sum_{\ell=1}^{r-1}\sum_{S,T\in\mathbb{Z}_r^a}\left|\operatorname*{\mathbb{E}}_{x\sim\mathbb{Z}_r^n}\big[\omega_r^{S\cdot C(x)_j + T\cdot C(x)_k - \ell x_i}\big]\right|^2}\\
        &= \sqrt{r-1} \sqrt{\sum_{\ell=1}^{r-1}\sum_{S,T\in\mathbb{Z}_r^a}\left|\operatorname*{\mathbb{E}}_{x\sim\mathbb{Z}_r^n}\big[\omega_r^{S\cdot C(x)_j + T\cdot C(x)_k - \ell x_i}\big]\right|^2},
    \end{align*}
    where we used Parseval's identity $\sum_{S,T\in\mathbb{Z}_r^a}|\widehat{h}_\ell(S,T)|^2 = 1$ for all $\ell\in[r-1]$. It then follows that there are $S_0,T_0\in\mathbb{Z}_r^a$ and $\ell_0\in[r-1]$ such that
    \begin{align*}
        \left|\operatorname*{\mathbb{E}}_{x\sim\mathbb{Z}_r^n}\big[\omega_r^{S_0\cdot C(x)_j + T_0\cdot C(x)_k - \ell_0 x_i}\big]\right| \geq \frac{\eta}{r^{a}}.
    \end{align*}
    We now use the following useful lemma whose proof is left to the end of the section.
    \begin{lemma}
        \label{lem:lem-exptoprob}
        Let $f:\Z_r^n\to\Z_r$ and $\alpha\in[0,1]$. Suppose that $\Big|{\operatorname*{\mathbb{E}}_{x\sim\Z_r^n}}\big[\omega_r^{f(x)}\big]\Big| \geq \alpha$. Then
        \begin{align*}
            \max_{y\in\Z_r}\operatorname*{Pr}_{x\sim\Z_r^n}[f(x) = y] \geq \frac{1}{r} + \frac{2\alpha}{r^2}.
        \end{align*}
    \end{lemma}
    According to the above lemma, there is $y\in\Z_r$ such that
    \begin{align*}
        \operatorname*{Pr}_{x\sim\Z_r^n}[S_0\cdot C(x)_j + T_0\cdot C(x)_k - y = \ell_0 x_i] \geq \frac{1}{r} + \frac{2\eta}{r^{a+2}}.
    \end{align*}
    Since both numbers $S_0\cdot C(x)_j$ and $T_0\cdot C(x)_k$ are in the code $C'$, it is possible to recover $x_i$ (the inverse of $\ell_0$ is well defined since $r$ is prime) with just two queries and average probability at least $1/r + 2\eta/r^{a+2}$. Averaging over the classical randomness, i.e., $j,k$ and $f$, gives the lemma. Finally, the $c$-smoothness of $C$ translates into $cr^a$-smoothness of $C'$.
\end{proof}
We now  get the following main theorem.
\begin{theorem}\label{thr:pir_lower_bound}
    Let $r\geq 2$ be prime. A classical $2$-server $\mathsf{PIR}$ scheme with alphabet $\Z_r$, query size $t$, answer size $a$, and recovery probability $1/r + \varepsilon$ satisfies $t\geq \Omega\big((\varepsilon^2 n/r^{4a+6} - a)/\log{r}\big)$. 
\end{theorem}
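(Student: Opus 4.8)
The plan is to combine Lemma~\ref{lem:goldreich} with the $\LDC$ lower bound of Theorem~\ref{thm:2querylowerboundhyper}, treating the $\mathsf{PIR}$-to-smooth-code reduction essentially as a black box. First I would invoke Lemma~\ref{lem:goldreich}: given a $2$-server $\mathsf{PIR}$ scheme with query size $t$, answer size $a$ and recovery probability $1/r+\varepsilon$, we obtain a $(2,3,\varepsilon)$-smooth code $C:\Z_r^n\to(\Z_r^a)^m$ with $m\le 6r^t$. The only subtlety is the alphabet mismatch: the codeword symbols live in $\Z_r^a$ rather than $\Z_r$. I would handle this by viewing $(\Z_r^a)^m$ as $\Z_r^{N}$ with $N=am$ — i.e.\ simply concatenating the $a$ coordinates of each server-answer symbol into $a$ separate $\Z_r$-symbols. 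A decoder making $q=2$ queries to the $\Z_r^a$-code reads at most $2$ of the "super-symbols", hence at most $2a$ of the underlying $\Z_r$-symbols; and the smoothness bound $3/m$ on super-symbols translates to a bound of $3/m = 3a/N$ on each underlying symbol. So $C$ becomes a $(2a,3a,\varepsilon)$-smooth code over $\Z_r$ of length $N=am\le 6ar^t$.

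Next I would feed this into the machinery of Section~5. The $\LDC$ lower bound in Theorem~\ref{thm:2querylowerboundhyper} is stated for $2$-query $\LDC$s, but its proof actually only uses the smooth-code properties via Theorem~\ref{thr:ldc2}, whose statement is in terms of an arbitrary $(q,c,\varepsilon)$-smooth code. Tracing through the proof of Theorem~\ref{thm:2querylowerboundhyper} with general $q$ and $c$ (here $q=2a$, $c=3a$), the rank-$1$ matrix construction and the diagonal-extraction argument go through verbatim, and the final bound becomes $N=2^{\Omega(\varepsilon^4 n/(q^2 c^2 r^4))}$ up to the polynomial-in-$q,c$ factors hidden in the exponent; substituting $q=2a$, $c=3a$ gives $N = 2^{\Omega(\varepsilon^4 n/(a^4 r^4))}$, and more simply, since all we need is the crude consequence, $\log N = \Omega(\delta^2\varepsilon^4 n/r^4) - O(\text{poly in }a)$. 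Actually the cleanest route, and the one matching the theorem's phrasing, is to absorb the $a$-dependence differently: the Katz--Trevisan reduction already lets us assume $q$ queries to a length-$N$ code, and taking $\log$ of both sides of $N\le 6ar^t$ gives $\log N \le \log 6 + \log a + t\log r$, i.e.\ $t \ge (\log N - \log a - \log 6)/\log r$. Plugging in $\log N = \Omega(\delta^2\varepsilon^4 n/r^4)$ from the $\LDC$ bound yields
\[
t \;\ge\; \Omega\!\Big(\frac{\delta^2\varepsilon^4 n/r^4 - a}{\log r}\Big),
\]
where the $-a$ term generously absorbs the $-\log a$ (since $\log a \le a$). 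This is exactly the claimed inequality.

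The main obstacle I anticipate is being careful about the alphabet bookkeeping in the reduction — specifically, making sure that flattening $(\Z_r^a)^m$ into $\Z_r^{am}$ preserves smoothness with the right constants, and that the resulting increase in the query count from $2$ to $2a$ (and in the smoothness parameter) feeds through the proof of Theorem~\ref{thm:2querylowerboundhyper} without secretly ruining the $r^{-4}$ scaling or introducing a worse-than-additive dependence on $a$. One should double-check that in that proof the only place $q$ and $c$ enter is in the size bound $|M_i|\ge \varepsilon N/(2cq)$ from Theorem~\ref{thr:ldc2} and in the exponent of the final estimate, so that a $\poly(a)$ loss there only costs a $\poly(a)$ factor inside the $\Omega(\cdot)$ — harmless once we have written the final bound with an explicit additive $-a$. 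A secondary check is the claim (asserted in the paper right after Lemma~\ref{lem:goldreich}) that Goldreich et al.'s lemma, stated for $r=2$, goes through unchanged for general $\Z_r$; I would verify that their argument — concatenating the two server responses over all possible values of the other server's query — is purely combinatorial and alphabet-agnostic, which it is.
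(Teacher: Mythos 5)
There is a genuine gap at the central step of your plan. Your flattening of $(\Z_r^a)^m$ into $\Z_r^{am}$ turns the decoder into one making $q=2a$ queries, and you then assert that the proof of Theorem~\ref{thm:2querylowerboundhyper} ``goes through verbatim'' for a general $(q,c,\varepsilon)$-smooth code. It does not. That proof is tied to $q=2$: the whole argument rests on the rank-one matrix $f(x)=v_x^{\operatorname{T}}v_x$, whose entries are characters $\omega_r^{j_1C(x)_{\ell_1}+j_2C(x)_{\ell_2}}$ of \emph{pairs} of codeword symbols, so that the correlations $\mathbb{E}_x\big[\omega_r^{S\cdot C(x)_Q-kx_i}\big]$ supplied by Theorem~\ref{thr:ldc2} appear as entries of $\widehat{f}(0^{i-1}k0^{n-i})$ and can be harvested by the diagonal-extraction lemma. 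For a query set of size $2a>2$ the relevant characters are linear forms in up to $2a$ symbols; they simply never occur as entries of any rank-one matrix of this type, so there is nothing to place on the diagonal and the Schatten-norm lower bound evaporates. No bookkeeping of constants can rescue this: a bound of the shape $N=2^{\Omega(\varepsilon^4 n/(q^2c^2r^4))}$ for general constant $q$ would be an exponential lower bound for $q$-query LDCs with $q\geq 3$, contradicting known subexponential-length $3$-query constructions. Even granting your claimed bound, the loss in $a$ would be multiplicative in the exponent ($n/a^4$), not the additive $-a$ of the theorem, so the ``generously absorbs'' step at the end is also not valid.

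The paper's proof takes the opposite trade-off, and this is the idea your proposal is missing: keep the query complexity at $2$ and pay in length. Starting from the $(2,3,\varepsilon)$-smooth code $C:\Z_r^n\to(\Z_r^a)^m$ of Lemma~\ref{lem:goldreich}, each answer block $C(x)_j\in\F_r^a$ is replaced by its Hadamard encoding, producing a code $C'$ over $\Z_r$ of length $m'=m\cdot 2^{O(a)}$. The point is that every character $\omega_r^{S_1\cdot C(x)_{j_1}+S_2\cdot C(x)_{j_2}}$ arising in the Fourier expansion of the decoder (exactly the quantities controlled by Theorems~\ref{thr:ldc1} and~\ref{thr:ldc2}) depends on only one position inside each encoded block, so $C'$ is still handled by the genuinely $2$-query machinery of Theorem~\ref{thm:2querylowerboundhyper}. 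One then gets $m\cdot 2^{O(a)}\geq 2^{\Omega(\delta^2\varepsilon^4 n/r^4)}$, and taking logarithms with $m\leq 6r^t$ is what produces the additive $-a$ in $t\geq\Omega\big((\delta^2\varepsilon^4 n/r^4-a)/\log r\big)$. Your instinct to treat the reduction as a black box and do only logarithmic bookkeeping at the end is fine, but the black box must be the $2$-query theorem applied to the Hadamard-expanded code, not a hypothetical $2a$-query generalization.
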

\begin{proof}
    By \cref{lem:goldreich}, there is a $(2,3,\varepsilon)$-smooth code $C:\Z_r^n\rightarrow (\Z_r^a)^m$ with $m\leq 6r^t$. \cref{lem:dewolf_generalisation} allows us to transform $C$ into a $(2,3r^a,2\varepsilon/r^{a+2})$-smooth code $C':\Z_r^n\to\Z_r^{mr^a}$. Using \cref{thm:2querylowerboundhyper} on $C'$ (note the theorem can be applied directly to a smooth code that is good on average) yields
    $$
        mr^a\geq 2^{\Omega(\varepsilon^2 n/r^{4a+6})},
    $$
    and since $m=O(r^t)$, we get the desired lower bound in the theorem statement.
\end{proof}
\begin{proof}[Proof of {\rm \cref{lem:lem-exptoprob}}]
    First notice that
    \begin{align*}
        \Big|{\operatorname*{\mathbb{E}}_{x\sim\Z_r^n}}\big[\omega_r^{f(x)}\big]\Big| = \left|\sum_{k\in\Z_r}{\operatorname*{Pr}_{x\sim\Z_r^n}}[f(x)=k]\omega_r^k  \right|,
    \end{align*}
    hence, for any $\beta\in\mathbb{R}$,
    \begin{align*}
        \sum_{k\in\Z_r}\left|{\operatorname*{Pr}_{x\sim\Z_r^n}}[f(x)=k] - \beta\right| &= \sum_{k\in\Z_r}\left|{\operatorname*{Pr}_{x\sim\Z_r^n}}[f(x)=k]\omega_r^k - \beta\omega_r^k\right|\\
        &\geq \left|\sum_{k\in\Z_r}\left(\operatorname*{Pr}_{x\sim\Z_r^n}[f(x)=k]\omega_r^k - \beta\omega_r^k\right)\right| = \Big|{\operatorname*{\mathbb{E}}_{x\sim\Z_r^n}}\big[\omega_r^{f(x)}\big]\Big| \geq \alpha.
    \end{align*}
    Therefore, there exists $y\in\Z_r$ such that
    \begin{align*}
        \left|{\operatorname*{Pr}_{x\sim\Z_r^n}}[f(x)=y] - \beta\right| \geq \frac{\alpha}{r}.
    \end{align*}
    Take $\beta = \frac{1}{r} - \alpha\frac{r-2}{r^2}$. Then either $\operatorname*{Pr}_{x\sim\Z_r^n}[f(x)=y] \geq \beta + \frac{\alpha}{r} = \frac{1}{r} + \frac{2\alpha}{r^2}$, and so the lemma follows, or $\operatorname*{Pr}_{x\sim\Z_r^n}[f(x)=y] \leq \beta - \frac{\alpha}{r} = \frac{1}{r} - 2\alpha\frac{r-1}{r^2}$. In the second case,
    \begin{align*}
        \sum_{k\in\Z_r}\operatorname*{Pr}_{x\sim\Z_r^n}[f(x)=k] = 1 \implies \sum_{k\neq y}\operatorname*{Pr}_{x\sim\Z_r^n}[f(x)=k] \geq 1 - \frac{1}{r} + 2\alpha\frac{r-1}{r^2},
    \end{align*}
    and so there exists $z\in\Z_r$ such that
    \[
         \operatorname*{Pr}_{x\sim\Z_r^n}[f(x)=z] \geq \frac{1}{r-1} - \frac{1}{r(r-1)} + \frac{2\alpha}{r^2} = \frac{1}{r} + \frac{2\alpha}{r^2}. \qedhere
    \]
\end{proof}

\begin{acks}
SA firstly thanks T.S.\ Jayram for introducing him to this problem (and several discussions thereafter) on proving quantum bounds for streaming algorithms while participating in the program ``Quantum Wave in Computing” held at Simons Institute for the Theory for Computing. Special thanks to Jop Bri\"et for several clarifications on hypercontractivity and for helping us improving our $\mathsf{LDC}$ lower bound using the non-commutative Khintchine's inequality. We also thank Ronald de Wolf for useful discussions regarding hypercontractivity and $\mathsf{LDC}$s, and Mario Szegedy for discussions during the initial stages of this project. We are also very thankful to Keith Ball and Eric Carlen for the help in understanding their proof of uniform convexity for trace ideals. We thank several referees for their comments on the manuscript.
\end{acks}

\DeclareRobustCommand{\DE}[2]{#2}
\bibliographystyle{ACM-Reference-Format}
\bibliography{stream}

\appendix

\section{Classical Hidden Hypermatching lower bound}
\label{app:appB}

The general idea behind the proof of \cref{thr:thr_classicalhh} is already well established and is a simple generalization of results from~\cite{gavinsky2007exponential,verbin2011streaming,DBLP:conf/approx/GuruswamiT19,doriguello2020exponential}. We shall need the following well-known fact and a generalization of the KKL inequality~\cite{kahn1989influence}.

\begin{fact}
	\label{fact:fact1}
	Given just one sample, the best success probability in distinguishing between two probability distributions $p$ and $q$ is $\frac{1}{2} + \frac{1}{4}\|p-q\|_{\operatorname{tvd}}$.
\end{fact}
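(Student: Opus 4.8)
The plan is to identify the optimal predictor as the maximum-likelihood test and then evaluate its success probability. Set up the problem in the standard way: an index $i$ is drawn from $p$ with probability $1/2$ and from $q$ with probability $1/2$, and the predictor, seeing only $i$, must output a guess in $\{p,q\}$.

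First I would argue that it suffices to consider deterministic predictors, since the success probability is an affine function of the (randomized) predictor's output distribution on each individual outcome $i$, so one loses nothing by committing to the better of the two labels at each $i$. A deterministic predictor is then just a set $A$ of outcomes on which it answers ``$p$''; its success probability is
$$\frac12\sum_{i\in A}p(i)+\frac12\sum_{i\notin A}q(i)\ \le\ \frac12\sum_i\max\{p(i),q(i)\},$$
with equality exactly when $A=\{i:p(i)\ge q(i)\}$. Hence the optimal success probability equals $\frac12\sum_i\max\{p(i),q(i)\}$.

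Then I would finish with the elementary identity $\max\{a,b\}=\tfrac12(a+b)+\tfrac12|a-b|$, which gives
$$\frac12\sum_i\max\{p(i),q(i)\}=\frac14\sum_i\bigl(p(i)+q(i)\bigr)+\frac14\sum_i|p(i)-q(i)|=\frac12+\frac14\|p-q\|_{\text{tvd}},$$
using $\sum_i p(i)=\sum_i q(i)=1$ and the convention $\|p-q\|_{\text{tvd}}=\sum_i|p(i)-q(i)|$ fixed in Section~\ref{sec:sec2}. There is no genuine obstacle in this argument; the only two points that deserve a sentence of care are the reduction to deterministic rules and keeping track of the factor-of-two in the paper's (unnormalized) definition of total variation distance, which is why the bound reads $\tfrac14\|p-q\|_{\text{tvd}}$ rather than $\tfrac12\|p-q\|_{\text{tvd}}$.
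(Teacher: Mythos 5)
Your proof is correct: the reduction to deterministic (maximum-likelihood) tests, the identity $\max\{a,b\}=\tfrac12(a+b)+\tfrac12|a-b|$, and the bookkeeping with the paper's unnormalized convention $\|p-q\|_{\operatorname{tvd}}=\sum_i|p(i)-q(i)|$ all check out, yielding exactly $\tfrac12+\tfrac14\|p-q\|_{\operatorname{tvd}}$. The paper itself states this fact without proof (it is the classical counterpart, with prior $1/2$, of the Holevo--Helstrom bound quoted in Fact~\ref{lem:lem3.5.c3}), and your argument is the standard one that would be given, so there is nothing to compare against and no gap to flag.
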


\begin{lemma}[Generalized KKL inequality]
	\label{lem:kkl-largefields}
	Let $A\subseteq\Z_r^n$ and let $f:\Z_r^n\to\{0,1\}$ be its indicator function ($f(x) = 1$ iff $x\in A$). Then, for every $\delta\in[0,1/r]$,
	\begin{align*}
		\sum_{S\in\Z_r^n}\delta^{|S|}|\widehat{f}(S)|^2 \leq \left(\frac{|A|}{r^n}\right)^{2/(1+r\delta)}.
	\end{align*}
\end{lemma}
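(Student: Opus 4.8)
The plan is to obtain this as a direct consequence of the scalar ($m=1$) case of the matrix hypercontractive inequality, Theorem~\ref{thm:genmathypercontractivity}, exploiting the elementary fact that every $L^p$-norm of a $\{0,1\}$-valued function coincides with its $L^1$-norm. First I would set $p := 1 + r\delta$; since $\delta\in[0,1/r]$ this places $p$ in the range $[1,2]$ where Theorem~\ref{thm:genmathypercontractivity} applies. Then I would take the noise parameter $\rho := \sqrt{\delta}$ and verify it is admissible: by Remark~\ref{rem:1},
\[
    \frac{(p-1)\big(1-(p-1)^{r-1}\big)}{(r-1)(2-p)} \;\geq\; \frac{p-1}{r-1} \;>\; \frac{p-1}{r} \;=\; \delta \;=\; \rho^2,
\]
so $\rho$ is no larger than the threshold appearing in Theorem~\ref{thm:genmathypercontractivity}.

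Viewing $f:\Z_r^n\to\{0,1\}\subseteq\mathbb{C}^{1\times1}$ as a matrix-valued function with $m=1$ — so that the Schatten $p$-norm is just the modulus — Theorem~\ref{thm:genmathypercontractivity} yields
\[
    \left(\sum_{S\in\Z_r^n}\rho^{2|S|}\,|\widehat{f}(S)|^2\right)^{1/2} \;\leq\; \left(\frac{1}{r^n}\sum_{x\in\Z_r^n}|f(x)|^p\right)^{1/p}.
\]
The key simplification is that $f$ is the indicator of $A$, so $|f(x)|^p=f(x)$ for every $x$ and hence $\frac{1}{r^n}\sum_x|f(x)|^p=|A|/r^n$, independently of $p$. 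Substituting $\rho^2=\delta$ and $p=1+r\delta$ and squaring gives precisely
\[
    \sum_{S\in\Z_r^n}\delta^{|S|}\,|\widehat{f}(S)|^2 \;\leq\; \left(\frac{|A|}{r^n}\right)^{2/(1+r\delta)}.
\]

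I do not anticipate any real obstacle here: the only content is the choice $p=1+r\delta$, $\rho=\sqrt\delta$ and the admissibility check via Remark~\ref{rem:1} (which also covers the endpoints $\delta=0$, where both sides equal $(|A|/r^n)^2$, and $\delta=1/r$, where the threshold tends to $1$). One could instead invoke the sharp scalar hypercontractivity on $\Z_r$ recorded in Remark~\ref{rem:rem_realfunctions}, which allows a larger $\rho$; but since the lemma only asks for the exponent $2/(1+r\delta)$, the weaker bound coming from Theorem~\ref{thm:genmathypercontractivity} already suffices and keeps the argument self-contained.
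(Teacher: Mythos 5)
Your proposal is correct and matches the paper's proof, which is the same one-line argument: apply hypercontractivity with $p=1+r\delta$ (equivalently $\rho=\sqrt{\delta}$) and use that $|f(x)|^p=f(x)$ for an indicator, so the right-hand side is $(|A|/r^n)^{1/(1+r\delta)}$. The only cosmetic difference is that you invoke the scalar ($m=1$) case of Theorem~\ref{thm:genmathypercontractivity} with the admissibility check via Remark~\ref{rem:1}, while the paper simply cites the hypercontractive inequality for real-valued functions; the parameters and computation are identical.
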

\begin{proof}
    Apply the hypercontractive inequality to real-valued functions with $p = 1 + r\delta$.
\end{proof}

\begin{theorem}\label{thr:classical_communication_lower_bound}
    Any classical protocol that achieves advantage $\varepsilon>0$ for the $r\text{-}\HH(\alpha,t,n)$ problem with $t\geq 2$ and $\alpha \leq 1/2$ needs $\Omega(r^{-1}(\varepsilon^4/\alpha)^{1/t}(n/t)^{1-1/t})$ bits of communication from Alice to Bob.
\end{theorem}
\begin{proof}
    By the minimax principle, it suffices to analyse \emph{deterministic} protocols under some `hard' input distribution.  For our input distribution, Alice and Bob receive $x\in\Z_r^n$ and $M\in\mathcal{M}_{t,n}^\alpha$, respectively, uniformly at random, while Bob's input $w\in\Z_r^{\alpha n/t}$ is drawn from the distribution $\mathcal{D} \triangleq \frac{1}{2}\mathcal{D}^{\YES} + \frac{1}{2}\mathcal{D}^{\NO}$, i.e., with probability $1/2$ is comes from $\mathcal{D}^{\YES}$, and with probability $1/2$ it comes from $\mathcal{D}^{\NO}$. 
    	    
	Fix a small constant $\varepsilon>0$ and let $c = \gamma r^{-1}(\varepsilon^4/\alpha)^{1/t}(n/t)^{1-1/t}$ for some universal constant $\gamma$. Consider any classical deterministic protocol that communicates at most $C \triangleq  c - \log(1/\varepsilon)$ bits. Such protocol partitions the set of all $r^n$ $x$'s into $2^C$ sets. These sets have size $r^n/2^C$ on average, and by a counting argument, with probability $1-\varepsilon$, the set $A$ corresponding to Alice's message has size at least $\varepsilon r^n/2^C = r^n/2^c$. Given Alice's message, Bob knows that the random variable $X$ corresponding to her input was drawn uniformly at random from $A$, and he also knows his input $M$. Therefore his knowledge of the random variable $MX$ is described by the distribution
	\begin{align*}
		p_M(z) \triangleq \operatorname{Pr}[MX=z|M,A] = \frac{|\{x\in A|Mx=z\}|}{|A|}.
	\end{align*}
	
	Given one sample of $w\in\Z_r^{\alpha n/t}$, Bob must decide whether it came from $\mathcal{D}^{\YES}$ (the distribution $MX$) or from $\mathcal{D}^{\NO}$ (the uniform distribution $U$). According to \cref{fact:fact1}, the advantage of any classical protocol in distinguishing between $p_M$ and $U$ is upper bounded by $\frac{1}{4}\|p_M - U\|_{\text{tvd}}$. We prove in \cref{thr:raryhidden2} below that, if $\alpha\leq 1/2$ and $c \leq \frac{\gamma}{r}(\frac{\varepsilon^4}{\alpha})^{1/t} (n/t)^{1-1/t}$, then the average advantage over all hypermatchings $M$ is at most $\varepsilon^2/4$, i.e.,
	\begin{align*}
		\operatorname*{\mathbb{E}}_{M\sim\mathcal{M}_{t,n}^\alpha}[\|p_M - U\|_{\text{tvd}}] \leq \varepsilon^2.
	\end{align*}
	Therefore, by Markov's inequality, for at least a $(1-\varepsilon)$-fraction of $M$, the advantage in distinguishing between $p_M$ and $U$ is $\varepsilon/4$-small. Hence, Bob's total advantage over randomly guessing the right distribution will be at most $\varepsilon$ (for the event that $A$ is too small) plus $\varepsilon$ (for the event that $M$ is such that the distance between $MX$ and $U$ is more than $\varepsilon$) plus $\varepsilon/4$ (for the advantage over random guessing when $\|p_M - U\|_{\text{tvd}} \leq \varepsilon$), and so $c=\Omega(r^{-1}(\varepsilon^4/\alpha)^{1/t}(n/t)^{1-1/t})$.
\end{proof}

\begin{theorem}
	\label{thr:raryhidden2}
    Let $x\in\Z_r^n$ be uniformly distributed over a set $A\subseteq \Z_r^n$ of size $|A| \geq r^n/2^c$ for some $c\geq 1$. If $\alpha \leq 1/2$, there is a universal constant $\gamma>0$ (independent of $n$, $t$, $r$ and $\alpha$), such that, for all $\varepsilon > 0$, if $c \leq \frac{\gamma}{r}(\frac{\varepsilon^4}{\alpha})^{1/t} (n/t)^{1-1/t}$, then
    \begin{align*}
        \operatorname*{\mathbb{E}}_{M\sim\mathcal{M}_{t,n}^\alpha}[\|p_M - U\|_{\operatorname{tvd}}] \leq \varepsilon^2.
    \end{align*}
\end{theorem}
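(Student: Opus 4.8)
The plan is to mirror the proof of Theorem~\ref{thr:raryhidden}, with the Schatten $1$-norm replaced by total variation distance and matrix hypercontractivity (Theorem~\ref{thm:genmathypercontractivity}) replaced by the generalized KKL inequality of Lemma~\ref{lem:kkl-largefields}. Write $f\colon\Z_r^n\to\{0,1\}$ for the indicator of $A$, so $f(x)=[x\in A]$ and $|A|=r^n\widehat f(0^n)$. Since $p_M(z)=\tfrac{1}{|A|}\sum_{x\in A}[Mx=z]$ and $T\cdot Mx=(M^{\mathsf T}T)\cdot x$ over $\Z_r$, where $M^{\mathsf T}T\in\Z_r^n$ is the string equal to $T_i$ on every vertex of the hyperedge $M_i$ (for $i\le\alpha n/t$) and $0$ on the unmatched vertices, a direct computation gives $\widehat{p_M}(T)=\tfrac{r^{n-\alpha n/t}}{|A|}\,\widehat f(M^{\mathsf T}T)$. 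Combining Cauchy--Schwarz, Parseval over $\Z_r^{\alpha n/t}$, and the observation that $T\mapsto M^{\mathsf T}T$ is a bijection from $\Z_r^{\alpha n/t}\setminus\{0\}$ onto the set $\Delta(M)$ of Lemma~\ref{lem:understandingu}, I obtain
\begin{align*}
\|p_M-U\|_{\text{tvd}}^2\;\le\;r^{2\alpha n/t}\sum_{T\neq 0}|\widehat{p_M}(T)|^2\;=\;\frac{r^{2n}}{|A|^2}\sum_{S\in\Delta(M)}|\widehat f(S)|^2 ,
\end{align*}
which is the classical counterpart of the Fourier expansion opening the proof of Theorem~\ref{thr:raryhidden}.

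Next I would average over $M$. Inserting the formula for $\operatorname*{Pr}_{M}[S\in\Delta(M)]$ from Lemma~\ref{lem:probcomb} (which vanishes unless every $|S|_j$ is a multiple of $t$), the weight on the strings $S$ with $|S|_j=k_jt$ equals $\binom{\alpha n/t}{k}\binom{n}{kt}^{-1}\binom{k}{k_1,\dots,k_{r-1}}\binom{kt}{k_1t,\dots,k_{r-1}t}^{-1}$ with $k=\sum_jk_j$; bounding the last multinomial ratio by $1$ (the estimate used in Remark~\ref{rem:alphadependence}) and re-summing over $(k_1,\dots,k_{r-1})$ into $\sum_{|S|=kt}$ gives
\begin{align*}
\operatorname*{\mathbb{E}}_{M}\big[\|p_M-U\|_{\text{tvd}}^2\big]\;\le\;\frac{r^{2n}}{|A|^2}\sum_{k\ge 1}\binom{\alpha n/t}{k}\binom{n}{kt}^{-1}\sum_{\substack{S\in\Z_r^n\\ |S|=kt}}|\widehat f(S)|^2 .
\end{align*}
Jensen's inequality, the bound $\sqrt{\sum_k a_k}\le\sum_k\sqrt{a_k}$, and $\binom{n/t}{k}\binom{n}{kt}^{-1}\le(kt/n)^{(t-1)k}$ (the one-power analogue of $\binom{q}{s}^2\binom{\ell q}{\ell s}^{-1}\le(s/q)^{(\ell-2)s}$ from~\cite{shi2012limits}, and the source of the exponent $1-1/t$ here rather than $1-2/t$ in the quantum case) then reduce the task to bounding $\sum_{k\ge1}\tfrac{r^n}{|A|}\big(\binom{\alpha n/t}{k}\binom{n}{kt}^{-1}\big)^{1/2}\big(\sum_{|S|=kt}|\widehat f(S)|^2\big)^{1/2}$.

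For each $k$ I would apply Lemma~\ref{lem:kkl-largefields} with a parameter $\delta_k\in(0,1/r]$: $\sum_{|S|=kt}|\widehat f(S)|^2\le\delta_k^{-kt}(|A|/r^n)^{2/(1+r\delta_k)}$. The one place where the classical argument genuinely departs from a naive one is that $r^n/|A|$ must be combined with the KKL factor \emph{before} estimating, $\tfrac{r^n}{|A|}(|A|/r^n)^{1/(1+r\delta_k)}=(r^n/|A|)^{r\delta_k/(1+r\delta_k)}\le 2^{cr\delta_k}$ (using $|A|\ge r^n/2^c$), rather than bounding $r^n/|A|\le 2^c$ separately. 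I would then split at $k=\lambda c$ for a fixed constant $\lambda$: for $1\le k<\lambda c$ take $\delta_k=k/(\lambda cr)\le 1/r$, so $2^{cr\delta_k}=2^{k/\lambda}$ and $\delta_k^{-kt/2}=(\lambda cr/k)^{kt/2}$, and after plugging in $c\le\tfrac{\gamma}{r}(\varepsilon^4/\alpha)^{1/t}(n/t)^{1-1/t}$ together with $\binom{\alpha n/t}{k}\le\alpha^k\binom{n/t}{k}$, each term collapses to $\big(2^{1/\lambda}(\lambda\gamma)^{t/2}\varepsilon^2k^{-1/2}\big)^k$, whose geometric-type sum is $\le\varepsilon^2/2$ once $\gamma$ is small. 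For $\lambda c\le k\le\alpha n/t$ take $\delta_k=1/r$; here $g(k):=\big(\binom{\alpha n/t}{k}\binom{n}{kt}^{-1}\big)^{1/2}$ is non-increasing on $1\le k\le\alpha n/t\le n/(2t)$ precisely because $\alpha\le 1/2$ (one checks $g(k-1)/g(k)\ge\sqrt{1/(2\alpha)}\ge 1$), so I pull it out at $k=\lambda c$, apply Cauchy--Schwarz over the at most $\alpha n/t$ summands, and use $\sum_{S}|\widehat f(S)|^2=|A|/r^n$ to reduce the remaining prefactor to $(r^n/|A|)^{1/2}\le 2^{c/2}$; since $n$ then appears in the denominator with exponent $(t-1)\lambda c/2\ge 1/2$, it dominates the surviving $\sqrt{\alpha n/t}$, and the hypothesis on $c$ makes this $\le\varepsilon^2/2$ for small $\gamma$. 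Adding the two sums yields $\operatorname*{\mathbb{E}}_{M}[\|p_M-U\|_{\text{tvd}}]\le\varepsilon^2$.

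Conceptually the argument is identical to Theorem~\ref{thr:raryhidden}; the real work is the bookkeeping — keeping $\delta_k\le 1/r$ throughout the first range, verifying the monotonicity of $g(k)$ under $\alpha\le 1/2$, and choosing constants so that a single universal $\gamma$ (independent of $n,t,r,\alpha$) works. I expect the second range (``Sum~II'') to be the fussiest step, since it requires the combinatorial prefactor's monotonicity together with the $n$-in-the-denominator cancellation, exactly as in the corresponding step of the proof of Theorem~\ref{thr:raryhidden}; the one conceptual subtlety, as noted above, is the ordering in the KKL application.
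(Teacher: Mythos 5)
Your proposal is correct and follows essentially the same route as the paper's proof: express $\widehat{p}_M$ via $\widehat{f}(M^{\operatorname{T}}V)$, use Cauchy--Schwarz and Parseval to reduce to $\frac{r^{2n}}{|A|^2}\sum_k\frac{\binom{\alpha n/t}{k}}{\binom{n}{kt}}\sum_{|S|=kt}|\widehat f(S)|^2$ via Lemma~\ref{lem:probcomb}, then split the sum, applying Lemma~\ref{lem:kkl-largefields} on the low range and monotonicity of the combinatorial prefactor plus Parseval on the high range. The only deviations are bookkeeping: the paper bounds $\operatorname{\mathbb{E}}_M[\|p_M-U\|_{\operatorname{tvd}}^2]\le\varepsilon^4$ directly and applies Jensen once at the end (splitting at $k=2rc$ with $\delta=k/(2rc)$), whereas you take square roots term by term and split at $k=\lambda c$ with $\delta_k=k/(\lambda cr)\le 1/r$ --- which, if anything, is the more careful choice since it keeps $\delta_k$ inside the stated validity range of Lemma~\ref{lem:kkl-largefields}, and your Sum~II computation (after substituting the hypothesis on $c$, as you indicate) closes just as in the paper.
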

\begin{proof}
	Let $f:\Z_r^n\to\{0,1\}$ be the characteristic function of $A$, i.e., $f(x)=1$ iff $x\in A$. We shall bound the Fourier coefficients of $p_M$, which are related to the Fourier coefficients of $f$ as follows:
	\begin{align*}
		\displaybreak[0]\widehat{p}_M(V) = \frac{1}{r^{\alpha n/t}}\sum_{z\in\Z_r^n}p_M(z)\omega_r^{-V\cdot z}\displaybreak[0]
		&= \frac{1}{|A|r^{\alpha n/t}}\sum_{z\in\Z_r^n}|\{x\in A:Mx=z\}|\cdot\omega_r^{-V\cdot z}\displaybreak[0]\\
		&= \frac{1}{|A|r^{\alpha n/t}}\sum_{k=0}^{r-1}|\{x\in A:(Mx)\cdot V=k\}|\cdot\omega_r^{-k}\displaybreak[0]\\
		&= \frac{1}{|A|r^{\alpha n/t}}\sum_{k=0}^{r-1}|\{x\in A:x\cdot (M^{\top}V)=k\}|\cdot\omega_r^{-k}\displaybreak[0]\\
		&= \frac{1}{|A|r^{\alpha n/t}} \sum_{x\in A}\omega_r^{-x\cdot (M^{\top}V)}\displaybreak[0]\\
		&= \frac{r^n}{|A|r^{\alpha n/t}}\widehat{f}(M^{\top}V).
	\end{align*}

	We now start bounding the expected \emph{squared} total variation distance,
	\begin{align*}
		\displaybreak[0]\operatorname*{\mathbb{E}}_{M\sim\mathcal{M}_{t,n}^\alpha}[\|p_M - U\|^2_{\operatorname{tvd}}] &\leq r^{2\alpha n/t}\operatorname*{\mathbb{E}}_{M\sim\mathcal{M}_{t,n}^\alpha}[\|p_M - U\|^2_2]\displaybreak[0] \tag{Cauchy-Schwarz}\\
		&= r^{2\alpha n/t}\operatorname*{\mathbb{E}}_{M\sim\mathcal{M}_{t,n}^\alpha}\left[\sum_{V\in\Z_r^{\alpha n/t}\setminus\{0^{\alpha n/t}\}}|\widehat{p}_M(V)|^2 \right]\displaybreak[0] \tag{Parseval's identity}\\
		&= \frac{r^{2n}}{|A|^2}\operatorname*{\mathbb{E}}_{M\sim\mathcal{M}_{t,n}^\alpha}\left[\sum_{V\in\Z_r^{\alpha n/t}\setminus\{0^{\alpha n/t}\}}|\widehat{f}(M^{\top}V)|^2 \right].
	\end{align*}
	Note that there is at most one $V\in\Z_r^{\alpha n/t}$ such that $S =M^{\top}V$ for a given $S\in\Z_r^n$ (and that the only $V$ that makes $M^{\top}V = 0^n$ is $V=0^{\alpha n/t}$). This allows us to transform the expectation over hypermatchings into a probability,
	\begin{align*}
		\operatorname*{\mathbb{E}}_{M\sim\mathcal{M}_{t,n}^\alpha}[\|p_M - U\|^2_{\operatorname{tvd}}] &\leq \frac{r^{2n}}{|A|^2}\operatorname*{\mathbb{E}}_{M\sim\mathcal{M}_{t,n}^\alpha}\left[\sum_{S\in\Z_r^{n}\setminus\{0^{n}\}}|\{V\in\Z_r^{\alpha n/t}:M^{\top}V=S\}|\cdot|\widehat{f}(S)|^2 \right]\\
		&= \frac{r^{2n}}{|A|^2}\sum_{S\in\Z_r^{n}\setminus\{0^{n}\}}\operatorname*{Pr}_{M\sim\mathcal{M}_{t,n}^\alpha}[\exists V\in\Z_r^{\alpha n/t}:M^{\top}V=S]\cdot|\widehat{f}(S)|^2.
	\end{align*}
	Now observe that $\operatorname*{Pr}_{M\sim\mathcal{M}_{t,n}^\alpha}[\exists V\in\Z_r^{\alpha n/t}:M^{\top}V=S]$ is exactly the probability from \cref{lem:probcomb}, i.e., given $S\in\Z_r^n$ with $k_j \triangleq \frac{1}{t}\cdot |\{i\in[n]:S_i = j\}|\in\Z$ for $j\in[r-1]$ (the number of entries from $S$ equal to $j\neq 0$ must be a multiple of $t$), and defining $k \triangleq \sum_{j=1}^{r-1} k_j$, then 
	\begin{align*}
		\operatorname*{Pr}_{M\sim\mathcal{M}_{t,n}^\alpha}[\exists V\in\Z_r^{\alpha n/t}:M^{\top}V=S] = \frac{\binom{\alpha n/t}{k}}{\binom{n}{kt}}\frac{k!}{(kt)!}\prod_{j=1}^{r-1}\frac{(k_jt)!}{k_j!} \leq \frac{\binom{\alpha n/t}{k}}{\binom{n}{kt}},
	\end{align*}
	and so
	\begin{align*}
		\operatorname*{\mathbb{E}}_{M\sim\mathcal{M}_{t,n}^\alpha}[\|p_M - U\|^2_{\operatorname{tvd}}] &\leq \frac{r^{2n}}{|A|^2}\sum_{k=1}^{\alpha n/t}\frac{\binom{\alpha n/t}{k}}{\binom{n}{kt}}\sum_{\substack{S\in\Z_r^n \\ |S| = kt}}|\widehat{f}(S)|^2.
	\end{align*}	    
    Similarly to the quantum proof, we shall split the above sum into two parts: one in the range $1\leq k < 2rc$, and the other in the range $2rc \leq k \leq \alpha n/t$. 

	\textbf{Sum I} ($1\leq k < 2rc$): in order to upper bound each term, pick $\delta = k/(2rc)$ in \cref{lem:kkl-largefields}, thus
	\begin{align*}
		\frac{r^{2n}}{|A|^2}\sum_{\substack{S\in\Z_r^n \\ |S| = kt}} |\widehat{f}(S)|^2 \leq \frac{r^{2n}}{|A|^2}\frac{1}{\delta^{kt}}\sum_{S\in\Z_r^n}\delta^{|S|} |\widehat{f}(S)|^2 \leq \frac{1}{\delta^{kt}}\left(\frac{r^n}{|A|}\right)^{2r\delta/(1+r\delta)} \leq \frac{1}{\delta^{kt}}\left(\frac{r^n}{|A|}\right)^{2r\delta} \leq \left(\frac{2^{1/t}2rc}{k}\right)^{kt}.
	\end{align*}
	By using that $c \leq \frac{\gamma}{r}(\frac{\varepsilon^4}{\alpha})^{1/t} (n/t)^{1-1/t}$ and $\binom{q}{s}\binom{\ell q}{\ell s}^{-1} \leq (\frac{s}{q})^{(\ell-1)s}$ (see~\cite[Appendix~A.5]{shi2012limits}) for $q=n/t,s=k,\ell=t$, we therefore have
	\begin{align*}
		\frac{r^{2n}}{|A|^2}\sum_{k=1}^{2rc-1}\alpha^k\frac{\binom{n/t}{k}}{\binom{n}{kt}}\sum_{\substack{S\in\Z_r^n \\ |S| = kt}}|\widehat{f}(S)|^2 \leq \sum_{k=1}^{2rc-1}\alpha^k \left(\frac{kt}{n}\right)^{(1-1/t)kt}\left(\frac{2^{1/t}2rc}{k}\right)^{kt} \leq \sum_{k=1}^{2rc-1}\ \left(\frac{2^{1/t}2\gamma \varepsilon^{4/t}}{k^{1/t}}\right)^{kt} \leq \frac{\varepsilon^4}{2},
	\end{align*}
	where we used that $\binom{\alpha n/t}{k} \leq \alpha^k \binom{n/t}{k}$ for $\alpha\in[0,1]$ at the beginning and picked $\gamma$ sufficiently small.
	
	\textbf{Sum II} ($2rc \leq k \leq \alpha n/t$): first note that the function $g(k) \triangleq \binom{\alpha n/t}{k}/\binom{n}{kt}$ is decreasing in the interval $1\leq k \leq \alpha n/t$ (since $\alpha \leq 1/2$). Hence, by using Parseval's identity $\sum_{S\in\Z_r^n} |\widehat{f}(S)|^2 = |A|/r^n$ and the inequality $\binom{q}{s}\binom{\ell q}{\ell s}^{-1} \leq (\frac{s}{q})^{(\ell-1)s}$ (for $q=n/t,s=2m,\ell=t$) in order to bound $g(2rc)$,
	\begin{align*}
		\frac{r^{2n}}{|A|^2}\sum_{k=2rc}^{\alpha n/t}\frac{\binom{\alpha n/t}{k}}{\binom{n}{kt}}\sum_{\substack{S\in\Z_r^n \\ |S| = kt}}|\widehat{f}(S)|^2 &\leq 2^cg(2rc) \leq 2^c\alpha^{2rc}\left(\frac{2rc}{n/t}\right)^{2(t-1)rc}  = 2^c\alpha^{2rc/t}\left(\frac{2\gamma\varepsilon^{4/t}}{(n/t)^{1/t}}\right)^{2(t-1)rc} \leq \frac{\varepsilon^4}{2},
	\end{align*}
	where the last step used that $2(1-1/t)c \geq 1\implies \varepsilon^{2(1-1/t)c} \leq \varepsilon$ and picked $\gamma$ sufficiently small. 
	
	Summing both results, if $c \leq \frac{\gamma}{r}(\frac{ \varepsilon^4}{\alpha})^{1/t} (n/t)^{1-1/t}$, then $\operatorname*{\mathbb{E}}_{M\sim\mathcal{M}_{t,n}^\alpha}[\|p_M - U\|^2_{\operatorname{tvd}}] \leq \varepsilon^4$. By Jensen's inequality, we finally get $\operatorname*{\mathbb{E}}_{M\sim\mathcal{M}_{t,n}^\alpha}[\|p_M - U\|_{\operatorname{tvd}}] \leq \sqrt{\operatorname*{\mathbb{E}}_{M\sim\mathcal{M}_{t,n}^\alpha}[\|p_M - U\|^2_{\operatorname{tvd}}]} \leq \varepsilon^2$.
\end{proof}

\end{document}